\newcommand{\mynote}[1]{}
\definecolor{gt}{RGB}{0,0,80}
\definecolor{appunti}{RGB}{0,255,255}
\newcommand\blue{\color{black}}
\newcommand\black{\color{black}}
\title{Linear models for thin plates of polymer gels.}
\author{R. Paroni\footnote{DADU, University of Sassari, Palazzo del Pou Salit, 07041 Alghero (SS), Italy. Email:\texttt{paroni@uniss.it}} \ and G. Tomassetti\footnote{DICII Department, University of Rome ``Tor Vergata'', Via Politecnico 1, 00133 Rome, Italy. Email:\texttt{tomassetti@ing.uniroma2.it}}}
\begin{document}
\begin{sloppypar}

\maketitle

\begin{abstract}
Within the linearized three-dimensional theory of polymer gels, we consider a sequence of problems formulated on a family of cylindrical domains whose height tends to zero. We assume that the fluid pressure is controlled at the top and bottom faces of the cylinder, and we consider two different scaling regimes for the diffusivity tensor. Through asymptotic-analysis techniques we obtain two plate models where the transverse displacement is governed by a plate equation with an extra contribution from the fluid pressure.
In the limit obtained within the first scaling regime the fluid pressure is affine across the thickness and hence it is determined by its instantaneous trace on the top and bottom faces. In the second model, instead, the value of the fluid pressure 
is governed by a three-dimensional diffusion equation.
\end{abstract}

\textbf{MSC2000:} 74F10, 74K20, 76M45.\medskip

\textbf{Keywords:} Dimension reduction, polymer gels, poroelasticity.

\section{Introduction}
A polymer gel is a network of cross-linked polymeric chains permeated by a diffusing fluid. Temporal changes of fluid content bring about local swelling or de-swelling of the gel \cite{Doi2009JPSJa}, which in turn may determine substantial changes of shape. Two key properties affecting the amount of swelling are the compliance of the polymeric network and the affinity between the polymer and the fluid. Since these properties can be finely tuned during gel synthesis, understanding how they affect shape changes is crucial to controlling and harnessing shape changes.

Swelling--induced shape changes can be particularly dramatic in thin bodies, such as plates or rods \cite{GorieB2005PRL}. For example, the experiments reported in \cite{MoraB2006TEPJE} feature elastic instabilities during swelling of composite polymeric plates. The paper \cite{Holmes2011SoftMatter} investigates experimentally--observed time dependent bending of a thin plate of polymer gel during the transient that follows the deposition of a solvent droplet on one of its faces.

Mechanical theories of polymer gels and, more generally, of strained solids through which diffusion of a fluid takes place \cite{BaekSrinivasa2004,HongZZS2008JMPS,DudaSF2010JMPS,ChestA2010JMPS} have been elaborated drawing from the work of Gibbs concerning the role of chemical potential to describe the interaction between a solid and a fluid phase \cite{Gibbs1878} and from the work of Biot  on poroelastic media \cite{Biot1941Joap,Biot1972IUMJ}. In these theories, a polymeric gel is modeled as a single continuum where diffusion of a chemical species is driven by the gradient of its chemical potential. Constituive equations are obtained from a dissipation principle that takes explicitly into account the energetic flux associated to the motion of the fluid relative to the solid. A parallel line of research \cite{Bowen1980,ShiRajagopalWineman1981,CaldeCLZ2010JoCaTN} models the diffusing fluid and the strained solid as two superposed and interacting continua, whose governing equations result from the application of Truesdell's theory of mixtures \cite{Truesdell1962JCP}. It would be desirable to have a comparison of the two approaches, showing to what extent they are equivalent. A comparison in the setting of poroelasticity can be found in \cite{CallaA2011Tipm}.

In this paper we restrict our considerations to physical situations where the evolution of the polymer gel involves small departures from an equilibrium state. Following \cite{LucanN2012IJSS} we perform a formal linearization procedure of the mechanical equations governing motion and diffusion in a polymer gel proposed in \cite{DudaSF2010JMPS,ChestA2010JMPS}. We then arrive at the following constitutive equations:
\begin{equation}\label{eq:99}
\begin{aligned}
&\bm S=\mathbb C\bm E\bm u-p\bm I,\qquad \bm h=-\nu\bm M\nabla p,
\end{aligned}
\end{equation}
which govern the stress increment $\bm S$ with respect to the reference state and the flux of diffusant $\bm h$. Here, $\mathbb C$ is a fourth--order elasticity tensor acting on the linear strain $\bm E\bm u$;  $\nu$ is the molar volume of the fluid, $\bm M$ is the mobility tensor; the pressure $p$ entering Fick's law $\eqref{eq:99}_2$ is proportional to the fluid's chemical potential and plays the role of Lagrange multiplier associated to the incompressibility constraint
\begin{equation}\label{eq:100}
\operatorname{tr}(\bm E\bm u)=\nu\gamma  
\end{equation}
between the the increment $\gamma$ of the fluid concentration and the local volume change, as measured by the divergence $\operatorname{div}\bm u=\operatorname{tr}\bm E\bm u$ of the displacement field $\bm u$. The constitutive equations \eqref{eq:99} are combined with the balance equations:
\begin{equation}\label{eq:101}
\begin{aligned}
  &\operatorname{div}{\bm S}+\bm f=\bm 0,\qquad \dot \gamma+\operatorname{div}\bm h=0
\end{aligned}
\end{equation}
to obtain an evolution system governing displacement $\bm u$ and pressure $p$. In problems concerning deformations coupled with diffusion, inertial forces are typically neglected. Then, on identifying $\bm f$ with a prescribed body force of dead--load type, and on ruling out the unknowns $\bm S$ and $\bm h$ through their constitutive equations, we arrive at the following system in the uknowns $\bm u$ and $p$:
\begin{subequations}\label{eq:40}
 \begin{align}
  &\mathop{\rm div}\mathbb C\bm E\bm  u-\nabla p=-\bm f,
\\
  &\mathop{\rm div}\dot\bfu-\mathop{\rm div}\bm K\nabla p=0,
\end{align}
\end{subequations}
where $\bm K=\nu^2\bm M$ is the diffusivity tensor.

It is worth noticing that the constitutive equation $\eqref{eq:99}_1$ can be formally obtained as a specialization of the equation governing the stress of a saturated linear poroelastic medium \cite[Eq. 55]{DetournayCheng1993} in the case when the effective stress coefficient is equal to one. Moreover, the incompressibility constraint \eqref{eq:100} can be interpreted as the limit of the response equation for the pore fluid \cite[Eq. 59]{DetournayCheng1993} in the limit when the Biot modulus tends to infinity. In fact, to a certain extent, the equations of polymer gels can be considered as a specialization of the theory of saturated poroelastic media: Biot himself remarks in \cite{Biot1972IUMJ} that his ``pressure function'' applies to \emph{``[...] a broader context where it plays the role of ``chemical potential''}, whose applications are \emph{``[...] not restricted to the presence of actual pores. The fluid may be in solution with the solid, or may be adsorbed. Such phenomena are usually associated with the concept of capillarity or osmotic pressures.''} It is also worth noticing that the linear system we consider can be also considered as a special case of the equations of linear thermoelasticity \cite[Eq. 7.24]{Carlson1973LinearThermoelasticity} in the case when heat capacity vanishes and the stress--temperature tensor is equal to the identity. Such analogy has been explicitly reckoned, for instance, in \cite{Biot1964JoAM} and in \cite{Holmes2011SoftMatter}.

These observations put into perspective a number of theories concerning poroelastic and thermoelatic plates as relevant in the context we are considering. An earlier combination of poroelasticity with a structural theory was addressed by Biot, who considers in \cite{Biot1964JoAM} the effect of fluid flow within a poroelastic slab under compression, and who shows that creep buckling takes place when the compression is above a ``lower'' critical value, and that if the load is increased then the rate of lateral deflection increases until it becomes infinite at an ``upper'' critical value. In \cite{Taber1992Joam} Taber proposes a theory describing the coupling between diffusion and bending in a poroelastic plate, based on the Kirchhoff--Love kinematics, on the plane-strain assumption, and on the assumption that diffusion of fluid takes place only in the direction orthogonal to the mid-plane of the plate. 

A theory that specifically addresses thin polymer gels coupled with fluid diffusion has more recently been proposed in \cite{LucanN2012IJSS} concerning a planar beam whereby motion and diffusion are restricted to two dimensions. In that paper, the constitutive equations governing the relation between bending moment and curvature are obtained from the standard kinematic assumptions that the cross section of the beam remain straight, and that the stress along the transverse direction vanishes. Within these assumptions, two different models have been proposed: a two--dimensional diffusion model which allows the fluid to diffuse both in the axial and in the transverse direction of the beam; a one--dimensional diffusion model which allows fluid diffusion only in the transverse direction, as in the above--mentioned paper by Taber. Comparison between the two models shows that the one dimensional model is still adequate to capture the response of the beam to pressure variations. More recently, the equations of a poroelastic plate under large strain have been derived using the virtual-powers approach combined with an enriched kinematics \cite{lucantonio2016large}.  

All the above--mentioned derivations make use of \emph{ad hoc} hypotheses to obtain a structural model. There is however a number of papers dealing with poroelasticity and thermoelasticity  which make use of rigorous asymptotic analysis to obtain structural (plate or beam) models, in the spirit of \cite{CiarlD1979JM}: one considers a \emph{generating family} of initial-boundary value depending on a vanishing parameter $\eps$, designed in a way to capture the thinness of the body under scrutiny (be it a plate or a beam), and studies the asymptotic behavior of the solutions of these problems as $\eps$ tends to null. One may naively expect that such a procedure, being rigorous, produces a definite answer concerning what structural model betters capture thinness. However, this is not the case: as pointed out in \cite{ParonP2015JoE}, the outcome of the asymptotic method depends crucially on the \emph{choice} of the generating family, and different choices lead to different structural models. For example, there is more than one choice that can lead to the Reissner--Mindlin plate model \cite{ParonPT2007AA,ParonPT2006CRMASP,NeffHJ2008Reissner,RieyT2008JCA}, to models used for plate buckling \cite{ParonT2013MMS}, plates with residual stress \cite{Paron2006M,Paron2006MMSa} or to the Timoshenko beam theory \cite{RieyT2009CAA,FalacPP2015AaA}.

The generating families are usually obtained by writing the relevant equations (in our case, \eqref{eq:99}--\eqref{eq:101}) on a collection of cylinders
\begin{equation}
\Omega_\eps=\omega\times\left(-\eps \frac h 2,+\eps \frac h 2\right),\qquad \omega\subset\mathbb R^2,
\end{equation}
which shrink to the planar domain $\omega\times\{0\}$ as $\varepsilon$ tends to null. The coefficients in these equations, as well as the boundary conditions, can possibly depend on the parameter $\eps$. Most of these theories are essentially a variant of the Kirchhoff--Love plate equation or Euler--Bernoulli beam equation with an extra contribution to the bending moment coming from pressure inhomogeneities across the thickness. This contribution evolves in time as determined by the flow of fluid or thermal conduction within the plate and is itself affected by bending.

In the simplest setting of three-dimensional linear elasticity, the most popular approach to the derivation of  plate equations by asymptotic analysis is to keep the elasticity tensor $\mathbb C$ independent of $\varepsilon$, and replace the displacement field $\bm u_\varepsilon$ with the scaled displacement $\bm u^\varepsilon$ whose components are defined as (\emph{cf.} Eq. 74 in Sec. \ref{sec:plate-equations}):
\begin{equation}\label{eq:42}
  u^\eps_\alpha=\frac 1 \eps (\bm u_\eps)_\alpha,\qquad u^\eps_3=(\bm u_{\varepsilon})_3. 
\end{equation}
It appears natural to replicate this pattern when mechanical effects are coupled with diffusion, and a simple calculation (see the chain of argument that leads from Eq. \ref{eq:75} to Eq. \ref{eq:9}) shows that if one wants the pressure $p$ to appear in the limit equation, then the original unknown $p_\eps$ should be rescaled by introducing the new unknown
\begin{equation}
  p^\eps=\frac{p_\eps}{\eps}.
\end{equation}
These scaling are essentially common to all asymptotic derivations we were able to find in the literature. However, there are still many degrees of freedom in the choice of the generating family of problems. This explains why so many different limit theories can be obtained. For example, the derivation of theories of thermoelastic plates presented in \cite{Lebel1992CRASPSIM}, where temperature plays the role of pressure, makes use of Robin--type conditions on the top and bottom faces of the cylinder and considers three different scalings, which result in different limit problems. The paper \cite{Marciniak-Czochra2015}, which is concerned with poroelastic plates, considers the regime where the diffusivity tensor $\bm K$ scales as $\varepsilon^2$. In this case, two systems of partial differential equations are obtained: the first is an elliptic system formulated on the 2D domain $\omega$, which governs the in--plane components of displacement and the average pressure across the thickness; the second is a system composed of the classical plate equation formulated on $\omega$, coupled with a particular diffusion equation formulated on the domain $\Omega_1$ whose peculiarity is that diffusion can take place only in the transverse direction. The result is essentially the same as the 1D stress-diffusion model discussed in \cite{LucanN2012IJSS}.

In this paper we consider material symmetry of monoclinic type with respect to the plane containing $\omega$, and we build two different generating families. For the first family, the diffusivity tensor $\bm K$ is independent of $\varepsilon$, so that the only datum that depends on $\eps$ is the thickness of the domain $\Omega_\eps$. We show in Section \ref{sec:plate-equations} that this family produces a two dimensional model where the value of the pressure across the thickness is the affine interpolation of the values of the pressure at the top and bottom face, as in theories describing thermal bending. For the second family we consider, the planar components of the diffusivity tensor are constant, while the transverse component scales as $\varepsilon^2$. As we show in Section \ref{sec:diff-plate-model}, this family generates the diffusive model proposed in \cite{LucanN2012IJSS}.

\section{A nonlinear system of evolution equations}\label{sec:evolution-problem}
In this section, we put together the system of partial differential equations that arises in mechanical theories that describe diffusion of a solvent through a finitely--strained solid, and we linearize the resulting equations to arrive at an initial--boundary value problem that is the object of our asymptotic analysis. For the sake of coinciseness, we limit ourselves to presenting the key ingredients and we refer the reader eager for further details to the many presentation available in the literature, which we have listed in the Introduction. Moreover, we leave for the next section the specification of the relevant boundary and initial conditions.

As a start we recall, that for $\Omega$ the region occupied by the body in its reference configuration, and for $I=[0,T]$ the time interval of interest, the unknown fields for the theory in question are: the \emph{deformation} $f$, the \emph{Piola stress} $\bm S$, the \emph{referential concentration of solvent} $c$ (a posisitive--valued scalar field), the \emph{referential solvent flux} $\bm h$, and the \emph{solvent chemical potential} $\mu$.

For $\bm f$ the dead body--force field, the aforementioned fields are required to satisfy the force-- and the mass--balance:
\begin{subequations}\label{eq:39}
\begin{align}
&\textrm{div}\bm S+\bm f=\bm 0,\label{eq:39a}\\
&\dot c+\textrm{div}\bm h=0,\label{eq:39b}
\end{align}
\end{subequations}
as well as a set of constitutive equations consistent with the dissipation inequality
\begin{align}\label{eq:20}
\dot\psi\le\bm S\cdot\dot{\bm F}+\mu\dot c-\bm h\cdot\nabla\mu,
\end{align}
where 
\begin{equation}\label{eq:38}
  \bm F=\nabla f
\end{equation}
is the deformation gradient.

For applications to gels, it is usually assumed that both the polymer and the solvent be incompressible. This assumption does not imply that the admissible deformations of the gel as a whole are isochoric: volume changes, as measured by the determinant  $\operatorname{det}\bm F$ of the deformation gradient, may take place due to changes of the relative volume proportions of polymer and solvent. However, these changes must comply with the \emph{incompressibility constraint}:
\begin{equation}\label{eq:119}
  \textrm{det}\bm F={1+}\nu(c-\mathring c),
\end{equation}
where $\mathring c$ is the number of solvent molecules per unit referential volume and $\nu$ is  molecular volume. For $\bm F^\star={\rm det }\bm F\ \bm F^{-T}$ the cofactor of $\bm F$, we have from \eqref{eq:119} that the rate of change of concentration is related to the rate of deformation gradient by $\displaystyle{\dot c=\nu^{-1}{\bm F^\star}\cdot\dot{\bm F}}$. As a result, \eqref{eq:20} yields
\begin{align}\label{eq:35}
  \dot\psi\le\Big(\bm S+\frac\mu\nu\bm F^\star\Big)\cdot\dot{\bm F}-\bm h\cdot\nabla\mu.
\end{align}
By comparing \eqref{eq:20} with \eqref{eq:35} we see that an important consequence of the incompressibility constraint \eqref{eq:119} is that the Piola stress $\bm S$ expend power in concomitance with the chemical potential $\mu$, and hence only a combination of these fields,  namely $\bm S +\mu/\nu \bm F^\star$, can be constitutively prescribed.
Thus, in view of \eqref{eq:35}, we adopt the following constitutive equations:\color{black}
\begin{subequations} \label{eq:23}
\begin{align}
  &\bm S={\psi'}(\bm F)-\frac \mu \nu\bm F^\star\label{eq:25ba}\\
  &\bm h=-\bm M(c)\nabla \mu,\label{eq:25c}
\end{align}
\end{subequations}
where $\bm M(c)$ is a concentration--dependent, positive definite  mobility tensor.
Viewing \eqref{eq:25ba} as a constitutive equation for $\bm S$ leaves the chemical potential not  specified constitutively. In fact, within this theory, the chemical potential cannot be determined only by solving the complete problem. The mechanical interpretation of $\mu$ comes through the decomposition 
\begin{equation}\label{eq:22}
  \mu=\mu^0+\nu\sigma,
\end{equation}
where $\mu^0$ is the \emph{chemical potential of the pure solvent}, and $\sigma$ is the \emph{solvent pressure}.

When put together, the balance equations \eqref{eq:39}, the compatibility condition \eqref{eq:38}, the incompressibility constraint \eqref{eq:119}, and the constitutive equations \eqref{eq:23} form a system of partial differential equations in the unknowns $(f,\bm F,\bm S,c,\mu,\bm h)$. 

\section{Linearization of the evolution equations}
Our next task is to perform a formal linearization of this system about a suitable reference state. As a first step, we take as reference state an equilibrium solution of the aforementioned system, where the body is undeformed, so that $\mathring f(x)=x$ and $\mathring{\bm F}=\bm I$, and both stress and solvent flux vanish. We stick to the convention of marking by a superimposed ring all fields that pertain to the reference state. Accordingly,  by \eqref{eq:23} we write
\begin{subequations}\label{eq:25}
\begin{align}
&\mathring{\bm S}=\psi'(\bm I)-\frac {\mathring\mu}\nu\bm I=\bm 0,\label{eq:25a}\\
&\mathring{\bm h}=\bm M(\mathring c)\nabla\mathring\mu=\bm 0.\label{eq:25b}
\end{align}
\end{subequations}
In particular, in the reference state the solvent pressure is
\begin{equation}\label{eq:37}
   \mathring\sigma=\frac{\mathring\mu-\mu^0}\nu.
\end{equation}
As our second step we consider small departures from the reference state and we write
\begin{equation}\label{eq:105}
f(x)=x+\bm u(x),\qquad \bm F=\bm I+\bm H,\qquad c=\mathring c+\gamma,\qquad \mu=\mathring\mu+\nu p,
\end{equation}
where $\bm u$ is the \emph{displacement},  $\bm H=\nabla\bm u$ is the \emph{displacement gradient},  $\gamma$ is the \emph{concentration increment}, and (\emph{cf.} \eqref{eq:22} and \eqref{eq:37})
\begin{equation}
  p=\sigma-\mathring\sigma,
\end{equation}
is the \emph{solvent pressure increment}.

In the following we  assume that the loads and the boundary conditions are such that
$|\bm H|$ is small and the increment of pressure solvent $p$ is also of the same order.
From the incompressibility constraint \eqref{eq:119}, it follows that also the concentration increment $\gamma$ is of order $|\bm H|$.

The following linearization formula:
\begin{equation}
  \label{eq:34}
  \bm F^{\star}\cong\bm I+(\bm I\cdot\bm H)\bm I-\bm H^T,
\end{equation}
when applied to \eqref{eq:25ba} allows us to write:
$$
  \bm S\cong\psi'(\bm I)+\psi''(\bm I)[\bm H]-\frac{\mathring \mu}\nu(\bm I+(\bm I\cdot\bm H)\bm I-\bm H^T)-p\bm I,
$$
where the symbol $\cong$ indicates that the equality hods up to infinitesimals of first order. 
By making use of the representation formula
\begin{equation}
  \label{eq:36}
  \psi''(\bm I)[\bm H]=
  \textrm{sym}(\psi''(\bm I)[\bm E])+\bm H\psi'(\bm I)-{\rm sym}(\bm E\mathbf\psi'(\bm I)),
 \qquad \bm E={\rm sym}\bm H,
\end{equation}
which follows from frame indifference, and recalling \eqref{eq:25a}
we find that
\begin{equation}
  \label{eq:35}
  \begin{aligned}
    \bm S&\cong\psi'(\bm I)-\frac{\mathring \mu}\nu\bm I+\textrm{sym}(\psi''(\bm I)[\bm E])+\bm H\psi'(\bm I)-{\rm sym}(\bm E\mathbf\psi'(\bm I))\\
    &\qquad\qquad-\frac{\mathring \mu}\nu((\bm I\cdot\bm H)\bm I-\bm H^T)-p\bm I,\\
    &=\textrm{sym}(\psi''(\bm I)[\bm E])+\frac{\mathring\mu}\nu\bm H-{\rm sym}(\bm E\mathbf\psi'(\bm I))-\frac{\mathring\mu}\nu((\bm I\cdot\bm H)\bm I-\bm H^T)-p\bm I,\\
    &=\textrm{sym}(\psi''(\bm I)[\bm E])+\frac{\mathring\mu}\nu(\bm E-(\bm I\cdot\bm E)\bm I)-p\bm I.
    \end{aligned}
  \end{equation}
  On introducing the fourth--order tensor $\mathbb C$ defined by
  \begin{equation}
  \mathbb C[\bm E]=\textrm{sym}(\psi''(\bm I)[\bm E])+\frac{\mathring \mu}\nu (\bm E-(\bm I\cdot\bm E)\bm I).
\end{equation}
we can rewrite \eqref{eq:35} as
\begin{equation}
  \label{eq:11135}
\bm S\cong\mathbb C[\bm E]-p\bm I.
\end{equation}
Next,  we notice that \eqref{eq:25b}, together with the positive definiteness of the mobility tensor, implies\color{black}
\begin{equation}
  \nabla\mathring\mu=0,
\end{equation}
hence, when we linearize\eqref{eq:35} we obtain,
\begin{equation}
  \label{eq:135}
 \begin{aligned}
   \bm h&\cong-\bm M(\mathring c)\nabla \mathring \mu-\gamma\bm M'(\mathring c)\nabla \mathring\mu-\nu\bm M(\mathring c)\nabla p\\
   \color{blue}&=-\nu \bm M(\mathring c)\nabla p.
   \end{aligned}
\end{equation}
By putting together the balance equations \eqref{eq:39}, the linearized constitutive equations \eqref{eq:11135} and \eqref{eq:135}, and the linearization of the incompressibility constraint \eqref{eq:119}, we obtain the following \emph{incremental system}:
\begin{equation}\label{eq:107}
  \begin{aligned}
    &\operatorname{div}\bm S+\bm b=\bm 0\\
    &\bm S\cong\mathbb C[\bm E\bm u]-p\bm I\\
   &\dot \gamma\cong\nu\operatorname{div}(\bm M(\mathring c)\nabla p)\\
  &\bm I\cdot \bm E\bm u\cong1+\nu\gamma\\
  &\bm  E\bm u={\rm sym}\nabla\bm u
  \end{aligned}\qquad\text{in }\Omega,
\end{equation}
where we wrote explicitly the dependence on the displacement $\bm u$ of the strain $\bm E\bm u=\textrm{sym}\nabla\bfu$.
Our final step consists in the elimination of the unknowns $\bm S$ and $\gamma$ from \eqref{eq:107}, a step that leads to the following system in the unknowns $\bm u$ and $p$: 
\begin{subequations}
\begin{align}\label{eq:43}
 &\textrm{div}\,\mathbb C[\bm E\bm u]-\nabla p=-\bm f,\\
  &\textrm{div}\,\dot{\bm u}=\nu^2\textrm{div}\bm M(\mathring c)\nabla p.
\end{align}
\end{subequations}
From now on we set $\bm K=\nu^2\bm M(\mathring c)$.

\section{The existence an uniqueness of the solution of the three-dimensional problem}
The linearization procedure carried out in the previous section leads to the following system of equations\color{black}
\begin{subequations}\label{eq:1}
 \begin{align}
   &\mathop{\rm div}\mathbb C\bm E\bm  u-\nabla p=-\bm f,
\\
  &\mathop{\rm div}\dot\bfu-\mathop{\rm div}\bm K\nabla p=0,
\end{align}
\end{subequations}
formulated on a space--time region $\Omega\times(0,T)$, where $\Omega$ is a domain of $\mathbb R^3$. 

For the application to plates we have in mind, the following combination of boundary conditions is relevant:
\begin{subequations}\label{eq:2}
\begin{align}
\bfu(t)&=\bm 0\textrm{ on }\Gamma_{u,D},&(\mathbb C\bm E\bm u(t))\bm n-p(t)\bm n&={\bm 0}\textrm{ on }\Gamma_{u,N}, \label{eq:2a}\\
p(t)&=p_{\rm a}(t)\textrm{ on }\Gamma_{p,D},&-\bm K\nabla p(t)\cdot\bm n&=\bm 0\textrm{ on }\Gamma_{p,N}
,\label{eq:2b}
\end{align}
\end{subequations}
where $\partial\Omega=\Gamma_{u,D}\cup\Gamma_{u,N}$ and $\partial\Omega=\Gamma_{p,D}\cup\Gamma_{p,N}$ are partitions of $\partial\Omega$. Here  $p_{\rm a}$ is a  pressure increment applied $\Gamma_{p,D}$. The formulation of an initial--boundary value problems is arrived at by imposing an \color{black}initial condition for the pressure:\color{black}
\begin{align}\label{eq:3}
  p(0)=\color{black}p_0\color{black},
\end{align}
with $p_0$ compatible with \eqref{eq:2b}. 

We dedicate the rest of this section to proving the existence and the uniqueness of solutions for Problem \eqref{eq:1}--\eqref{eq:3} in the case when the applied pressure increment vanishes: 
\begin{equation}\label{eq:48}
p_{\rm a}=0.
\end{equation}
We leave to a remark at the end of this section the illustration of how to treat the more general case where the applied pressure $p_{\rm a}$ does not vanish.

\medskip
\noindent {\bf Notation.} We denote by $\mathbb R^{3\times 3}_{\rm Sym}$ the vector space of symmetric 3$\times$3 matrices with real entries.  We introduce the function spaces
\[
H^1_{u,D}(\Omega;\mathbb R^3)=\{\bm v\in H^1(\Omega;\mathbb R^3):\bm v=0\text{ on }\Gamma_{u,D}\}
\]
and
\[
H^1_{p,D}(\Omega)=\{q\in H^1(\Omega):q=0\text{ on }\Gamma_{p,D}\},
\]
respectively, for the displacement $\bm u$ and the pressure $p$.  \color{black}For $B$ a Banach space, we denote by $L^p(0,T;B)$ the standard Lebesgue spaces of Bochner-integrable functions defined on the time interval $(0,T)$ and taking values in $B$. Likewise, we denote by  $H^1(0,T;B)$ the corresponding Sobolev space of functions whose derivative with respect to time is in $L^2(0,T;B)$. For typographical convenience, when denoting the Lebesgue or Sobolev norm of functions defined on $\Omega$ and taking values in $\mathbb R^N$, we omit the specification of the domain and the codomain. For example, we write $\|\cdot\|_{L^q}$ and $\|\cdot\|_{L^p(0,T;L^q)}$ in place of  $\|\cdot\|_{L^q(\Omega;\mathbb R^n)}$ and $\|\cdot\|_{L^p(0,T;L^q(\Omega;\mathbb R^n))}$, respectively.\color{black}

\medskip

\noindent {\bf Assumptions.} We shall make the following assumptions concerning the data:
\begin{subequations}\label{eq:31}
\begin{align}
&\color{black}\mathcal H^2(\Gamma_{u,D})>0,\mathcal H^2(\Gamma_{p,D})>0,\color{black}\\
&\color{black}p_0\color{black}\in H^1_{p,D}(\Omega),\\  
&\bm f\in H^1(0,T;L^2(\Omega;\mathbb R^3)),\label{eq:31c}\\
&\mathbb C_{ijkl}\in L^\infty(\Omega),\\
&\mathbb C_{ijkl}=\mathbb C_{klij}=\mathbb C_{jikl},\\
&\text{$\exists c_{\mathbb C}>0:\mathbb C\bm A\cdot\bm A\ge c_{\mathbb C}|\bm A|^2\quad\forall \bm A\in\mathbb R^{3\times 3}_{\rm Sym}$}\quad \text{a.e. in }\Omega,\label{eq:31d}\\
&\mathbf K\in L^\infty(\Omega;\mathbb R^{3\times 3}_{\rm Sym}),\\
&\text{$\exists c_{\bm K}>0:\mathbf K\bm a\cdot\bm a\ge c_{\bm K}|\bm a|^2\quad\forall \bm a\in\mathbb R^3$}\quad \text{a.e. in }\Omega.\label{eq:31e}
\end{align}
\end{subequations}

\begin{theorem}\label{thm:main}
Under assumptions \eqref{eq:31}, the initial-boundary-value problem \color{black}\eqref{eq:1}-\eqref{eq:3} has a unique solution in the following weak sense:
\begin{equation}\label{eq:27}
\left\{ 
\begin{aligned}
  &\bfu\in H^1(0,T;H^1_{u,D}(\Omega;\mathbb R^3)),\\
&p\in L^\infty(0,T;H^1_{p,D}(\Omega))\cap H^1(0,T;L^2(\Omega))\quad\text{ with } \quad p(0)=\color{black}p_0,\\
&\int_{\Omega}\big(\mathbb C\bm E\bfu(t)\cdot\bm E\bm v- p(t)\mathop{\rm div}\bm v)\mathop{{\rm d}x}=\int_{\Omega}{\bm f}(t)\cdot\bm v\mathop{{\rm d}x}\\
&\qquad\forall\bm v\in H^1_{u,D}(\Omega;\mathbb R^3),\color{black}\forall t\in[0,T),\\
&\int_\Omega \left({\rm div}\dot\bfu(t)\,q+\bm K\nabla p(t)\cdot\nabla q\right)\mathop{{\rm d}x}=0\quad \forall q\in H^1_{p,D}(\Omega),\text{ for a.e. }t\in(0,T).\\
\end{aligned}\right. 
\end{equation}
\end{theorem}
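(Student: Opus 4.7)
My plan is to follow the standard Galerkin scheme for coupled evolution equations of linear poroelastic type. I fix a Hilbert basis $\{q_k\}_{k\geq 1}$ of $H^1_{p,D}(\Omega)$ and, for each $n$, seek $p_n(t)=\sum_{k=1}^n c_k(t)q_k$ together with $\bm u_n(t)\in H^1_{u,D}(\Omega;\mathbb R^3)$ defined at each $t$ as the Lax--Milgram solution of
\begin{equation*}
\int_\Omega\bigl(\mathbb C\bm E\bm u_n\cdot\bm E\bm v-p_n\,\mathrm{div}\,\bm v\bigr)\,dx=\int_\Omega\bm f\cdot\bm v\,dx\qquad\forall\bm v\in H^1_{u,D}(\Omega;\mathbb R^3),
\end{equation*}
well-posed by Korn's inequality on $H^1_{u,D}$ and the coercivity \eqref{eq:31d}. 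Writing the resulting linear representation $\bm u_n=\bm u_n^{\bm f}+\sum_k c_k\bm u_n^k$ and inserting it into the Galerkin form of the mass balance produces an ODE system $M\dot c+Ac=b$, with $M_{jk}=\int_\Omega\mathrm{div}\,\bm u_n^k\,q_j\,dx=\int_\Omega\mathbb C\bm E\bm u_n^j\cdot\bm E\bm u_n^k\,dx$. Symmetry of $M$ is immediate; positive-definiteness follows because $\sum_k c_k\bm u_n^k=\bm 0$ forces $\int_\Omega(\sum_k c_k q_k)\mathrm{div}\,\bm v\,dx=0$ for every $\bm v\in H^1_{u,D}$, whence $\sum_k c_k q_k$ is constant and equal to zero by $\mathcal H^2(\Gamma_{p,D})>0$, and the linear independence of the $q_k$ gives $c=0$. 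Classical ODE theory then yields a local Galerkin solution, made global by the a priori bounds below; the natural initial datum is $p_n(0)=\Pi_n p_0$.

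For the a priori estimates I differentiate the elastic weak equation in time, test it against $\dot{\bm u}_n\in H^1_{u,D}$, and add the Galerkin mass balance tested against $\dot p_n\in\mathrm{span}\{q_1,\dots,q_n\}\subset H^1_{p,D}$; the cross terms $\int\dot p_n\,\mathrm{div}\,\dot{\bm u}_n\,dx$ cancel and, since $\bm K$ is symmetric,
\begin{equation*}
\int_\Omega\mathbb C\bm E\dot{\bm u}_n\cdot\bm E\dot{\bm u}_n\,dx+\frac12\frac{d}{dt}\int_\Omega\bm K\nabla p_n\cdot\nabla p_n\,dx=\int_\Omega\dot{\bm f}\cdot\dot{\bm u}_n\,dx.
\end{equation*}
Korn, \eqref{eq:31d}--\eqref{eq:31e}, Young's inequality and Gronwall then yield uniform bounds for $\dot{\bm u}_n$ in $L^2(0,T;H^1_{u,D})$ and for $p_n$ in $L^\infty(0,T;H^1_{p,D})$. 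To obtain the missing control $\dot p_n\in L^2(0,T;L^2)$ I rely on an inf--sup/Lions lemma step: for each $t$ one constructs $\bm v\in H^1_{u,D}$ with $\mathrm{div}\,\bm v=\dot p_n(t)$ and $\|\bm v\|_{H^1}\le C\|\dot p_n(t)\|_{L^2}$, and the time-differentiated elastic equation then gives
\begin{equation*}
\|\dot p_n(t)\|_{L^2}^2=\int_\Omega\dot p_n\,\mathrm{div}\,\bm v\,dx=\int_\Omega\mathbb C\bm E\dot{\bm u}_n\cdot\bm E\bm v\,dx-\int_\Omega\dot{\bm f}\cdot\bm v\,dx\le C\bigl(\|\bm E\dot{\bm u}_n\|_{L^2}+\|\dot{\bm f}\|_{L^2}\bigr)\|\dot p_n(t)\|_{L^2},
\end{equation*}
which, integrated in time, provides the desired bound.

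Standard weak-compactness in reflexive spaces extracts a subsequence converging to a limit $(\bm u,p)$ enjoying the regularity stated in \eqref{eq:27}; linearity permits passage to the limit on test functions in finite spans and then by density, and the initial condition $p(0)=p_0$ follows from the continuous embedding $H^1(0,T;L^2)\hookrightarrow C([0,T];L^2)$ together with $\Pi_n p_0\to p_0$. For uniqueness, linearity reduces matters to the case $\bm f\equiv\bm 0$, $p_0=0$, which also forces $\bm u(0)=\bm 0$; testing the \emph{undifferentiated} elastic equation against $\dot{\bm u}$ and the mass balance against $p$, both legitimate in view of the regularity of the limit, produces
\begin{equation*}
\frac12\frac{d}{dt}\int_\Omega\mathbb C\bm E\bm u\cdot\bm E\bm u\,dx+\int_\Omega\bm K\nabla p\cdot\nabla p\,dx=0,
\end{equation*}
whose integration forces $\bm E\bm u\equiv\bm 0$ and $\nabla p\equiv\bm 0$, and hence $(\bm u,p)=(\bm 0,0)$ by Korn and by Poincar\'e on $H^1_{p,D}$. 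The main obstacle is the inf--sup step needed for $\dot p_n\in L^2(0,T;L^2)$, since storage terms are absent from the second equation and $p$ is not directly differentiated in time; the remaining pieces are routine manipulations of the Galerkin method for linear coupled systems.
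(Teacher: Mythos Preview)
Your argument is correct, but it follows a genuinely different route from the paper. The paper uses a \emph{time-discretization} (Rothe) scheme: it replaces the evolution system by a sequence of stationary problems with time step $\tau$, solves each by Lax--Milgram on the product space $H^1_{u,D}\times H^1_{p,D}$, and then derives uniform-in-$\tau$ estimates (an energetic estimate via testing with $(\bm u^k_\tau-\bm u^{k-1}_\tau,p^k_\tau)$, followed by a time-regularity estimate via testing the incremented elastic equation with $(\bm u^k_\tau-\bm u^{k-1}_\tau)/\tau$ and the mass balance with $(p^k_\tau-p^{k-1}_\tau)/\tau$). You instead use a \emph{spatial Galerkin} scheme: you project only $p$ onto finite-dimensional subspaces, solve the full elastic problem exactly for $\bm u_n$ at each instant, and reduce to an ODE for the coefficients. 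The crucial technical ingredient---the bound on $\dot p$ in $L^2(0,T;L^2)$ via the surjectivity of $\operatorname{div}:H^1_{u,D}\to L^2$---is the same in both proofs (the paper invokes \cite{BauerNPS2015ECOaCoV} for exactly this step), and the uniqueness argument is essentially identical. Your approach is somewhat cleaner in that it avoids the discrete Gronwall lemma and the interpolant bookkeeping of the Rothe method; the paper's approach, on the other hand, never needs to argue that the ODE mass matrix $M$ is invertible, a step that in your argument already implicitly relies on the same inf--sup lemma you later invoke explicitly. One minor remark: your claim that $\sum_k c_k q_k$ is ``constant'' when $\int(\sum_k c_k q_k)\operatorname{div}\bm v=0$ for all $\bm v\in H^1_{u,D}$ is slightly off---with $\mathcal H^2(\Gamma_{u,D})>0$ the divergence is surjective onto all of $L^2$, so the sum is zero directly, not merely constant.
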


We perform the proof of existence in several steps.\medskip

\noindent \subsection{Time discretization.}\medskip

\noindent Our first step is the construction of a sequence of approximate solutions to \eqref{eq:1}-\eqref{eq:2}. Our approximation scheme is based on a time discretization with time step $\tau=T/n$, with $n$ an integer that we shall eventually let to $+\infty$.

With a view towards setting up our iterative scheme, we begin by approximating the bulk datum $\bm f$, with the sequence 
\begin{align}\label{eq:32}
  \bm f\tk=\frac 1 \tau \int_{(k-1)\tau}^{k\tau}\bm f(t){\rm d}t\in L^2(\Omega;\mathbb R^3)\quad\textrm{ for $k=1\dots n$},
\end{align}
and we set 
$$
\bm f^0_\tau=\bm f(0),
$$
where $\bm f(0)$, understood in the sense of traces, is well defined thanks to the time regularity of the forcing term $\bm f$ stipulated in Assumption $\eqref{eq:31c}$.

Next, we define the initial displacement $\bm u_0$ as the unique solution of the variational problem:
\begin{equation}\label{eq:28}
\left\{
\begin{aligned}
&\bm u_0\in H^1_{u,D}(\Omega;\mathbb R^3),\\
&\int_\Omega \left(\mathbb C\bfE\bfu_0\cdot\bfE\bm v-p_0\mathop{\rm div}\bm v\right){\rm d}x=\int_\Omega \bm f^0_\tau\cdot\bm v\mathop{{\rm d}x}\quad\forall \bm v\in H^1_{u,D}(\Omega;\mathbb R^3).\\
\end{aligned}
\right.
\end{equation}
The existence and the uniqueness of the solution of \eqref{eq:28} follows immediately from the Lax--Milgram lemma. Now, for
\begin{equation}\label{eq:6}
\bm u^0_\tau=\color{black}\bm u_0,\qquad p_\tau^0=p_0
\end{equation}
we formulate recursively, starting from $k=1$ up to $k=n$, a sequence of problems consisting in the system of partial differential equations:
\begin{subequations}\label{eq:4}
\begin{align}
  &\mathop{\rm div}\mathbb C\bfE\bfu\tk-\nabla p\tk=-\bm f\tk,\label{eq:4a}\\
  &\mathop{\rm div}\left(\frac{\bfu\tk-\bfu\tkk}\tau\right)-\mathop{\rm div}\bm K\nabla p\tk=0,
\end{align}
\end{subequations}
in the unknows $\bm u\tk$ and $p\tk$, supplemented by the boundary conditions:
\begin{subequations}\label{eq:5}
\begin{align}
&\bfu\tk=0\textrm{ on }\Gamma_{u,D},\qquad \mathbb C(\bm E\bm u\tk)\bm n-p\tk\bm n=0\textrm{ on }\Gamma_{u,N}, \label{eq:5a}\\
&p\tk=0 \textrm{ on }\Gamma_{p,D},\qquad \bm K\nabla p\tk\cdot \bm n=0\textrm{ on }\Gamma_{p,N}.
\end{align}
\end{subequations}
Our next step is to establish the existence of a weak solution to \eqref{eq:4}-\eqref{eq:5} for all $k$.

\begin{proposition}
For $k\in \{1,\dots,n\}$, let $\bm u_\tau^{k-1}\in H^1_{u,D}(\Omega;\mathbb R^3)$ be given. Then the boundary-value problem \eqref{eq:4}-\eqref{eq:5} has a unique solution in the following sense:
\begin{equation}\label{eq:50}
\left\{ 
\begin{aligned}
&\bm u_\tau^k\in H^1_{u,D}(\Omega;\mathbb R^3),\\
&p_\tau^k\in H^1_{p,D}(\Omega),\\
&\int_\Omega \left(\mathbb C\bfE\bfu\tk\cdot\bfE\bm v-p\tk\mathop{\rm div}\bm v\right){\rm d}x=\int_\Omega \bm f\tk\cdot\bm v\mathop{{\rm d}x}\quad\forall \bm v\in H^1_{u,D}(\Omega;\mathbb R^3),\\
&\int_\Omega \left(\mathop{\rm div}\bfu\tk\, q+\tau\bm K\nabla p_\tau^k\cdot \nabla q\right){\rm d}x=\int_\Omega \mathop{\rm div}\bm u_\tau^{k-1}\,q\mathop{{\rm d}x}\quad \forall q\in H^1_{p,D}(\Omega).
\end{aligned}\right.  
\end{equation} 
\end{proposition}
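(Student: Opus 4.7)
\medskip
\noindent\textbf{Proof proposal.} The plan is to recast the pair of equations \eqref{eq:50} as a single coercive variational equation on the product Hilbert space
\[
X=H^1_{u,D}(\Omega;\mathbb R^3)\times H^1_{p,D}(\Omega),
\]
equipped with its natural product norm, and then invoke the Lax--Milgram lemma. The crucial observation is that although the two equations in \eqref{eq:50} have the structure of a saddle--point problem, the presence of the regularizing term $\tau\int_\Omega\bm K\nabla p_\tau^k\cdot\nabla q\,{\rm d}x$ (a benefit of the time discretization) turns the coupled system into an elliptic one, so that no inf--sup condition is required.

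\medskip
\noindent First I would introduce the bilinear form $B:X\times X\to\mathbb R$ defined by
\[
B((\bm u,p),(\bm v,q))=\int_\Omega\!\mathbb C\bm E\bm u\cdot\bm E\bm v\,{\rm d}x-\int_\Omega\! p\mathop{\rm div}\bm v\,{\rm d}x+\int_\Omega\!\mathop{\rm div}\bm u\,q\,{\rm d}x+\tau\!\int_\Omega\!\bm K\nabla p\cdot\nabla q\,{\rm d}x,
\]
and the linear functional $L:X\to\mathbb R$ defined by
\[
L(\bm v,q)=\int_\Omega\bm f_\tau^k\cdot\bm v\,{\rm d}x+\int_\Omega\mathop{\rm div}\bm u_\tau^{k-1}\,q\,{\rm d}x.
\]
The signs are chosen so that the two off--diagonal couplings cancel when testing with $(\bm u,p)$ itself:
\[
B((\bm u,p),(\bm u,p))=\int_\Omega\mathbb C\bm E\bm u\cdot\bm E\bm u\,{\rm d}x+\tau\int_\Omega\bm K\nabla p\cdot\nabla p\,{\rm d}x.
\]

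\medskip
\noindent The next step is to verify the hypotheses of Lax--Milgram. Continuity of $B$ is immediate from the $L^\infty$ bounds on $\mathbb C$ and $\bm K$ stipulated in \eqref{eq:31} together with the Cauchy--Schwarz inequality; continuity of $L$ follows from $\bm f_\tau^k\in L^2(\Omega;\mathbb R^3)$ and $\bm u_\tau^{k-1}\in H^1_{u,D}(\Omega;\mathbb R^3)$. For coercivity, the ellipticity \eqref{eq:31d} of $\mathbb C$ together with Korn's inequality (applicable thanks to $\mathcal H^2(\Gamma_{u,D})>0$) controls $\|\bm u\|_{H^1}^2$ by the first term, while the ellipticity \eqref{eq:31e} of $\bm K$ together with Poincar\'e's inequality (applicable thanks to $\mathcal H^2(\Gamma_{p,D})>0$) controls $\|p\|_{H^1}^2$ by the second term. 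The coercivity constant degenerates as $\tau\to 0$, but for the purposes of this proposition, where $\tau>0$ is fixed, this is of no concern.

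\medskip
\noindent Applying the Lax--Milgram lemma then yields a unique $(\bm u_\tau^k,p_\tau^k)\in X$ such that $B((\bm u_\tau^k,p_\tau^k),(\bm v,q))=L(\bm v,q)$ for every $(\bm v,q)\in X$. Testing respectively with $(\bm v,0)$ and $(0,q)$ produces exactly the two variational identities in \eqref{eq:50}, completing the existence and uniqueness proof. The only delicate ingredient is making sure the two structural inequalities (Korn and Poincar\'e) apply, which is precisely why the measure--theoretic conditions on the Dirichlet portions of the boundary have been imposed; this is the one place where the hypotheses on the data are genuinely used, and I would expect it to be the sole nontrivial step in an otherwise routine verification.
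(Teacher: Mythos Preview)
Your proposal is correct and is essentially identical to the paper's own proof: the paper introduces the same product space, the same bilinear form (with the same sign convention so that the off--diagonal couplings cancel on the diagonal), verifies continuity and coercivity via Korn's and Poincar\'e's inequalities, and applies Lax--Milgram. The paper likewise records that the coercivity constant is of order $\tau$, which suffices here since $\tau>0$ is fixed.
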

\begin{proof}
We introduce the Hilbert space ${H}=H^1_{u,D}(\Omega;\mathbb R^3)\times H^1_{p,D}(\Omega)$ equipped with the scalar product
\begin{align}
  \big((\bm u,p),(\bm v,q)\big)_{{H}}=(\bm u,\bm v)_{H^1}+(p,q)_{H^1}.
\end{align}
We define the bilinear form $a_\tau:{H}\times{H}\to\mathbb R$ and the linear functional $\ell_\tau^k:{H}\to\mathbb R$ defined by  
\begin{align}
  a_\tau\big((\bm u,p),(\bm v,q)\big)=\int_{\Omega} \big(q \mathop{\rm div}\bm u+\tau\bm K\nabla p\cdot\nabla q+\mathbb C\bm E\bm u\cdot\bm E\bm v-p\mathop{\rm div}\bm v\big)\mathop{{\rm d}x},
\end{align}
and 
\begin{equation}
  \ell_\tau^k(\bm v,q)=\int_\Omega \big(\bm f_\tau^k\cdot\bm v+\mathop{\rm div}\bm u_\tau^{k-1}\,q\big)\mathop{{\rm d}x}.
\end{equation}
Problem \eqref{eq:50} can then be rewritten as follows:
\begin{equation}\label{eq:51}
  a_\tau\big((\bm u_\tau^k,p_\tau^k),(\bm v,q)\big)=\ell_\tau^k(\bm v,q)\quad\forall (\bm v,q)\in{H}.
\end{equation}
Since 
\begin{align}
   a_\tau((\bm u,p),(\bm v,q))\le C\|(\bm u,p)\|_{{H}}\|(\bm v,q)\|_{{H}}\quad\forall (\bm u,p),(\bm v,q)\in{H},
\end{align}
the bilinear form $a_\tau$ is continuous. Moreover, by Korn's and Poincar\'e's inequalities, we have
\begin{equation}
\begin{aligned}
a_\tau((\bm u,p),(\bm u,p))&=\int_\nu \big(\tau\bm K\nabla p\cdot\nabla p+\mathbb C
\bm E\bm u\cdot\bm E\bm u\big)\mathop{{\rm d}x}\\
&\ge \tau C_{\bm K} \|\nabla p\|_{L^2}^2+C_{\mathbb C}\|\bm E\bm u\|^2_{L^2}\\
&\ge  \tau C \|(\bm u,p)\|_{{H}}
\end{aligned}\quad\forall (\bm u,p)\in{H},
\end{equation}
hence the bilinear form $a_\tau(\cdot,\cdot)$ is coercive for each $\tau$. The existence of a unique solution to \eqref{eq:51} follows from the Lax-Milgram Lemma.
\end{proof}

\noindent\subsection{A priori estimates for the time-discrete problem.}\medskip

\noindent We shall make use of the following discrete version of Gronwall's inequality (see \cite[Lemma 1.4.2]{QuartV2008Numerical}):

\begin{lemma}\label{lem:1}
Let $\{a_k\}$ and $\{b_k\}$ be non-negative sequences and let $c_0>0$. If the sequence $\{y_k\}$ satisfies
\begin{equation}
 y_k\le c_0+\sum_{j=1}^{k-1}(a_j+b_jy_j),
\end{equation}
then
\begin{equation}
 y_k\le \big(c_0+\sum_{j=1}^{k-1}a_j\big) e^{{\sum_{j=1}^{k-1}b_j}}.
\end{equation}
 \end{lemma}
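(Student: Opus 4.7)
The plan is to argue by a standard comparison-and-telescoping technique. First I would introduce the auxiliary sequence
\begin{equation*}
z_k=c_0+\sum_{j=1}^{k-1}(a_j+b_jy_j),
\end{equation*}
so that by hypothesis $y_k\le z_k$ and $z_1=c_0$. Computing the increment gives
\begin{equation*}
z_{k+1}=z_k+a_k+b_ky_k\le z_k+a_k+b_kz_k=(1+b_k)z_k+a_k,
\end{equation*}
where in the inequality I have used $y_k\le z_k$ together with $b_k\ge 0$.

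Next I would apply the elementary bound $1+b_k\le e^{b_k}$ (valid because $b_k\ge 0$) to turn the recursion into
\begin{equation*}
z_{k+1}\le e^{b_k}z_k+a_k,
\end{equation*}
which telescopes cleanly. Iterating from $k=1$ and using $z_1=c_0$ yields
\begin{equation*}
z_k\le c_0\prod_{j=1}^{k-1}e^{b_j}+\sum_{j=1}^{k-1}a_j\prod_{i=j+1}^{k-1}e^{b_i}= c_0\,e^{\sum_{j=1}^{k-1}b_j}+\sum_{j=1}^{k-1}a_j\,e^{\sum_{i=j+1}^{k-1}b_i}.
\end{equation*}

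Finally, since every $b_i$ is non-negative, each partial exponent $\sum_{i=j+1}^{k-1}b_i$ can be enlarged to the full sum $\sum_{i=1}^{k-1}b_i$ without decreasing the right-hand side, so factoring the common exponential gives
\begin{equation*}
y_k\le z_k\le \Bigl(c_0+\sum_{j=1}^{k-1}a_j\Bigr)\,e^{\sum_{j=1}^{k-1}b_j},
\end{equation*}
which is the claim. As an alternative I could organise the same computation as a direct induction on $k$ (the base case $k=1$ reduces to $c_0\le c_0$), but the telescoping presentation is cleaner and makes the role of each hypothesis ($b_k\ge 0$ for the comparison and for the exponential envelope; $a_k\ge 0$ plays no role until the last inequality) transparent. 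There is no real obstacle here: the argument is entirely elementary, and the only place requiring a moment of care is ensuring that the inequality $y_k\le z_k$ is used before, not after, multiplying by $b_k$, so that non-negativity of $b_k$ actually preserves the inequality.
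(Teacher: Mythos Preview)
Your argument is correct and is the standard telescoping proof of the discrete Gronwall inequality. The paper itself does not supply a proof of this lemma: it simply states the result and refers to \cite[Lemma~1.4.2]{QuartV2008Numerical}, so there is nothing to compare against beyond noting that your self-contained derivation is exactly what one finds in that reference.
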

Moreover, we shall make use of the following result.
\begin{lemma}\label{lem:11}For $k=1\dots n$, the interpolants $\bm f^k_\tau$ defined in \eqref{eq:32} satisfy the inequalities
\[
\tau\sum_{k=1}^n\left\|\frac{\bm f^k_\tau-\bm f^{k-1}_\tau}\tau\right\|^2_{L^2}\le C,\qquad \|\bm f^k_\tau\|_{L^2}\le C,
\]
with $C$ a constant that does not depend on $\tau$.
\begin{proof}
On exploiting the time-regularity of $\bm f$ stipulated in Assumption \eqref{eq:31c} we deduce the following chain of equalities and inequalities for $k\ge 1$:
\begin{equation}\label{eq:55}
\begin{aligned}
\nonumber  \|\bm f_\tau^{k+1}-\bm f\tk\|^2_{L^2}&=\Big\|\frac 1 \tau \int_{k\tau}^{(k+1)\tau}\int_{t-\tau}^t\dot{\bm f}(s){\rm d}s{\rm d}t\Big\|^2_{L^2}\le\frac 1 \tau \int_{k\tau}^{(k+1)\tau}\Big\|\int_{t-\tau}^t\dot{\bm f}(s){\rm d}s\Big\|^2_{L^2}{\rm d}t
\nonumber
\\
&=\tau\int_{k\tau}^{(k+1)\tau}\Big\|\frac 1 \tau \int_{t-\tau}^t\dot{\bm f}(s){\rm d}s\Big\|^2_{L^2}{\rm d}t\le \tau\int_{(k-1)\tau}^{(k+1)\tau}\big\|\dot{\bm f}(s)\big\|^2_{L^2}{\rm d}s.
\end{aligned}
\end{equation}
By the same token, we have
\begin{align}
\|\bm f_\tau^1-\bm f_\tau^0\|^2_{L^2}&=\left\|\frac 1\tau\int_0^\tau (\bm f(t)-\bm f(0)){\rm d}t\right\|^2_{L^2}=\left\|\frac 1\tau\int_0^\tau \int_0^t \dot{\bm f}(s){\rm d}s{\rm d}t\right\|^2_{L^2}\nonumber\\
&\le \frac 1 \tau \int_0^\tau\left\|\int_0^t \dot{\bm f}(s){\rm d}s\right\|^2_{L^2}{\rm d}t\nonumber\\
&\le \tau\int_0^\tau \big\|\dot{\bm f}(s)\big\|^2_{L^2}{\rm d}s.\nonumber
\end{align}
On combining the above inequalities we have the first inequality of the thesis. The second inequality is a consequence of the first inequality and of the following chain:
\begin{align}
\|\bm f_\tau^k\|_{L^2}&\le \|\bm f_\tau^0\|_{L^2}+\sum_{j=1}^k\|\bm f_\tau^j-\bm f_\tau^{j-1}\|_{L^2}\nonumber\\
&\le \|\bm f_\tau^0\|_{L^2}+\sum_{j=1}^n\|\bm f_\tau^j-\bm f_\tau^{j-1}\|_{L^2}\nonumber\\
&= \|\bm f_\tau^0\|_{L^2}+\sum_{j=1}^n\tau\left\|\frac{\bm f_\tau^j-\bm f_\tau^{j-1}}{\tau}\right\|_{L^2}\nonumber\\
&\le \|\bm f_\tau^0\|_{L^2}+T^{1/2}\left(\sum_{j=1}^n\tau\left\|\frac{\bm f_\tau^j-\bm f_\tau^{j-1}}{\tau}\right\|^2_{L^2}\right)^{1/2}.\nonumber
\end{align}
\end{proof}

\end{lemma}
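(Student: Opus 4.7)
The plan is to exploit the $H^1(0,T;L^2)$ regularity of $\bm f$ guaranteed by \eqref{eq:31c}: the first inequality should be interpreted as a uniform-in-$\tau$ bound on a discrete analogue of $\|\dot{\bm f}\|_{L^2(0,T;L^2)}$, while the second inequality is a uniform-in-$k$ pointwise bound that follows from the first by telescoping.

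For the first inequality, I would represent the increment of the interpolants via the fundamental theorem of calculus. For $k\ge 1$, using that $\bm f^{k+1}_\tau$ and $\bm f^k_\tau$ are averages of $\bm f$ over adjacent intervals of length $\tau$, I would write
\[
\bm f^{k+1}_\tau-\bm f^k_\tau=\frac1\tau\int_{k\tau}^{(k+1)\tau}\bigl(\bm f(t)-\bm f(t-\tau)\bigr)\,{\rm d}t=\frac1\tau\int_{k\tau}^{(k+1)\tau}\int_{t-\tau}^{t}\dot{\bm f}(s)\,{\rm d}s\,{\rm d}t.
\]
Then I would apply Jensen's inequality (Cauchy--Schwarz on the outer integral with weight $1/\tau$) to pass the $L^2(\Omega)$-norm inside, square and bound the inner integral by $\tau\int_{t-\tau}^{t}\|\dot{\bm f}(s)\|_{L^2}^2\,{\rm d}s$ via Cauchy--Schwarz again. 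After a change in the order of integration the estimate collapses to a bound of the form $\tau\int_{(k-1)\tau}^{(k+1)\tau}\|\dot{\bm f}(s)\|_{L^2}^2\,{\rm d}s$. Multiplying by $1/\tau^2$ and summing over $k$, the $\tau$-dependence cancels and successive intervals overlap only a bounded number of times, so the total is controlled by $2\|\dot{\bm f}\|_{L^2(0,T;L^2)}^2$.

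The $k=1$ case needs separate attention because $\bm f^0_\tau$ is defined as the trace $\bm f(0)$ rather than an average. I would handle it by writing $\bm f^1_\tau-\bm f^0_\tau=\frac1\tau\int_0^\tau(\bm f(t)-\bm f(0))\,{\rm d}t=\frac1\tau\int_0^\tau\int_0^t\dot{\bm f}(s)\,{\rm d}s\,{\rm d}t$ and repeating the Jensen/Cauchy--Schwarz argument. This relies on the embedding $H^1(0,T;L^2)\hookrightarrow C([0,T];L^2)$ to make sense of $\bm f(0)$, which is already invoked when $\bm f^0_\tau$ is introduced.

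For the second inequality I would use the telescoping identity
\[
\bm f^k_\tau=\bm f^0_\tau+\sum_{j=1}^{k}(\bm f^j_\tau-\bm f^{j-1}_\tau),
\]
apply the triangle inequality, and then Cauchy--Schwarz in the discrete sum with weights $\tau^{1/2}\cdot\tau^{-1/2}$ so that
\[
\sum_{j=1}^{k}\|\bm f^j_\tau-\bm f^{j-1}_\tau\|_{L^2}\le (k\tau)^{1/2}\Bigl(\sum_{j=1}^n\tau\,\|\tau^{-1}(\bm f^j_\tau-\bm f^{j-1}_\tau)\|_{L^2}^2\Bigr)^{1/2}\le T^{1/2}\,C^{1/2}.
\]
Combined with $\|\bm f^0_\tau\|_{L^2}=\|\bm f(0)\|_{L^2}$, which is finite by the same continuous embedding, this yields a uniform bound. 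The only delicate points are the correct bookkeeping of the factor $1/\tau$ when passing from the double integral to an $L^2$ time-norm (so that the Jensen step does not lose a power of $\tau$) and the separate treatment of $k=1$; beyond those, the argument is a standard reduction of time-discrete estimates to the continuous $H^1$-in-time bound.
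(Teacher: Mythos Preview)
Your proposal is correct and follows essentially the same approach as the paper: the same double-integral representation of the increments via the fundamental theorem of calculus, the same Jensen/Cauchy--Schwarz estimates leading to $\tau\int_{(k-1)\tau}^{(k+1)\tau}\|\dot{\bm f}\|_{L^2}^2\,{\rm d}s$, the same separate treatment of $k=1$ because $\bm f_\tau^0=\bm f(0)$, and the same telescoping plus discrete Cauchy--Schwarz for the second inequality.
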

\color{black}
\noindent \color{black}The estimates we are going to derive are best written in terms of the following interpolants:
\color{black}
\begin{equation}
\label{eq:30}
\left.
\begin{array}{l}
  \overline{\bm u}_\tau(t)=\bm u\tk,\\[0.5em]
  \overline{p}_\tau(t)=p\tk,\\
  \bm u_\tau(t)=\bm u_\tau^{k-1}+(t-(k-1)\tau)\frac{\bm u_\tau^k-\bm u_\tau^{k-1}}{\tau}\\[1ex]
   p_\tau(t)=p_\tau^{k-1}+(t-(k-1)\tau)\frac{p_\tau^k-p_\tau^{k-1}}{\tau}
\end{array}
\right\}
\begin{aligned}
&\text{ for $k=1,\dots,n$}\\
&\text{ and $t\in ((k-1)\tau,k\tau]$.}
\end{aligned}
\end{equation}

\begin{proposition}[Energetic estimates]\label{prop:2}
The estimates
\begin{subequations}\label{eq:15}
\begin{align}
  &\|\overline{\bm u}_\tau\|_{\color{black}L^\infty(0,T;H^1)}\le C,\\
  &\|\overline p_\tau\|_{\color{black}L^2(0,T;H^1)}\le C,
  \end{align}
\end{subequations}
hold uniformly with respect to $\tau$.
\end{proposition}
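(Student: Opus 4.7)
The plan is to derive a discrete energy inequality by testing the two equations in \eqref{eq:50} with carefully matched test functions that make the coupling term vanish, and then to close the estimate via the discrete Gronwall lemma.

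\textbf{Step 1: the energy identity.} I test the first equation of \eqref{eq:50} with $\bm v=\bm u_\tau^k-\bm u_\tau^{k-1}$ and the second equation with $q=p_\tau^k$, obtaining
\begin{align*}
&\int_\Omega\mathbb C\bm E\bm u_\tau^k\cdot\bm E(\bm u_\tau^k-\bm u_\tau^{k-1})\,{\rm d}x
-\int_\Omega p_\tau^k\,\mathop{\rm div}(\bm u_\tau^k-\bm u_\tau^{k-1})\,{\rm d}x
=\int_\Omega\bm f_\tau^k\cdot(\bm u_\tau^k-\bm u_\tau^{k-1})\,{\rm d}x,\\
&\int_\Omega p_\tau^k\,\mathop{\rm div}(\bm u_\tau^k-\bm u_\tau^{k-1})\,{\rm d}x
+\tau\int_\Omega\bm K\nabla p_\tau^k\cdot\nabla p_\tau^k\,{\rm d}x=0.
\end{align*}
Adding these two identities makes the coupling term $\int p_\tau^k\mathop{\rm div}(\bm u_\tau^k-\bm u_\tau^{k-1})$ disappear. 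Using the convexity identity
\[
\mathbb C\bm E\bm u_\tau^k\cdot\bm E(\bm u_\tau^k-\bm u_\tau^{k-1})\ge\tfrac12\mathbb C\bm E\bm u_\tau^k\cdot\bm E\bm u_\tau^k-\tfrac12\mathbb C\bm E\bm u_\tau^{k-1}\cdot\bm E\bm u_\tau^{k-1},
\]
which follows from the symmetry of $\mathbb C$, and summing over $k=1,\dots,m$ for any $m\le n$, I obtain a telescoping inequality of the form
\[
\tfrac12\int_\Omega\mathbb C\bm E\bm u_\tau^m\cdot\bm E\bm u_\tau^m\,{\rm d}x
+\sum_{k=1}^m\tau\int_\Omega\bm K\nabla p_\tau^k\cdot\nabla p_\tau^k\,{\rm d}x
\le \tfrac12\int_\Omega\mathbb C\bm E\bm u_0\cdot\bm E\bm u_0\,{\rm d}x
+\sum_{k=1}^m\int_\Omega\bm f_\tau^k\cdot(\bm u_\tau^k-\bm u_\tau^{k-1})\,{\rm d}x.
\]

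\textbf{Step 2: handling the forcing.} I dispose of the right--hand side via discrete integration by parts (Abel summation):
\[
\sum_{k=1}^m\int_\Omega\bm f_\tau^k\cdot(\bm u_\tau^k-\bm u_\tau^{k-1})\,{\rm d}x
=\int_\Omega\bm f_\tau^m\cdot\bm u_\tau^m\,{\rm d}x-\int_\Omega\bm f_\tau^1\cdot\bm u_0\,{\rm d}x
-\sum_{k=1}^{m-1}\tau\int_\Omega\frac{\bm f_\tau^{k+1}-\bm f_\tau^k}{\tau}\cdot\bm u_\tau^k\,{\rm d}x.
\]
The boundary terms are controlled by Lemma \ref{lem:11} together with Young's inequality, while the last sum is estimated by
\[
\sum_{k=1}^{m-1}\tau\Big(\Big\|\tfrac{\bm f_\tau^{k+1}-\bm f_\tau^k}{\tau}\Big\|_{L^2}^2+\|\bm u_\tau^k\|_{L^2}^2\Big),
\]
whose first addendum is bounded independently of $\tau$ again by Lemma \ref{lem:11}.

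\textbf{Step 3: Korn, Young, Gronwall.} I use \eqref{eq:31d} together with Korn's and Poincar\'e's inequalities (available because $\mathcal H^2(\Gamma_{u,D})>0$) to bound $\|\bm u_\tau^m\|_{H^1}^2$ by $\int\mathbb C\bm E\bm u_\tau^m\cdot\bm E\bm u_\tau^m$, and the pressure dissipation by \eqref{eq:31e}. A Young inequality absorbs a small multiple of $\|\bm u_\tau^m\|_{H^1}^2$ into the left--hand side, reducing the inequality to the form
\[
\|\bm u_\tau^m\|_{H^1}^2+\sum_{k=1}^m\tau\|\nabla p_\tau^k\|_{L^2}^2\le c_0+\sum_{k=1}^{m-1}\tau\|\bm u_\tau^k\|_{H^1}^2,
\]
with $c_0$ independent of $\tau$ and $m$. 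An application of the discrete Gronwall inequality (Lemma \ref{lem:1}) with $a_k=0$, $b_k=\tau$ gives a bound on $\|\bm u_\tau^m\|_{H^1}$ uniform in $m$ and $\tau$, hence the $L^\infty(0,T;H^1)$ bound on $\overline{\bm u}_\tau$; feeding this back into the inequality yields the $L^2(0,T;H^1)$ bound on $\overline p_\tau$ via the Poincar\'e inequality on $H^1_{p,D}(\Omega)$.

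\textbf{Anticipated difficulty.} The only non-routine point is the bookkeeping in Step 2: one must arrange the Abel summation and the subsequent Young inequality so that the term $\|\bm u_\tau^m\|_{H^1}^2$ produced by the endpoint contribution is absorbed by the coercive $\mathbb C$-term on the left, while the remaining $\tau$-weighted sum over $\|\bm u_\tau^k\|_{H^1}^2$ is of exactly the form required to invoke Lemma \ref{lem:1}.
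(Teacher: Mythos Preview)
Your proposal is correct and follows essentially the same approach as the paper: identical test functions $\bm v=\bm u_\tau^k-\bm u_\tau^{k-1}$ and $q=p_\tau^k$, the same convexity/telescoping trick, Abel summation on the forcing combined with Lemma~\ref{lem:11}, absorption via Korn--Young, and closure by the discrete Gronwall Lemma~\ref{lem:1}. The only cosmetic difference is the weighting in the Young inequality applied to the Abel-summed forcing, which does not affect the argument.
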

\begin{proof}
For $k\in\{1\dots n\}$ we \color{black}choose $\bm v=\bm u\tk-\bm u\tkk$ and $q=p\tk$ as test functions in the weak formulation \eqref{eq:50} of the discrete scheme, \color{black}and we add the resulting equations to obtain:
\begin{equation}
  \int_\Omega \left(\mathbb C\bfE\bfu\tk\cdot\bfE(\bm u\tk-\bm u\tkk)+\tau\bm K\nabla p_\tau^k\cdot \nabla p\tk\right){\rm d}x=\int_\Omega \bm f\tk\cdot(\bm u\tk-\bm u\tkk)\mathop{{\rm d}x}.
\end{equation}
The symmetry and the positivity of the elasticity tensor imply the inequality $2\mathbb C \bm A\cdot(\bm A-\bm B)\ge\mathbb C\bm A\cdot\bm A-\mathbb C\bm B\cdot\bm B$ for every pair of symmetric tensors $\bm A$ and $\bm B$. We therefore arrive at the following inequality:
\begin{equation}\label{eq:7}
\begin{aligned}
  &\frac 12 \int_\Omega \mathbb C\bm E\bm u\tk\cdot\bm E\bm u\tk\mathop{{\rm d}x}-\frac 12\int_\Omega \mathbb C\bm E\bm u\tkk\cdot\bm E\bm u\tkk\mathop{{\rm d}x}+\tau\int_\Omega \bm K\nabla p\tk\cdot\nabla p\tk\mathop{{\rm d}x}
\\
&\quad \le\int_\Omega \bm f\tk\cdot(\bfu\tk-\bfu\tkk)\mathop{{\rm d}x}.
\end{aligned}
\end{equation}
Let $1\le j\le n=T/\tau$. From \eqref{eq:7}, by performing summation of both sides with $k$ running from $1$ to $j$ we obtain
\begin{multline}\label{eq:8}
  \frac 12\int_\Omega\mathbb C\bm E\bm u_\tau^j\cdot\bm E\bm u_\tau^j\mathop{{\rm d}x}+\sum_{k=1}^j \tau\int_\Omega \bm K\nabla p\tk\cdot\nabla p\tk\mathop{{\rm d}x}\\
\le \frac 12\int_\Omega \mathbb C\bm E\bm u_\tau^0\cdot\bm E\bm u_\tau^0\mathop{{\rm d}x}+\sum_{k=1
}^j \int_\Omega \bm f\tk\cdot(\bm u\tk-\bm u\tkk)\mathop{{\rm d}x}.
\end{multline}
Next, we have 
\begin{equation}\label{eq:54}
\begin{aligned}
  &\sum_{k=1}^j\int_\Omega \bm f\tk\cdot(\bm u\tk-\bm u\tkk)\mathop{{\rm d}x}
\\
=&-\sum_{k=1}^{j-1}\int_\Omega(\bm f_\tau^{k+1}-\bm f_\tau^k)\cdot\bm u\tk\mathop{{\rm d}x}+\int_\Omega \big(\bm f_\tau^j\cdot\bm u_\tau^j-\bm f_\tau^1\bm u_\tau^0\big)\mathop{{\rm d}x},
\\
\le
&\sum_{k=1}^{j-1}\Big(
 \frac 1{2\tau} \|\bm f_\tau^{k+1}-\bm f\tk\|^2_{L^2}+\frac \tau 2 \|\bfu\tk\|^2_{L^2} \Big)\\
&\qquad +\frac\delta 2\|\bm u_\tau^j\|^2_{L^2}+\frac 1 {2\delta} \|\bm f_\tau^j\|_{L^2}^2+\|\bm f_\tau^1\|_{L^2}\|\bm u_\tau^0\|_{L^2},
\end{aligned}
\end{equation}
\color{black}for all $\delta>0$.\color{black}

Now, from \eqref{eq:54}, by making use of Lemma \ref{lem:11} and by \color{black}recalling that 
$\bm u_\tau^0=\bm u^0$  \color{black}we arrive at
\begin{equation}\label{eq:59}
\begin{aligned}
  &\sum_{k=1}^j\int_\Omega \bm f\tk\cdot(\bm u\tk-\bm u\tkk)\mathop{{\rm d}x}\le\sum_{k=1}^{j-1}\frac \tau 2 \|\bfu\tk\|^2_{L^2}
+\frac\delta 2\|\bm u_\tau^j\|^2_{L^2}+C_\delta.
\end{aligned}
\end{equation}
On combining \eqref{eq:8} with \eqref{eq:59}, and on using subsequently Korn's, Holder's, Young's, and Poincar\'e's inequalities, on choosing $\delta$ sufficiently small, we obtain
\begin{equation}\label{eq:60}
  \|\bfu_\tau^j\|_{H^1}^2+\tau\sum_{k=1}^{j}\|p\tk\|_{H^1}^2\le C\left(1+\tau\sum_{k=1}^{j-1}\|\bfu\tk\|^2_{H^1}\right).
\end{equation}
We now use the discrete Gronwall inequality in Lemma \ref{lem:1} with $a_0=C$ and $b_j=C\tau$ to obtain the following estimate:
\begin{align}\label{eq:14}
 \|\bm u_\tau^j\|^2_{H^1}\le Ce^{C\tau(j-1)}\le C e^{CT},
\end{align}
whence (\ref{eq:15}a). By \eqref{eq:14} the right-hand side of \eqref{eq:60} is bounded uniformly with respect to $\tau$, and we conclude that 
\begin{align}
  \int_0^T \|\overline p_\tau(t)\|_{H^1}^2\mathop{{\rm d}t}=\tau\sum_{k=1}^{n}\|p\tk\|_{H^1}^2\le C,
\end{align}
which entails (\ref{eq:15}b).
\end{proof}
\color{black}
\begin{proposition}[Time regularity]\label{prop:4}
The estimates
\begin{subequations}
\begin{equation}\label{eq:65a}
  \|\dot{\bm u}_\tau\|^2_{\color{black}L^2(0,T;H^1)}+\|\nabla \overline p_\tau\|^2_{\color{black}L^\infty(0,T;L^2)}\le C,
\end{equation}
\begin{equation}\label{eq:66a}
\|\dot{p}_\tau\|^2_{\color{black}L^2(0,T;L^2)}\le C,
\end{equation}
\end{subequations}
hold uniformly with respect to $\tau$.
\end{proposition}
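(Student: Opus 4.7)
The plan is to derive both estimates by forming the discrete time-difference of the first equation in \eqref{eq:50} between consecutive steps and combining it with the second equation, which is the standard route to higher time-regularity for parabolic--elliptic coupled systems of Biot type. Setting $\delta^k\bm u = \bm u_\tau^k - \bm u_\tau^{k-1}$, $\delta^k p = p_\tau^k - p_\tau^{k-1}$, $\delta^k\bm f = \bm f_\tau^k - \bm f_\tau^{k-1}$, and noting that \eqref{eq:28} makes the subtraction meaningful also at $k=1$, one obtains, for each $k\ge 1$,
\begin{equation*}
\int_\Omega \mathbb C \bm E(\delta^k\bm u)\cdot\bm E\bm v\,dx -\int_\Omega \delta^k p\,\mathop{\rm div}\bm v\,dx = \int_\Omega \delta^k\bm f\cdot\bm v\,dx\qquad \forall \bm v\in H^1_{u,D}(\Omega;\mathbb R^3).
\end{equation*}

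For \eqref{eq:65a}, I would test this identity with $\bm v = \delta^k\bm u$ and, in parallel, test the second line of \eqref{eq:50} at step $k$ with $q = \delta^k p$. The cross terms $\pm\int_\Omega \delta^k p\,\mathop{\rm div}(\delta^k\bm u)\,dx$ cancel upon addition, leaving
\begin{equation*}
\int_\Omega \mathbb C \bm E(\delta^k\bm u)\cdot\bm E(\delta^k\bm u)\,dx + \tau\int_\Omega \bm K \nabla p_\tau^k \cdot \nabla \delta^k p\,dx = \int_\Omega \delta^k\bm f\cdot\delta^k\bm u\,dx.
\end{equation*}
Using the symmetry--positivity inequality $\bm K\bm a\cdot(\bm a-\bm b)\ge \tfrac12(\bm K\bm a\cdot\bm a-\bm K\bm b\cdot\bm b)$ to telescope the pressure term, dividing by $\tau$, and summing from $k=1$ to $j$, the resulting inequality reads, in terms of the interpolants \eqref{eq:30},
\begin{equation*}
\int_0^{j\tau}\|\mathbb C^{1/2}\bm E \dot{\bm u}_\tau\|_{L^2}^2\,dt + \tfrac12 \|\bm K^{1/2}\nabla \overline p_\tau(j\tau)\|_{L^2}^2 \le \tfrac12\|\bm K^{1/2}\nabla p_0\|_{L^2}^2 + \int_0^{j\tau}\!\!\int_\Omega \dot{\bm f}_\tau \cdot \dot{\bm u}_\tau\,dx\,dt.
\end{equation*}
I would then bound the right-hand side by Cauchy--Schwarz and Young's inequality, absorb $\|\dot{\bm u}_\tau\|_{L^2}^2$ on the left via Korn's inequality, and close the estimate using Lemma \ref{lem:11} together with $p_0\in H^1_{p,D}$. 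Since $j$ is arbitrary, this delivers \eqref{eq:65a}.

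For \eqref{eq:66a}, I would rewrite the same differenced equation, rescaled by $\tau^{-1}$, as an identity for $\dot p_\tau$:
\begin{equation*}
\int_\Omega \dot p_\tau \mathop{\rm div}\bm v\,dx = \int_\Omega \mathbb C \bm E \dot{\bm u}_\tau\cdot \bm E \bm v\,dx - \int_\Omega \dot{\bm f}_\tau \cdot \bm v\,dx\qquad \forall \bm v\in H^1_{u,D}(\Omega;\mathbb R^3),
\end{equation*}
and then invoke the inf-sup/Bogovski-type surjectivity of the divergence from $H^1_{u,D}$ onto $L^2(\Omega)$: for each time instant $t$, one selects $\bm v_t\in H^1_{u,D}$ with $\mathop{\rm div}\bm v_t = \dot p_\tau(t)$ and $\|\bm v_t\|_{H^1}\le C\|\dot p_\tau(t)\|_{L^2}$, and substitutes it into the identity above. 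This yields the pointwise-in-time bound $\|\dot p_\tau(t)\|_{L^2} \le C\bigl(\|\dot{\bm u}_\tau(t)\|_{H^1}+\|\dot{\bm f}_\tau(t)\|_{L^2}\bigr)$; squaring, integrating in time, and using \eqref{eq:65a} together with Lemma \ref{lem:11} then gives \eqref{eq:66a}.

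The main obstacle is the second estimate: a pure energy testing of the differenced system only controls $\sum_{k=1}^n\|\bm K^{1/2}\nabla\delta^k p\|^2_{L^2}$, which translates to $\tau^{1/2}\|\nabla\dot p_\tau\|_{L^2(0,T;L^2)}\le C$, a bound that degenerates as $\tau\to 0$. Recovering $\dot p_\tau$ in $L^2(0,T;L^2)$ therefore hinges on the inf-sup condition that produces a continuous right inverse of the divergence on $H^1_{u,D}$, a property that, in the mixed-boundary setting at hand, depends on the geometric arrangement of $\partial\Omega$ and its decomposition into Dirichlet and Neumann parts.
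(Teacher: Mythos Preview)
Your proposal is correct and follows essentially the same route as the paper. For \eqref{eq:65a} the paper also forms the discrete time-difference of the elastic equation (using \eqref{eq:28} to have it at $k=0$), tests it with the difference quotient of $\bm u$, tests the diffusion equation with the difference quotient of $p$, adds to cancel the cross term, telescopes the $\bm K$--term, and closes via Korn, Young, and Lemma~\ref{lem:11}; for \eqref{eq:66a} the paper likewise reads the differenced elastic equation as a functional identity for $\dot p_\tau$ and invokes precisely the divergence right-inverse you mention, citing \cite[Lemma~3.2]{BauerNPS2015ECOaCoV} for the mixed-boundary case, which addresses the geometric concern you raise.
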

\begin{proof}
\color{gt}
Thanks to the definition of $\bm u_\tau^0$ and $p_\tau^0$ given in \eqref{eq:6}, we have that $\bm u_\tau^0$ satisfies the weak form of the discrete force balance \eqref{eq:4a} and the boundary condition \eqref{eq:5a} for $k=0$. Thus, for all $k=1\dots n$ we have:\color{black}
\begin{equation}\label{eq:13}
\left\{ 
\begin{aligned}
&\int_\Omega \left(\mathbb C\frac{\bfE\bfu\tk-\bfE\bfu_\tau^{k-1}}\tau\cdot\bfE\bm v-\frac{p\tk-p_\tau^{k-1}}\tau\mathop{\rm div}\bm v\right){\rm d}x=\int_\Omega \frac{\bm f\tk-\bm f_\tau^{k-1}}\tau\cdot\bm v\mathop{{\rm d}x},\\
&\int_\Omega \left(\mathop{\rm div}\bfu\tk\, q+\tau\bm K\nabla p_\tau^k\cdot \nabla q\right){\rm d}x=\int_\Omega \mathop{\rm div}\bm u_\tau^{k-1}\,q\mathop{{\rm d}x},\\
&\textrm{for all $\bm v\in H^1_{u,D}(\Omega;\mathbb R^3)$ and $q\in H^1_{p,D}(\Omega)$.}
\end{aligned}\right.  
\end{equation}
On choosing $\bm v=(\bm u_\tau^k-\bm u_\tau^{k-1})/\tau$ and $q=(p_\tau^k-p_\tau^{k-1})/{\tau^2}$ as test functions in \eqref{eq:13} (the latter choice is legal also for $k=1$, since $p_0\in H^1_{p,D}(\Omega)$) and on adding the resulting equations we obtain, after making use of some elementary inequalities,
\begin{multline}\label{eq:11}
  c\left\|\frac{\bm u_\tau^k-\bm u_\tau^{k-1}}\tau\right\|^2_{H^1}+\frac c \tau \left({\left\|\nabla p\tk\right\|^2_{L^2}}-\left\|\nabla p_\tau^{k-1}\right\|^2_{L^2}\right)\\
\le \frac 1 {2\delta}\left\|\frac{\bm f\tk-\bm f_\tau^{k-1}}\tau\right\|^2_{L^2}+\frac \delta 2\left\|\frac{\bm u_\tau^k-\bm u_\tau^{k-1}}\tau\right\|_{L^2}^2,
\end{multline}
where $c>0$ depends on the constant of Korn's inequality and on the elasticity tensor $\mathbb C$. On choosing  $\delta=c$ we deduce, 
\begin{multline}\label{eq:12}
 \frac c 2\sum_{k=1}^j\tau\left\|\frac{\bm u_\tau^k-\bm u_\tau^{k-1}}\tau\right\|^2_{H^1}+\left\|\nabla p_\tau^j\right\|^2_{L^2}\\
\le \left\|\color{black}\nabla p_0\color{black}\right\|^2_{L^2} + \frac 1 {2 c }\sum_{k=1}^j\tau\left\|\frac{\bm f\tk-\bm f_\tau^{k-1}}\tau\right\|^2_{L^2},
\end{multline}
for every $j\in\{1,2,\ldots,N\}$.
The bound \eqref{eq:65a} is then obtained by invoking Lemma \ref{lem:11}, 
on noting that 
$$
\int_0^T \left\|\dot{\bm u}_\tau(t)\right\|^2_{H^1}{\rm d}t=\sum_{k=1}^n\tau\left\|\frac{\bm u_\tau^k-\bm u_\tau^{k-1}}\tau\right\|^2_{H^1}.
$$
We now prove \eqref{eq:66a}. From the first equation of \eqref{eq:13}, namely
$$
\int_\Omega \left(\mathbb C\frac{\bfE\bfu\tk-\bfE\bfu_\tau^{k-1}}\tau\cdot\bfE\bm v-\frac{p\tk-p_\tau^{k-1}}\tau\mathop{\rm div}\bm v\right){\rm d}x=\int_\Omega \frac{\bm f\tk-\bm f_\tau^{k-1}}\tau\cdot\bm v\mathop{{\rm d}x},
$$
which holds for all $\bm v\in H^1_{u,D}(\Omega;\mathbb R^3)$, we find
\begin{align}
\int_\Omega\frac{p\tk-p_\tau^{k-1}}\tau\mathop{\rm div}\bm v{\rm d}x&\le
C \left\|\frac{\bfE\bfu\tk-\bfE\bfu_\tau^{k-1}}\tau\right\|_{L^2}\|\bfE\bm v\|_{L^2}+
\left\|\frac{\bm f\tk-\bm f_\tau^{k-1}}\tau\right\|_{L^2}\|\bm v\|_{L^2}\nonumber\\
& \le
C \Big(\left\|\frac{\bfu\tk-\bfu_\tau^{k-1}}\tau\right\|_{H^1}+
\left\|\frac{\bm f\tk-\bm f_\tau^{k-1}}\tau\right\|_{L^2}\Big)\|\bm v\|_{H^1}.\label{eq:69}
\end{align}
Let $\varphi\in L^2(\Omega)$ with $\|\varphi\|_{L^2}\le 1$. Then by \cite[Lemma 3.2]{BauerNPS2015ECOaCoV}, there exists
a $\bm v_\varphi\in H^1_{u,D}(\Omega;\mathbb R^3)$ such that
$\mathop{\rm div} \bm v_\varphi=\varphi$ and $\|\bm v_\varphi\|_{H^1}\le C \|\varphi\|_{L^2}\le C$.
By taking $\bm v=\bm v_\varphi$ in \eqref{eq:69} we are led to
$$
\int_\Omega\frac{p\tk-p_\tau^{k-1}}\tau\varphi{\rm d}x\le C \Big(\left\|\frac{\bm u\tk-\bm u_\tau^{k-1}}\tau\right\|_{H^1}+
\left\|\frac{\bm f\tk-\bm f_\tau^{k-1}}\tau\right\|_{L^2}\Big),
$$
which holds for every $\varphi\in L^2(\Omega)$ with $\|\varphi\|_{L^2}\le 1$.
Thus
$$
\left\|\frac{p\tk-p_\tau^{k-1}}\tau\right\|_{L^2}\le C \Big(\left\|\frac{\bm u\tk-\bm u_\tau^{k-1}}\tau\right\|_{H^1}+
\left\|\frac{\bm f\tk-\bm f_\tau^{k-1}}\tau\right\|_{L^2}\Big).
$$
Taking squares, multiplying by $\tau$, and summing over $k$ we find
$$ 
\sum_{k=1}^N\tau \left\|\frac{p\tk-p_\tau^{k-1}}\tau\right\|_{L^2}^2\le C \Big(
 \|\dot{\bm u}_\tau\|^2_{L^2(0,T;H^1)} +
   \sum_{k=1}^n\tau\left\|\frac{\bm f\tk-\bm f_\tau^{k-1}}\tau\right\|^2_{L^2}\Big),
$$
and taking into account Lemma \ref{lem:11} \color{black}and 
\eqref{eq:65a} \color{black}we obtain \eqref{eq:66a}.
\end{proof}

\noindent\subsection{Passage to the limit.}

\begin{proposition}[Converging subsequences]\label{prop:5}
There exist
$$
\bm u\in H^1(0,T;H^1_{u,D}(\Omega;\mathbb R^3))
$$
and
$$
p\in L^\infty(0,T;H^1_{p,D}(\Omega))\cap H^1(0,T;L^2(\Omega))
$$ 
with $p(0)=p_0$ \color{black} and $\bm u(0)=\bm u_0$ where $\bm u_0$ is defined in \eqref{eq:28}, \color{black}such that, up to subsequences,
$$
\begin{array}{ll}
  \overline\bfu_\tau\stackrel{*}{\rightharpoonup} \bfu\textrm{ in }L^{\infty}(0,T;H^1_{u,D}(\Omega;\mathbb R^3)) & \quad\bfu_\tau{\rightharpoonup} \bfu\textrm{ in }H^1(0,T;H^1_{u,D}(\Omega;\mathbb R^3)),\\[1ex]
 \overline p_\tau\rightharpoonup p\textrm{ in }L^2(0,T;H^1_{p,D}(\Omega)),&\quad p_\tau\rightharpoonup p\textrm{ in }H^1(0,T;L^2(\Omega)).
\end{array}
$$
\end{proposition}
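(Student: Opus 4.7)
\textbf{Proof plan for Proposition \ref{prop:5}.} The strategy is to combine the a priori estimates of Propositions \ref{prop:2} and \ref{prop:4} with weak-$*$ compactness (Banach--Alaoglu) and Hilbert-space reflexivity to extract converging subsequences, and then match the limits of the piecewise-constant and piecewise-affine interpolants. From (\ref{eq:15}a) I would obtain, via weak-$*$ compactness in the dual space $L^\infty(0,T;H^1_{u,D}(\Omega;\mathbb R^3))$, a subsequence along which $\overline{\bm u}_\tau\stackrel{*}{\rightharpoonup}\tilde{\bm u}$. The bound \eqref{eq:65a} on $\dot{\bm u}_\tau$, together with the $L^\infty(0,T;H^1)$ bound on $\bm u_\tau$ (which follows from the bound on $\overline{\bm u}_\tau$ and the definition \eqref{eq:30} of $\bm u_\tau$ as a convex combination of consecutive $\bm u_\tau^k$), yields, by reflexivity of $H^1(0,T;H^1_{u,D}(\Omega;\mathbb R^3))$, a further subsequence along which $\bm u_\tau\rightharpoonup\hat{\bm u}$. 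For the pressure, (\ref{eq:15}b) gives $\overline p_\tau\rightharpoonup\tilde p$ in $L^2(0,T;H^1_{p,D}(\Omega))$ by reflexivity; the bound \eqref{eq:66a}, together with the $L^\infty(0,T;L^2)$ control on $p_\tau$ derived from the gradient bound in \eqref{eq:65a} via Poincar\'e's inequality (applicable thanks to $\mathcal H^2(\Gamma_{p,D})>0$), produces $p_\tau\rightharpoonup\hat p$ in $H^1(0,T;L^2(\Omega))$; the same gradient bound, paired with Poincar\'e, also gives weak-$*$ convergence of $\overline p_\tau$ in $L^\infty(0,T;H^1_{p,D}(\Omega))$.

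The second step is to identify $\tilde{\bm u}=\hat{\bm u}$ and $\tilde p=\hat p$. From the definition \eqref{eq:30} one has, for $t\in((k-1)\tau,k\tau]$,
\[
\overline{\bm u}_\tau(t)-\bm u_\tau(t)=(k\tau-t)\frac{\bm u_\tau^k-\bm u_\tau^{k-1}}\tau,
\]
which gives the pointwise bound $\|\overline{\bm u}_\tau(t)-\bm u_\tau(t)\|_{H^1}\le\tau\|\dot{\bm u}_\tau(t)\|_{H^1}$ and hence $\|\overline{\bm u}_\tau-\bm u_\tau\|_{L^2(0,T;H^1)}\le\tau\|\dot{\bm u}_\tau\|_{L^2(0,T;H^1)}\to 0$ by \eqref{eq:65a}. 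Testing against functions in $L^2(0,T;H^{-1}(\Omega;\mathbb R^3))$ forces the two weak limits to coincide, and an entirely analogous argument based on \eqref{eq:66a} shows $\tilde p=\hat p$. Calling the common limits $\bm u$ and $p$ then yields the four convergences displayed in the statement along one common subsequence (obtained by successive extraction).

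Finally, the initial conditions are recovered from continuity of the trace $v\mapsto v(0)$ from $H^1(0,T;X)$ into $X$, which is weakly continuous: since $\bm u_\tau(0)=\bm u_0$ and $p_\tau(0)=p_0$ are constant sequences (independent of $\tau$), passing to the weak limit in $H^1(0,T;H^1_{u,D}(\Omega;\mathbb R^3))$ and in $H^1(0,T;L^2(\Omega))$ gives $\bm u(0)=\bm u_0$ and $p(0)=p_0$. I do not expect any serious obstacle here: the argument is standard Rothe-method bookkeeping, the only point requiring care being the consistent upgrade of gradient bounds to full $H^1$ bounds through Poincar\'e's inequality, which is legitimate because of the positivity of $\mathcal H^2(\Gamma_{u,D})$ and $\mathcal H^2(\Gamma_{p,D})$ postulated in \eqref{eq:31}.
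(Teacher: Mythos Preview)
Your proposal is correct and follows essentially the same approach as the paper: both arguments combine the a~priori bounds of Propositions~\ref{prop:2} and~\ref{prop:4} with the estimate $\|\overline{(\cdot)}_\tau-(\cdot)_\tau\|\le C\tau$ (derived exactly as you write it) to match the limits of the two interpolants, and then recover the initial data via weak continuity of the trace $v\mapsto v(0)$. The only cosmetic difference is ordering---the paper first proves the differences tend to zero and then extracts subsequences, whereas you extract first and identify afterward---but the content is the same.
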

\begin{proof}
We start by proving that
\begin{equation}\label{pbarp}
\bar p_\tau - p_\tau \to 0, \quad\mbox{in }L^2(0,T;L^2(\Omega)).
\end{equation}
Indeed, using the definitions \eqref{eq:30}, we have that
\begin{align*}
\|\bar p_\tau - p_\tau\|_{L^2(0,T;L^2)}^2&=\sum_{k=1}^n \int_{(k-1)\tau}^{k\tau} \|\bar p_\tau(t) - p_\tau(t)\|_{L^2}^2{\rm d}t\\
&=\sum_{k=1}^n \int_{(k-1)\tau}^{k\tau} \|\color{black}p_\tau^k\color{black} - p_\tau^{k-1} \|_{L^2}^2 \Big(\frac{k\tau-t}{\tau}\Big)^2{\rm d}t\\
&\le\sum_{k=1}^n \tau \|\color{black}p_\tau^k \color{black}- p_\tau^{k-1} \|_{L^2}^2\\
&=\tau^2 \|\dot p_\tau\|_{L^2(0,T;L^2)}^2\le C\tau^2,
\end{align*}
where the last estimate follows from \eqref{eq:66a}.
Similarly we prove that
\begin{equation}\label{ubaru}
\bar {\bm u}_\tau - \bm u_\tau \to 0, \quad\mbox{in }L^2(0,T;H^1(\Omega)).
\end{equation}
Thanks to Proposition \ref{prop:2} there exist two subsequences  (which we do not relabel) such that
$$
\begin{aligned}
  &\overline\bfu_\tau\stackrel{*}{\rightharpoonup} \bfu\textrm{ in }L^{\infty}(0,T;H^1_{u,D}(\Omega;\mathbb R^3)),\\
 &\overline p_\tau\rightharpoonup p\textrm{ in }L^2(0,T;H^1_{p,D}(\Omega)).
\end{aligned}
$$
By Proposition \ref{prop:4} we also have that
$$
\nabla\overline p_\tau\stackrel{*}\rightharpoonup \nabla p\textrm{ in }L^\infty(0,T;L^2(\Omega;\mathbb R^3))
$$
and hence, by Poincar\'e inequality,  $p\in L^\infty(0,T;H^1_{p,D}(\Omega))$.
Also, by \eqref{pbarp}, \eqref{ubaru}, and Proposition \ref{prop:4} we deduce that
$$
\begin{aligned}
  &\bfu_\tau{\rightharpoonup} \bfu\textrm{ in }H^1(0,T;H^1_{u,D}(\Omega;\mathbb R^3),\\
 &p_\tau\rightharpoonup p\textrm{ in }H^1(0,T;L^2(\Omega)).
\end{aligned}
$$
Finally, since $p_\tau(0)=\color{black}p_0$ \color{black} and $\bm u_\tau(0)=\bm u_0$ \color{black}we have, from the convergence above, that $p(0)=\color{black}p_0$ \color{black}and $\bm u(0)=\bm u_0$, \color{black} as required.
\end{proof}

\subsection{Proof of Theorem \ref{thm:main}}
Let $\overline{\bm f}_\tau$ be the piecewise constant interpolation of $\bm f\tk$, as in \eqref{eq:30}. 
From assumption \eqref{eq:31c} and definition \eqref{eq:32} it follows that
$$
  \overline{\bm f}_\tau\to\bm f\quad\textrm{ in } L^2(0,T;L^2(\Omega)).
$$
Choose any test function $\varphi\in\mathcal D(0,T)$. From the first of \eqref{eq:50} we obtain 
$$
   \int_0^T\int_{\Omega}\big(\mathbb C\bm E\overline\bfu_\tau(t)\cdot\bm E\bm v-\overline p_\tau(t)\mathop{\rm div}\bm v)\mathop{{\rm d}x}\varphi(t)\mathop{{\rm d}t}=\int_0^T\int_{\Omega}\overline{\bm f}_\tau(t)\cdot\bm v\mathop{{\rm d}x}\varphi(t)\mathop{{\rm d}t},
$$
where $\bm v$ is an arbitrary function in $H^1_{{u,D}}(\Omega;\mathbb R^3)$.
Recalling Proposition \ref{prop:5} and on passing to the limit we obtain
$$
   \int_0^T\left(\int_{\Omega}\big(\mathbb C\bm E\bfu(t)\cdot\bm E\bm v- p(t)\mathop{\rm div}\bm v)\mathop{{\rm d}x}\right)\varphi(t)\mathop{{\rm d}t}=\int_0^T\left(\int_{\Omega}{\bm f}(t)\cdot\bm v\mathop{{\rm d}x}\right)\varphi(t)\mathop{{\rm d}t},
$$
\color{black}whence, by the arbitrariness of $\varphi$ we have, for a.e. $t\in(0,T)$,
\begin{equation}\label{eq:92}
   \int_{\Omega}\big(\mathbb C\bm E\bfu(t)\cdot\bm E\bm v- p(t)\mathop{\rm div}\bm v)\mathop{{\rm d}x}=\int_{\Omega}{\bm f}(t)\cdot\bm v\mathop{{\rm d}x}.
\end{equation}
\color{black}Since $\bm u(0)=\bm u_0$ with $\bm u_0$ solving \eqref{eq:28}, the equation \eqref{eq:92} holds also at $t=0$. Moreover, the continuity of $\bm u$, $p$, and $\bm f$ implies that \eqref{eq:92} holds for all $t\in[0,T)$.\color{black}

The second of \eqref{eq:50} yields
$$
\int_0^T\int_{\Omega}\big(\mathop{\rm div}\dot\bfu_\tau q+\bm K\nabla \overline p_\tau\cdot\nabla q\big)\mathop{{\rm d}x}\varphi(t)\mathop{{\rm d}t}=0.
$$
Again, recalling Proposition \ref{prop:5} and passing to the limit we obtain 
$$
\int_0^T\int_{\Omega}\big(\mathop{\rm div}\dot\bfu \,q+\bm K\nabla p\cdot\nabla q\big)\mathop{{\rm d}x}\varphi(t)\mathop{{\rm d}t}=0,
$$
\color{black}and by localizing we obtain
$$
\int_{\Omega}\big(\mathop{\rm div}\dot\bfu \,q+\bm K\nabla p\cdot\nabla q\big)\mathop{{\rm d}x}=0\quad\textrm{for a.e. $t\in(0,T)$.}
$$
We next address the uniqueness of the solution. We choose $\bm v=\dot{\bm u}(t)$ and $q=p(t)$ in the weak formulation \eqref{eq:27},  we add the resulting equations and for $s\in (0,T]$ we integrate by parts over the time interval $(0,s)$ to obtain
\begin{equation}\label{eq:24}
\begin{aligned}
\frac 12\int_\Omega \mathbb C\bm E{\bm u}(s)&\cdot \bm E\bm u(s){\rm d}x+\int_0^s\int_\Omega \bm K\nabla p(t)\cdot\nabla p(t){\rm d}x{\rm d}t
\\
&=\frac 12 \int_\Omega \mathbb C\bm E\color{black}\bm u(0)\color{black}\cdot\bm E\color{black}\bm u(0)\color{black}{\rm d}x+\int_0^s \int_\Omega \bm f(t)\cdot\dot{\bm u}(t){\rm d}x{\rm d}t.
\end{aligned}
\end{equation}
A further integration by parts on the right--hand side of \eqref{eq:24} yields
\begin{equation}\label{eq:29}
\begin{aligned}
\mathcal E(s)+\int_0^s\int_\Omega \bm K\nabla p(t)\cdot\nabla p(t){\rm d}x{\rm d}t=\mathcal E(0)-\int_0^s \int_\Omega \dot{\bm f}(t)\cdot{\bm u}(t){\rm d}x{\rm d}t,
\end{aligned}
\end{equation}
where 
\begin{equation}\label{eq:33}
\mathcal E(s):=\int_\Omega \left(\frac 12 \mathbb C\bm E\bm u(s)\cdot\bm E\bm u(s)-\bm f(s)\bm u(s)\right){\rm d} x,
\end{equation}

The coercivity of the elasticity tensor $\mathbb C$ assumed in \eqref{eq:31d}, Young's inequality, and Korn's inequality yield
\[
\begin{aligned}
\mathcal E(s)&\ge \frac  {c_{\mathbb C}}2\|\bm E\bm u(s)\|^2_{L^2}-\frac 1 \delta \|\bm f(s)\|^2_{L^2}-\delta \|\bm u(s)\|^2_{L^2}\\
&\ge c_1\|\bm u(s)\|_{H^1}^2-\frac {C_1}{\delta}\left(\|\bm f(0)\|^2_{L^2}+\|\bm f\|^2_{H^1(0,T;L^2)}\right)-\delta \|\bm u(s)\|^2_{H^1},
\end{aligned}
\]
where $c_1:=c_Kc_{\mathbb C}/{2}$, with \color{black}$c_K:=\inf\left\{\|\bm E\bm u\|_{L^2}/{\|\bm u\|_{H^1},\bm u\neq \bm 0}\right\}>0$ \color{black}and $C_1=2(1+T)$. \color{black} For $\delta=c_1/2$ the above chain \blue of inequalities \black yields
\begin{equation}\label{eq:41}
\|\bm u(s)\|^2_{H^1}\le C_2\left(\mathcal E(s)+\|\bm f(0)\|^2_{L^2}+\|\bm f\|^2_{H^1(0,T;L^2)}\right),
\end{equation}
where $C_2=\max\left(\frac 2 {c_1},4\frac {C_1}{c_1^2}\right)$. Moreover, it follows from Poincar\'e's inequality that
\begin{equation}\label{eq:68}
\int_0^s\|p(t)\|^2_{H^1}{\rm d}t\le C_3 \int_0^s\int_\Omega \bm K\nabla p(t)\cdot\nabla p(t){\rm d}x{\rm d}t,
\end{equation}
for a sufficiently large constant $C_3$. It follows from \eqref{eq:29}, \eqref{eq:41}, and \eqref{eq:68} that, for \color{black}$C_4=\max(C_2,C_3)$,\color{black}
\begin{equation}\label{eq:44}
\begin{aligned}
\|\bm u(s)\|_{H^1}^2+&\int_0^s\|p(t)\|^2_{H^1}{\rm d}t\nonumber\\
&\le C_4\left(\blue \mathcal E(0)\black-\int_0^s \dot{\bm f}(t)\cdot\bm u(t){\rm d}t\right)+C_2\left(\|\bm f(0)\|^2_{L^2}+\|\bm f\|^2_{H^1(0,T;L^2)}\right)\\
&\le \color{black}C_5\color{black}\left(\color{black}\mathcal E(0)\black+\int_0^s \|\bm u(t)\|^2_{H^1}{\operatorname d}t \blue+ \|\bm f(0)\|^2_{L^2}+\|\bm f\|^2_{H^1(0,T;L^2)}\right),
\end{aligned}
\end{equation}
where $C_5$ is a suitably large constant. \color{black}
Consider now two solutions, say $(\bm u_1,p_1)$ and $(\bm u_2,p_2)$, and let $\blue\delta {\bm u}=\bm u_1-\bm u_2$ and $\blue\delta  p=p_1-p_2$. The pair $(\blue\delta {\bm u},\blue\delta  p)$ is a solution of \eqref{eq:27} with homogeneous forcing $\bm f=\bm 0$, and with homogeneous initial condition $\blue\delta  p(0)=0$. \color{black}The uniqueness of the solution of the elasticity problem implies that $\blue\delta {\bm u}(0)=\bm 0$. \color{black}Thus, for such solution, the estimate \eqref{eq:44} holds with $\color{black}\mathcal E(0)=0$ and $\bm f=\bm 0$, namely, 
\begin{equation}\label{eq:45}
\begin{aligned}
\|\blue\delta {\bm u}(s)\|_{H^1}^2+\int_0^s\|\blue\delta  p(t)\|^2_{H^1}{\rm d}t\le \frac {C_4}2\int_0^s \|\blue\delta {\bm u}(t)\|^2_{H^1}{\rm d}t.
\end{aligned}
\end{equation}
From \eqref{eq:45} and Gronwall's inequality we obtain that $\blue\delta {\bm u}=0$ and $\blue\delta  p=0$, thus $\bm u_1=\bm u_2$ and $p_1=p_2$, as required.
\medskip

\noindent{\bf Remark. Taking into account the  applied pressure $p_{\rm a}$.} 
If the simplifying assumption \eqref{eq:48} is removed, the existence and uniqueness of a weak solution to \eqref{eq:1}--\eqref{eq:3} can be proved with little conceptual difficulty, provided that the time--dependent pressure field $p_{\rm a}(t)$ prescribed on $\Gamma_{p,D}$ satisfies
\begin{equation}\label{eq:70}
p_{\rm a}\in H^1(0,T;H^{1/2}(\Gamma_{p,D})).
\end{equation}
Indeed, let $\tilde p_{\rm a}(t)$ be the lifting of $p_{\rm a}(t)$ to $\Omega$ defined by
\begin{equation}\label{eq:56}
\begin{aligned}
{\rm div}\bm K\nabla\tilde p_{\rm a}(t)&=0&&\mbox{\rm in}\quad\Omega,\\
\tilde p_{\rm a}(t)&=p_{\rm a}(t)&&\mbox{\rm on}\quad\Gamma_{p,D},\\
-\bm K\nabla\tilde p_{\rm a}(t)\cdot{\bm n}&=0&&\mbox{\rm on}\quad\Gamma_{p,N}.
\end{aligned}
\end{equation}
Problem \eqref{eq:1}--\eqref{eq:3} can be reformulated  in terms of the unknowns $\bm u(t)$ and
\[
\blue\widehat p\black(t):=p(t)-\tilde p_{\rm a}(t)
\]
to obtain the following system:
\begin{subequations}\label{eq:1bis}
 \begin{align}
  &\mathop{\rm div}\mathbb C\bm E\bm  u-\nabla \color{black}\widehat p=-\bm f+\nabla\widetilde p_{\rm a},
\\
  &\mathop{\rm div}\dot\bfu-\mathop{\rm div}\bm K\nabla \widehat p=0,
\end{align}
\end{subequations}
with boundary conditions:
\begin{subequations}\label{eq:2bis}
\begin{align}
\bfu(t)&=\bm 0\textrm{ on }\Gamma_{u,D},&(\mathbb C\bm E\bm u(t))\bm n-\color{black}\widehat p(t)\bm n&=\tilde{p}_{\rm a}(t)\bm n\textrm{ on }\Gamma_{u,N}, \label{eq:2a}\\
\color{black}\widehat p(t)&=0\textrm{ on }\Gamma_{p,D},&-\bm K\nabla \color{black}\widehat p(t)\cdot\bm n&=\bm 0\textrm{ on }\Gamma_{p,N}
,\label{eq:2b}
\end{align}
\end{subequations}
and with the initial condition
\begin{align}\label{eq:3bis}
  \color{black}\widehat p(0)=\color{black}\widehat p_0:=p_0-\widetilde p_{\rm a}.
\end{align}
The weak formulation of \eqref{eq:1bis}--\eqref{eq:3bis} is \
\begin{equation}\label{eq:27}
\left\{ 
\begin{aligned}
  &\bfu\in H^1(0,T;H^1_{u,D}(\Omega;\mathbb R^3)),\\
&\color{black}\widehat p\in L^\infty(0,T;H^1_{p,D}(\Omega))\cap H^1(0,T;L^2(\Omega))\quad\text{ with } \quad \color{black}\widehat p(0)=\color{black}\color{black}\widehat p_0,\\
&\int_{\Omega}\big(\mathbb C\bm E\bfu(t)\cdot\bm E\bm v- p(t)\mathop{\rm div}\bm v)\mathop{{\rm d}x}=\langle \bm\ell(t),\bm v\rangle,\\
&\qquad\forall\bm v\in H^1_{u,D}(\Omega;\mathbb R^3),\color{black}\forall t\in(0,T),\\
&\int_\Omega \left(\dot\bfu(t)q+\bm K\nabla p(t)\cdot\nabla q\right)\mathop{{\rm d}x}=0\quad \forall q\in H^1_{p,D}(\Omega),\text{ for a.e. }t\in(0,T).\\
\end{aligned}\right. 
\end{equation}
where the time--dependent linear functional $\bm\ell\in H^1(0,T;(H^1_{u,D}(\Omega;\mathbb R^3))')$ is defined by
\[
\langle \bm\ell(t),\bm v\rangle:=\int_\Omega \left(\bm f(t)\cdot\bm v+\tilde p_{\rm a}{\rm div}\bm v\right){\rm d}x.
\]
The proof of existence and uniqueness requires only minor conceptual changes. Namely, every occurrence of a scalar product $(\bm f(t),\bm v)=\int_\Omega \bm f(t)\cdot\bm v{\rm d}x$ must be replaced with the pairing $\langle \bm l(t),\bm v\rangle$, and every occurrence of the Cauchy--Schwartz inequality $(\bm f(t),\bm v)\le \|\bm f(t)\|_{L^2}\|\bm v\|_{L^2}$ must be replaced with $\langle\bm\ell (t),\bm v\rangle\le \|\bm f(t)\|_{(H^1_{u,D})'}\|\bm v\|_{H^1_{u,D}}$.
\color{black}

\section{Plate equations}\label{sec:plate-equations}
\paragraph{A family of shrinking plate--like domains.} \blue For $\varepsilon>0$ a parameter that tends to null, 
\black we consider a slab of thickness $\varepsilon h$ modeled on a plane domain $\omega$:
\begin{equation}\label{eq:1032}
\Omega_\eps=\omega\times\left(-\eps \frac h 2,+\eps \frac h 2\right),\qquad \omega\subset\mathbb R^2.
\end{equation}
We fix a part $\gamma_{u,D}\subset\partial\blue\omega$, we prescribe null displacement on
\begin{equation}\label{eq:1033}
\Gamma_{u,D,\eps}=\gamma_{u,D}\times\left(-\eps\frac h2,+\eps\frac h2\right),
\end{equation}
and \blue null traction \black on 
\[
\Gamma_{u,N,\eps}=\partial\Omega_\eps\setminus\Gamma_{u,D,\eps}.
\]
On the top and bottom faces of the slab 
\begin{equation}
  \Gamma_{p,D,\eps}=\omega\times\left\{\pm\eps \frac h 2\right\}
\end{equation}
we impose a pressure field $p_{\rm a,\eps}$. On the lateral side of the plate
\begin{equation}
\Gamma_{p,N,\eps}=\partial\omega\times\left(-\eps\frac h2,+\eps\frac h2\right)
\end{equation}
we require that the flux be null.

\paragraph{A family of evolution problems.}
For each $\eps$ we consider the following system 
\begin{subequations}\label{eq:52}
 \begin{align}
  &\mathop{\rm div}\mathbb C_\eps\bm E\bm  u_\eps-\nabla p_\eps=-\bm f_\eps,
\\
  &\mathop{\rm div}\dot\bfu_\eps-\mathop{\rm div}\bm K_\eps\nabla p_\eps=0,
\end{align}
\end{subequations}
in the unknowns $\blue\bfu_\eps$ and $\blue p_\eps$, formulated in the space-time domain $\Omega_\eps\times(0,T)$, with boundary conditions
\begin{subequations}\label{eq:26}
\begin{align}
\bfu(t)&=\bm 0\textrm{ on }\Gamma_{u,D,\eps},&(\mathbb C_\eps\bm E\bm u_\eps(t))\bm n-p_\eps(t)\bm n&=0\textrm{ on }\Gamma_{u,N,\eps},\\
p_\eps(t)&=p_{\rm a,\eps}(t)\textrm{ on }\Gamma_{p,D,\eps},&-\bm K_\eps\nabla p_\eps(t)\cdot\bm n&=\bm 0\textrm{ on }\Gamma_{p,N,\eps}\blue,
\end{align}
\end{subequations}
and with the initial condition 
\begin{equation}\label{eq:93}
  \blue p_\eps(0)=p_{0,\eps}
\end{equation}
We make the following assumptions
\begin{subequations}\label{eq:10}
\begin{align}
&\mathcal H^1(\gamma_{u,D})>0\\
&p_{0,\eps}\in H^1_{p,D,\eps}(\Omega_\eps),\qquad p_{0,\eps}=p_{\rm a,\eps}\text{ on }\Gamma_{p,D,\eps},\\  
&\bm f_\eps\in H^1(0,T;L^2(\Omega_\eps;\mathbb R^3)),\\
&(\mathbb C_{\eps})_{ijkl}\in L^\infty(\Omega_\eps),\\
&(\mathbb C_{\eps})_{ijkl}=(\mathbb C_\eps)_{klij}=(\mathbb C_\eps)_{jikl},\qquad (\mathbb C_\eps)_{333\alpha}=(\mathbb C_\eps)_{3\alpha\beta\gamma}=0,\label{eq:10e}\\
&p_{\rm a,\eps}\in H^1(0,T;H^{1/2}(\Gamma_{p,D,\eps})),\\
&\text{$\exists c_{\mathbb C}>0:\mathbb C_\eps\bm A\cdot\bm A\ge c_{\mathbb C}|\bm A|^2\quad\forall \eps>0,\bm A\in\mathbb R^{3\times 3}_{\rm Sym}$}\quad \text{a.e. in }\Omega_\eps,\\
&\blue \mathbf K_\eps\in L^\infty(\Omega_\eps;\mathbb R^{3\times 3}_{\rm Sym}),\qquad (K_\varepsilon)_{3\alpha}=0.
\label{eq:10h}\\
&\text{$\exists c_{\bm K}>0:\mathbf K_\eps\bm a\cdot\bm a\ge c_{\bm K}|\bm a|^2\quad\forall\eps,\forall \bm a\in\mathbb R^3$}\quad \text{a.e. in }\Omega_\eps.\label{eq:10i}
\end{align}
\end{subequations}
As pointed out in the remark at the conclusion of the previous section, it is convenient to decompose the pressure $p_\eps$ into the sum of 
\begin{itemize}
\item a fluctuating component $\color{black}\widehat p_{\eps}$, that is coupled with the displacement $\bm u$ and 
\item a component $\tilde p_{\rm a,\eps}$ that is directly controlled through the applied pressure $p_{\rm a,\eps}$.
\end{itemize}
To this effect, we set
\[
p_\eps(t)=\color{black}\widehat p_\eps(t)+\tilde p_{\rm a,\eps}(t),
\]
where $\tilde p_{\rm a,\eps}$ is the lifting of the boundary datum to $\Omega_\eps$ obtained by solving the following (time--dependent) boundary value problem:
\begin{equation}\label{eq:76}
\begin{aligned}
{\rm div}\bm K_{\eps}\nabla\tilde p_{\rm a,\eps}(t)&=0&&\mbox{\rm in}\quad\Omega_\eps,\\
\tilde p_{\rm a,\eps}(t)&=p_{\rm a,\eps}(t)&&\mbox{\rm on}\quad\Gamma_{p,D,\eps},\\
-\bm K_\eps\nabla\tilde p_{\rm a,\eps}(t)\cdot{\bm n}&=0&&\mbox{\rm on}\quad\Gamma_{p,N,\eps}.
\end{aligned}
\end{equation}
We introduce the notation $H^1_{p,D,\eps}(\Omega_\varepsilon)=\left\{q\in H^1(\Omega_\eps):q=0\text{ on }\Gamma_{p,D,\eps}\right\}$. \blue At a given $t\in[0,T)$, \black the weak form of \eqref{eq:76} is
\begin{equation}\label{eq:79}
\begin{cases}
&p_{\eps,\rm a}(t)\in H^1(\Omega_\eps)\qquad\text{with}\qquad \tilde p^\eps_{\rm a}(t)=p_{\rm a}^\eps(t)\text{ on }\Gamma_{p,D,\eps},\\[0.5em]
&\displaystyle\int_{\blue\Omega_\eps} \bm K_\eps\nabla \tilde p_{\rm a,\eps}(t)\cdot\nabla q\mathop{{\rm d}x}=0\quad\forall q\in H^1_{p,D,\eps}(\Omega_\eps).
\end{cases}
\end{equation}\color{black}
The weak formulation of the problem governing $(\bm u_\eps,\color{black}\widehat p_\eps)$ is the following:
\begin{equation}\label{eq:53}
\left\{ 
\begin{aligned}
  &\bfu_\varepsilon\in H^1(0,T;H^1_{u,D,\eps}(\Omega_\varepsilon;\mathbb R^3)),\\
&\color{black}\widehat p_\varepsilon\in L^\infty(0,T;H^1_{p,D,\eps}(\Omega_\varepsilon))\cap H^1(0,T;L^2(\Omega_\eps))\quad\text{ with } \quad \color{black}\widehat p_\eps(0)=\color{black}\color{black}\widehat p_{0,\eps},\comment{Controllare se serve che il dato iniziale sia ben preparato}\\
&\int_{\Omega_\eps}\big(\mathbb C_\eps\bm E\bfu_\eps(t)\cdot\bm E\bm v-\color{black}\widehat p_\eps(t)\mathop{\rm div}\bm v)\mathop{{\rm d}x}=\int_{\Omega_\eps}\big({\bm f}_\eps(t)\cdot\bm v+\tilde p_{\rm a,\eps}\mathop{\rm div}\bm v\big)\mathop{{\rm d}x}\\
&\qquad\forall\bm v\in H^1_{u,D,\eps}(\Omega_\eps;\mathbb R^3),\color{black}\forall t\in[0,T\color{black})\color{black},\\
&\int_{\Omega_\eps} {\rm div}\dot\bfu_\eps(t)q+\bm K_\eps\nabla \color{black}\widehat p_\eps(t)\cdot\nabla q\mathop{{\rm d}x}=0\qquad \forall q\in H^1(\Omega_\eps),\text{ for a.e. }t\in(0,T).\\
\end{aligned}\right.
\end{equation}
Here we use the notation $H^1_{u,D,\eps}(\Omega_\varepsilon;\mathbb R^3)=\left\{\bv\in H^1(\Omega_\eps;\mathbb R^3):\bv=\bm 0\text{ on }\Gamma_{u,D,\eps}\right\}$. Moreover,
\[
\color{black}\widehat p_{0,\eps}=p_{0,\eps}-\tilde p_{\rm a,\eps}(0).
\]
The existence and uniqueness of a weak solution follows from Theorem \ref{thm:main} and from the remarks at the end of the previous section.

\paragraph{Change of independent variables. } We next reformulate the problem on a domain that does not depend on $\eps$. An obvious choice is  
\[\Omega=\omega\times\left(-\frac h 2,+\frac h 2\right),
\] 
which corresponds to taking $\eps=1$ in \eqref{eq:1032}. To this effect, we introduce the linear map 
\[
\blue
\bm r_\eps:\Omega_\eps\to\Omega,\qquad \bm r_\eps(\bm x):=\bm R_\eps\bm x,\qquad \bm R_\eps:=\mathop{\rm diag}(1,1,\eps^{-1})
\]
(\emph{i.e.} $\bm R^\eps$ is the diagonal matrix with entries $1$, $1$, and $\blue\eps^{-1}$). \blue We can therefore write the typical point $\bm x$ of the fixed domain $\Omega$ as
\[
\bm x=\bm R_\eps\bm x_\eps
\]
where $\bm x_\eps$ is a typical point of the shrinking domain $\Omega_\eps$.\black 

\paragraph{Change of unknowns.} At this stage, we express the fields $\bm u_\eps$ and $p_\eps$ in terms of $\bm x$ and write the corresponding evolution problem on the space-time domain $\Omega\times (0,T)$. The result will be a singular-perturbation problem to study through asymptotic analysis. Before doing that, however, we go through an additional step: we change the dependent variables that so that it is easier to extract information from the resulting singular perturbation problem. Our choice of the new dependent variables is suggested by known results concerning the purely mechanical problem of bending of a linearly elastic plate. For this problem it is known \cite{Ciarl1997} that in the limit as $\eps$ tends to null the ratio between in--plane displacement and transverse displacement is of the order of $\varepsilon$. This motivates the introduction of the scaled displacement defined by:
\begin{equation}\label{eq:74}
u^\eps_\alpha(\bm x,t):=\frac 1 \eps({\bm u_\eps})_\alpha(\bm x^\eps,t)\qquad {u^\eps_3}(\bm x,t):=({\bm u_\eps})_3(\bm x^\eps,t)
\end{equation}
(henceforth Greek free indices range between 1 and 2). As to the pressure, \blue we select the following change of variable:\black
\begin{equation}\label{eq:75}
\color{black}\widehat p^\eps(\bm x,t)=\frac 1\eps\color{black}\widehat p_\eps(\bm x^\eps,t),\qquad \tilde p^\eps_{\rm a}(\bm x,t)=\frac 1 \eps\tilde p_{\rm a,\eps}(\bm x^\eps,t).
\end{equation}
\blue One checks that $\bm u^\eps(t)=\eps\bm R_{\eps}\bm u_\eps(\bm x,t)\circ{\bm r^\eps}$. This yields $\nabla\bm u^\eps(\bm x^\eps,t)=\eps{\bm R_{\eps}}\nabla\bm u_\eps(\bm x,t){\bm R_\eps}=\nabla^\eps\bm u^\eps$, that is
\begin{equation}\label{eq:77}
\nabla\bm u_\eps(\bm x^\eps,t)=\eps\nabla^\eps \bm u^\eps(\bm x,t)\qquad\text{where}\quad\nabla^\eps\bm u^\eps:=\bm R_\eps\nabla\bm u^\eps\bm R_\eps=
\begin{pmatrix}
\partial_{\alpha\beta}\bm u^\eps & \displaystyle \frac {\partial_\alpha u^\varepsilon_3} {\varepsilon} 
\\[0.8em]
 \displaystyle  \frac {\partial_\beta u^\varepsilon_3}{\varepsilon} &  \displaystyle  \frac  {\partial_3 u^\varepsilon_3} {\varepsilon^2}
\end{pmatrix},
\end{equation}
whence, in particular,
\begin{equation}
  \bm E\bm u^\eps(\bm x^\eps,t)=\eps\bm E^\eps\bm u_\eps(\bm x,t),\qquad\text{where}\quad \bm E^{\eps}\bm u_\eps:=\operatorname{sym}
\nabla^\eps\bm u_\eps,
\end{equation}
and\black
\begin{equation}
\operatorname{div}\bm u_\eps(\bm x^\eps,t)=\eps\operatorname{div}^\eps\bm u^\eps(\bm x,t)\qquad\text{where}\quad\operatorname{div}^\eps\bm u^\eps:=\operatorname{tr}\nabla^\eps\bm u^\eps.
\end{equation}
Similarly, the gradient of the fluctuating part $\color{black}\widehat p_\eps$ of the pressure can be rendered in terms of the spatial derivatives of new unknown $\color{black}\widehat p^\eps$ through
\begin{equation}\label{eq:78}
\nabla\blue\widehat p_\eps(\bm x^\eps,t)=\eps\nabla^\eps \color{black}\widehat p^\eps(\bm x,t)\qquad\text{where}\quad\nabla^\eps\color{black}\widehat p^\eps:=\begin{pmatrix}
{\blue\partial_\alpha\widehat p^\eps}
\\[0.5em]
\displaystyle\frac {\partial_3\color{black}\widehat p^\eps} \eps
\end{pmatrix},
\end{equation}
with a similar formula holding for $\nabla\tilde p_\eps$. 

\paragraph{The singular perturbation problem.}
We  introduce the shorthand notation $\Gamma_{u,D}\equiv\Gamma_{u,D,1}$, $\Gamma_{u,N}\equiv\Gamma_{u,N,1}$, $\Gamma_{p,D}\equiv\Gamma_{p,D,1}$, $\Gamma_{p,N}\equiv\Gamma_{p,N,1}$ and we define
\[
\mathbb C^\eps(\bm x):=\mathbb C_\eps(\bm x^\eps), \qquad \mathbf K^\eps(\bm x):=\mathbf K_\eps(\bm x^\eps),
\]
\[
f^\eps_\alpha(\bm x,t):=\eps(f_\eps)_\alpha(\bm x^\eps,t),\qquad f^\eps_3(\bm x,t):=\eps^2(f_\eps)_\alpha(\bm x^\eps,t).
\]
\[
p_{\rm a}^\eps(\bm x,t):=p_{\rm a,\eps}(\bm x^\eps,t).
\]
It is now easy to check that $\tilde p_{\rm a,\eps}(t)$ solves \blue\eqref{eq:75} \black if and only if $\tilde p_{\rm a}^\eps(t)$ solves: 
\begin{equation}\label{eq:58}
\blue
\begin{cases}
&\tilde p_{\rm a}^\eps\in H^1(\Omega), \qquad \tilde p^\eps_{\rm a}(t)=p_{\rm a}^\eps(t)\text{ on }\Gamma_{p,D},\\[0.5em]
&\displaystyle\int_\Omega \bm K^\eps\nabla^\eps \tilde p_{\rm a}^\eps(t)\cdot\nabla^\eps q\mathop{{\rm d}x}=0
\quad\forall q\in H^1_{p,D}(\Omega).
\end{cases}
\end{equation}
Moreover, $(\bm u_\eps,\color{black}\widehat p_\eps)$ solves Problem \eqref{eq:53} if and only if  $(\bm u^\eps,\color{black}\widehat p^\eps)$ solves the following problem:\color{black}
\begin{equation}\label{eq:9}
\left\{ 
\begin{aligned}
  &\bfu^\varepsilon\in H^1(0,T;H^1_{u,D}(\Omega;\mathbb R^3)),\\
&\color{black}\widehat p^\varepsilon\in L^\infty(0,T;H^1_{p,D}(\Omega))\cap H^1(0,T;L^2(\Omega))\quad\text{ with } \quad \color{black}\widehat p^\eps(0)=\color{black}\color{black}\widehat p^\eps_{0},\footnote{Controllare se \`e necessario che il dato sia ben preparato}\\
&\int_{\Omega}\big(\mathbb C^\eps\bm E^\eps\bfu^\eps(t)\cdot\bm E^\eps\bm v- \color{black}\widehat p^\eps(t){\mathop{\rm div}}^\eps\bm v)\mathop{{\rm d}x}=\int_{\Omega}\big({\bm f}^\eps(t)\cdot\bm v+\tilde p_{\rm a}^\eps{\mathop{{\rm div}^\eps}}\bm v\big)\mathop{{\rm d}x}\\
&\qquad\forall\bm v\in H^1_{u,D}(\Omega;\mathbb R^3),\color{black}\forall t\in[0,T\color{black})\color{black},\\
&\int_{\Omega} {\rm div}^\eps\dot\bfu^\eps(t)q+\bm K^\eps\nabla^\eps \color{black}\widehat p^\eps(t)\cdot\nabla^\eps q\mathop{{\rm d}x}=0\qquad\forall q\in H^1_{p,D}(\Omega),\text{ for a.e. }t\in(0,T).\\
\end{aligned}\right.
\end{equation} \color{black}

\color{black}

\paragraph{Further assumptions.}
We consider a regime when $\bm K^\eps$ \black tends to a limit $\bm K$ as $\eps\to 0$ in the following sense:
\begin{equation}\label{eq:85}
\bm K^\eps\to \bm K\qquad \text{in }L^\infty(\Omega;\mathbb R^{3\times 3}).
\end{equation}
We shall make the following assumptions concerning the other data:
\begin{subequations}\label{eq:63}
\begin{align}
&\mathbb C^\eps\to\mathbb C&&\text{in }L^\infty(\Omega;\mathbb R^{3\times 3\times 3\times 3}),\label{eq:63a}\\
&{\bm f}^\eps\to {\bm f}&&\text{in }H^1(0,T;L^2(\Omega)),\\
&p_{\rm a}^\eps\to p_{\rm a}&&\text{in }H^1(0,T;H^{1/2}(\Gamma_{p,D})). \label{eq:63d}
\end{align}
\end{subequations}
\medskip



\noindent{\bf Convergence of the pressure field.} Our first step is to study the asymptotic behavior of the lifting $\tilde p_{\rm a}^\eps$. We show that in the limit $\tilde p^\eps_{\rm a}$ approximates the linear interpolation $\hat p_{\rm a}^\eps$ of the pressures applied at the top and at the bottom of the plate. In what follows we set:
\[
p_{\rm a}^{\eps,\pm}(x_1,x_2,t):=p_{\rm a}^\eps(x_1,x_2,\pm h/2,t).
\]
Also, we use the same notation for the limit $p_{\rm a}$:
\[
p_{\rm a}^{\pm}(x_1,x_2,t):=p_{\rm a}(x_1,x_2,\pm h/2,t).
\]
We next define:
\[
  \zeta(x):=h\left(\frac{\mathcal K^{-1}_{33}(x)}{\mathcal K_{33}^{-1}(x_1,x_2,h/2)}-\frac 12\right)\qquad\text{where}\qquad\mathcal K_{33}^{-1}(x):=\int_{-h/2}^{x_3}\frac 1 {K_{33}(x_1,x_2,z)}{\rm d}z.
\]\color{black}
\begin{lemma}\label{lem:3}
Under assumptions \eqref{eq:85}, let $\tilde p_{\rm a}^\eps(t)\in H^1(\Omega)$ solve \eqref{eq:58} for all $t\in[0,T)$. Then
\[
\tilde p_{\rm a}^\eps\rightharpoonup \tilde p_{\rm a}\quad\textrm{in }H^1(0,T;L^2(\Omega)),\qquad \partial_3\tilde p_{\rm a}^\eps\rightharpoonup \partial_3\tilde p_{\rm a}\quad\textrm{in }H^1(0,T;L^2(\Omega)),
\]
where
\begin{equation}\label{eq:87}
 \tilde p_{\rm a}=\frac{p^+(x_1,x_2)+p^-(x_1,x_2)}2+\zeta(x)\frac{p^+(x_1,x_2)-p^-(x_1,x_2)}{h}.\color{black}
\end{equation}
\end{lemma}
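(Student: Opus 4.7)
\textbf{Proof plan for Lemma \ref{lem:3}.}

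The plan is to isolate the expected limiting profile by introducing an explicit lifting that already solves the limit one-dimensional problem, and then to show that the residue is $o(1)$ in the right topology. Concretely, I would set
\[
 \phi^\eps(x,t):=\frac{p_{\rm a}^{+,\eps}(x_1,x_2,t)+p_{\rm a}^{-,\eps}(x_1,x_2,t)}2+\zeta^\eps(x)\frac{p_{\rm a}^{+,\eps}-p_{\rm a}^{-,\eps}}{h},
\]
where $\zeta^\eps$ is defined exactly as $\zeta$ but with $K^\eps_{33}$ in place of $K_{33}$. A direct computation shows that $\phi^\eps=p_{\rm a}^{\pm,\eps}$ on $\Gamma_{p,D}$ and that $\partial_3(K^\eps_{33}\partial_3\phi^\eps)=0$, since $K^\eps_{33}\partial_3\phi^\eps=(p_{\rm a}^{+,\eps}-p_{\rm a}^{-,\eps})/\mathcal K^{-1,\eps}_{33}(x_1,x_2,h/2)$ is independent of $x_3$. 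Decompose the solution of \eqref{eq:58} as $\tilde p_{\rm a}^\eps=\phi^\eps+w^\eps$ with $w^\eps(t)\in H^1_{p,D}(\Omega)$.

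The core estimate comes from testing the weak form of \eqref{eq:58} with $q=w^\eps$. Exploiting $K^\eps_{3\alpha}=0$ from \eqref{eq:10h} and the definition of $\nabla^\eps$, the bilinear form splits into an in-plane contribution and an $\eps^{-2}$-weighted transverse contribution. The crucial observation is that the cross term
\[
\int_\Omega \eps^{-2}K^\eps_{33}\partial_3\phi^\eps\,\partial_3 w^\eps\,\mathrm dx
\]
vanishes after integration by parts in $x_3$: the boundary term is zero because $w^\eps$ vanishes on $\Gamma_{p,D}$, and the interior term is zero because $K^\eps_{33}\partial_3\phi^\eps$ is $x_3$-independent. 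What remains yields, via coercivity of $\bm K^\eps$ and Young's inequality,
\[
\|\nabla_{x_1,x_2}w^\eps\|_{L^2}^2+\eps^{-2}\|\partial_3 w^\eps\|_{L^2}^2\le C\,\|\nabla_{x_1,x_2}\phi^\eps\|_{L^2}^2.
\]
Hence $\|\partial_3 w^\eps\|_{L^2}=O(\eps)$, and by Poincar\'e's inequality in the $x_3$-direction (using the vanishing trace of $w^\eps$ on top and bottom) also $\|w^\eps\|_{L^2}=O(\eps)$, pointwise in $t$.

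To identify the limit it suffices to let $\eps\to 0$ in the explicit expression for $\phi^\eps$. Assumption \eqref{eq:85} together with the lower bound \eqref{eq:10i} gives $1/K^\eps_{33}\to 1/K_{33}$ in $L^\infty(\Omega)$, hence $\mathcal K^{-1,\eps}_{33}\to\mathcal K^{-1}_{33}$ and $\zeta^\eps\to\zeta$ uniformly. Combined with $p_{\rm a}^{\pm,\eps}\to p_{\rm a}^\pm$ in $L^2(\omega)$ inherited from \eqref{eq:63d}, this yields $\phi^\eps\to\tilde p_{\rm a}$ and $\partial_3\phi^\eps=(p_{\rm a}^{+,\eps}-p_{\rm a}^{-,\eps})/(K^\eps_{33}\mathcal K^{-1,\eps}_{33}(h/2))\to\partial_3\tilde p_{\rm a}$, both in $L^2(\Omega)$. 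Combined with the residual estimate, $\tilde p_{\rm a}^\eps\to\tilde p_{\rm a}$ and $\partial_3\tilde p_{\rm a}^\eps\to\partial_3\tilde p_{\rm a}$ strongly in $L^2(\Omega)$ for a.e.\ $t\in(0,T)$.

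Finally, because the coefficients $\bm K^\eps$ are time-independent, the time-differentiated problem is of the same elliptic form with boundary datum $\dot p_{\rm a}^\eps$. Repeating the estimate for $\dot w^\eps$ and observing that $\dot p_{\rm a}^\eps\to\dot p_{\rm a}$ in $L^2(0,T;H^{1/2}(\Gamma_{p,D}))$ from \eqref{eq:63d}, one upgrades the strong pointwise-in-time convergence to weak convergence in $H^1(0,T;L^2(\Omega))$, both for $\tilde p_{\rm a}^\eps$ and for $\partial_3\tilde p_{\rm a}^\eps$, which is the claim. The main obstacle I anticipate is controlling $\|\nabla_{x_1,x_2}\phi^\eps\|_{L^2}$ uniformly in $\eps$ from the stated regularity of $p_{\rm a}^\eps$: since $H^{1/2}(\omega)$ does not embed into $H^1(\omega)$, this step likely requires either a density/approximation argument or an implicit strengthening of assumption \eqref{eq:63d} to allow tangential gradients of the traces in $L^2(\omega)$.
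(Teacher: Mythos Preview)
Your approach is genuinely different from the paper's and, were it to go through, would actually yield a stronger conclusion (strong rather than weak convergence). The paper does \emph{not} build an explicit lifting tailored to the limit problem. Instead it takes a \emph{generic} trace lifting $\check p_{\rm a}^\eps\in H^1(0,T;H^1(\Omega))$ of the $H^{1/2}$ boundary datum, tests \eqref{eq:58} with $q=\tilde p_{\rm a}^\eps-\check p_{\rm a}^\eps$, and obtains only the coarse bound $\|\partial_3\tilde p_{\rm a}^\eps\|_{H^1(0,T;L^2)}+\eps\|\overline\nabla\tilde p_{\rm a}^\eps\|_{H^1(0,T;L^2)}\le C$. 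This suffices for weak compactness; the limit is then identified by passing to the limit in the weak equation $\int_\Omega K^\eps_{33}\partial_3\tilde p_{\rm a}^\eps\,\partial_3 q+\eps^2\int_\Omega K^\eps_{\alpha\beta}\partial_\alpha\tilde p_{\rm a}^\eps\,\partial_\beta q=0$, which in the limit forces $\partial_3(K_{33}\partial_3\tilde p_{\rm a})=0$ and hence the representation \eqref{eq:87}.

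The obstacle you anticipate is real and is precisely why the paper's route is preferable under the stated hypotheses. Your estimate hinges on $\|\overline\nabla\phi^\eps\|_{L^2}$ being finite and uniformly bounded, but with $p_{\rm a}^{\pm,\eps}\in H^{1/2}(\omega)$ and $K^\eps_{33}$ merely in $L^\infty$, neither $\overline\nabla p_{\rm a}^{\pm,\eps}$ nor $\overline\nabla\zeta^\eps$ need belong to $L^2$; in particular $\phi^\eps$ may fail to lie in $H^1(\Omega)$, so $w^\eps=\tilde p_{\rm a}^\eps-\phi^\eps$ is not even an admissible test function in \eqref{eq:58}. A density argument does not rescue this: approximating the data by smooth $p_{\rm a}^{\pm,\eps,\delta}$ yields a bound that degenerates as $\delta\to 0$. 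The paper's generic $H^1$ lifting sidesteps the issue entirely because the inverse trace theorem provides $\check p_{\rm a}^\eps\in H^1(\Omega)$ directly from $H^{1/2}$ boundary data, at the price of obtaining only weak convergence.
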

\begin{remark}\label{rem:4}
If $K_{33}$ does not depend on $x_3$ then $\zeta(x)=x_3$ and the representation formula \eqref{eq:87} results in $\tilde p_{\rm a}$ being the affine interpolation between the pressures applied at the top and at the bottom of the plate. 
\end{remark}
\color{black}
\begin{proof}[Proof of Lemma \ref{lem:3}.]
 By standard results on liftings of traces, there exists $\check p_{\rm a}^\eps\in H^1(0,T;H^1(\Omega))$ such that $\check p_{\rm a}^\eps(t)=p_{\rm a}^\eps(t)$ on $\Gamma_{p,D}$ for all $t\in[0,T)$ and 
 \begin{equation}\label{eq:81}
\|\check p_{\rm a}^\eps\|_{ H^1(0,T;H^1)}\le C\|p_{\rm a}^\eps\|_{ H^1(0,T;H^{1/2}(\Gamma_{p,D}))},
\end{equation}
with $C$ independent on $\eps$. We take $q=\tilde p_{{\rm a}}^\eps-\check p_{{\rm a}}^\eps$ as test function in \eqref{eq:58} and we integrate on $(0,T)$ to obtain $\int_0^T\int_\Omega |\nabla^\eps\tilde p_{\rm a}^\eps|^2\mathop{{\rm d}x}\mathop{{\rm d}t}=\int_0^T\int_\Omega \nabla^\eps\check p_{\rm a}^\eps
\cdot\nabla^\eps\tilde p_{\rm a}^\eps\mathop{{\rm d}x}\mathop{{\rm d}t}$, whence the estimate
\begin{equation}\label{eq:82}
\|\nabla^\eps\tilde p_{\rm a}^\eps\|_{L^2(0,T;L^2)}\le \|\nabla^\eps\check p_{\rm a}^\eps\|_{L^2(0,T;L^2)}.
\end{equation} 
Next, we differentiate \eqref{eq:58} with respect to time and we take  $q=\partial_t(\tilde p_{{\rm a}}^\eps-\check p_{{\rm a}}^\eps)$ as test function. Then we integrate with respect to time to get $\int_0^T\int_\Omega |\nabla^\eps\partial_t\tilde p_{\rm a}^\eps|^2\mathop{{\rm d}x}\mathop{{\rm d}t}=\int_0^T\int_\Omega \nabla^\eps\partial_t\check p_{\rm a}^\eps
\cdot\nabla^\eps\partial_t\tilde p_{\rm a}^\eps\mathop{{\rm d}x}\mathop{{\rm d}t}$, which yields
\begin{equation}\label{eq:82b}
\|\nabla^\eps\partial_t\tilde p_{\rm a}^\eps\|_{L^2(0,T;L^2)}\le \|\nabla^\eps\partial_t\check p_{\rm a}^\eps\|_{L^2(0,T;L^2)}.
\end{equation} 
Putting together \eqref{eq:82} and \eqref{eq:82b} we have
\begin{equation}\label{eq:80}
\|\nabla^\eps\tilde p_{\rm a}^\eps\|_{H^1(0,T;L^2)}\le \|\nabla^\eps\check p_{\rm a}^\eps\|_{H^1(0,T;L^2)}, 
\end{equation}
 an estimate that is best written by decomposing the rescaled pressure gradient $\nabla^\eps\tilde p_{\rm a}^\eps$ into its transverse component $\eps^{-1}\partial_3\tilde p_{\rm a}^\eps$ and its plane component ${\overline\nabla}\tilde p_{\rm a}^\eps=(\partial_1\tilde p_{\rm a}^\eps,\partial_2\tilde p_{\rm a}^\eps)$:
\begin{equation}\label{eq:83} 
\|\partial_3\tilde p_{\rm a}^\eps\|^2_{H^1(0,T;L^2)}+\|{\eps\overline\nabla}\tilde p_{\rm a}^\eps\|_{H^1(0,T;L^2)}^2\le C \|\partial_3\check p_{\rm a}^\eps\|^2_{H^1(0,T;L^2)}+\|\eps{\overline\nabla}\check p_{\rm a}^\eps\|^2_{H^1(0,T;L^2)}.
\end{equation}
From \eqref{eq:81} and \eqref{eq:83} and from Assumption \eqref{eq:63d} we obtain 
\begin{equation}\label{eq:84}
\|\partial_3\tilde p_{\rm a}^\eps\|^2_{H^1(0,T;L^2)}+\|\eps{\overline\nabla}\tilde p_{\rm a}^\eps\|^2_{H^1(0,T;L^2)}\le C.
\end{equation}
\mynote{Now we write $\tilde p_{\rm a}^\eps(x)=p_{\rm a}(x_1,x_2,-h/2)+\int_{-h/2}^{x_3} \partial_3\tilde p_{\rm a}^\eps(x_1,x_2,z)\mathop{{\rm d}z}$ and we use \eqref{eq:84} to obtain
\[
\begin{aligned}
\int_\Omega|\tilde p_{\rm a}^\eps(t)|^2\mathop{{\rm d}x}
&=
\int_\Omega \Big|\tilde p_{\rm a}^\eps(x_1,x_2,-h/2,t)+\int_{-h/2}^{x_3} \partial_3\tilde p_{\rm a}^\eps(x_1,x_2,z,t)\mathop{{\rm d}z}\Big|^2\mathop{{\rm d}x}\\
&\le
2\int_\Omega \left|\tilde p_{\rm a}^\eps(x_1,x_2,-h/2,t)\right|^2\mathop{{\rm d}x}+
2\int_\Omega \Big|\int_{-h/2}^{x_3}\partial_3 \tilde p_{\rm a}^\eps(x_1,x_2,z,t)\mathop{{\rm d}z}\Big|^2\mathop{{\rm d}x}
\\
&\le
2h \|p_a(t)\|^2_{L^2(\Gamma_{p,D})}+\int_\Omega \Big(x_3+\frac h 2\Big)\int_{-h/2}^{x_3}|\partial_3\tilde p_{\rm a}^\eps(x_1,x_2,z,t)|^2\mathop{\mathrm{d}z}\mathop{{\rm d}x}\\
&\le
Ch\Big(\|p_{\rm a}(t)\|^2_{L^2(\Gamma_{p,D})}+h\|\tilde p_{\rm a}^\eps(t)\|^2_{L^2(\Gamma_{p,D})}\big).
\end{aligned}
\]
From the last inequality, on taking \eqref{eq:84} into account, we deduce
\begin{equation}\label{eq:47a}
\|\tilde p_{\rm a}^\eps\|^2_{L^2(0,T;L^2)}\le C.
\end{equation}}
From \eqref{eq:63d}, \eqref{eq:84}, and from Poincar\'e inequality we have
\begin{equation}\label{eq:47}
\|\tilde p_{\rm a}^\eps\|^2_{H^1(0,T;L^2)}\le C.
\end{equation}
From the bounds \eqref{eq:84} and \eqref{eq:47}  we conclude that there exists $\tilde p_{\rm a}\in H^1(0,T;L^2(\Omega))$ such that $\partial_3\tilde p_{\rm a}\in H^1(0,T;L^2(\Omega))$ and 
\begin{equation}\label{eq:57}
\tilde p_{\rm a}^\eps\rightharpoonup \tilde p_{\rm a}\quad\text{ and }\quad\partial_3\tilde p_{\rm a}^\eps\rightharpoonup \partial_3\tilde p_{\rm a}\quad\text{ in }H^1(0,T;L^2(\Omega)),
\end{equation}
for some subsequence. Moreover, by the continuity of the trace operator,
\begin{equation}\label{eq:64}
\tilde p_{\rm a}(x_1,x_2,\pm h/2,t)=p_{\rm a}^\pm(x_1,x_2,t)\quad\forall t\in[0,T).
\end{equation}
Next, from \eqref{eq:10h} and \eqref{eq:58} we have, for all $t\in[0,T)$,
\begin{equation}\label{eq:86}
\int_\Omega K^\eps_{33}\partial_3\tilde p^\eps_{\rm a}(t)\partial_3 q\dx+\eps^2\int_\Omega K^\eps_{\alpha\beta}\tilde p_{\rm a,\alpha}^\eps(t)\cdot q_{,\beta}\mathop{{\rm d}x}=0\quad\forall q\in H^1_{p,D}(\Omega).
\end{equation}
Thanks to \eqref{eq:84}, \eqref{eq:57}, and to assumption \eqref{eq:85}, we can pass to the limit in \eqref{eq:86} to obtain, for all $t\in[0,T)$,
\[
\int_\Omega  K_{33}\partial_3\tilde p_{\rm a}(t)\partial_3q\dx=0,\quad \forall q\in H^1_{p,D}(\Omega),
\] 
which implies, because of \eqref{eq:64}, that the representation \eqref{eq:87} holds. We conclude the proof by noting that since the limit $\tilde p_{\rm a}$ is uniquely determined the whole sequence converges.
\mynote{ Next, we notice that for all $\varphi\in C_0^\infty(\omega\times(0,T))$
\[
\int_0^T\int_\omega (p_{\rm a,+}^\eps(x_1,x_2,t)-p_{\rm a,-}^\eps(x_1,x_2,t))\varphi(x_1,x_2)\mathop{{\rm d} x_1}\mathop{{\rm d} x_2}=\int_0^T\int_\Omega\partial_3\tilde p_{\rm a}^\eps(t)\varphi(x_1,x_2)\dx.
\]
On passing to the limit we obtain
\[
\int_0^T\int_\Omega\partial_3\tilde p_{\rm a}(t)\varphi(x_1,x_2)\dx=\int_0^T\int_\omega (p_{\rm a,+}(x_1,x_2,t)-p_{\rm a,-}(x_1,x_2,t))\varphi(x_1,x_2)\mathop{{\rm d} x_1}\mathop{{\rm d} x_2}.
\]
By making use of \eqref{eq:87}, the above equation becomes
\[
\begin{aligned}
\int_0^T\int_\omega\tilde p_{\rm a,1}(x_1,x_2,t)&\varphi(x_1,x_2)\mathop{{\rm d} x_1}\mathop{{\rm d} x_2}
\\
&=\int_0^T\int_\omega (p_{\rm a,+}(x_1,x_2,t)-p_{\rm a,-}(x_1,x_2,t))\varphi(x_1,x_2)\mathop{{\rm d} x_1}\mathop{{\rm d} x_2},
\end{aligned}
\]
which implies \eqref{eq:89b}. Likewise, passing to the limit in
\[
\begin{aligned}
\int_0^T&\int_\Omega x_3\partial_3\tilde p_{\rm a}^\eps(t)\varphi(x_1,x_2)\dx=
-\int_0^T\int_\Omega \tilde p_{\rm a}^\eps(t)\varphi(x_1,x_2)\dx
\\
&+\frac h 2\int_0^T\int_\omega (p_{\rm a,+}^\eps(x_1,x_2,t)+p_{\rm a,-}^\eps(x_1,x_2,t))\varphi(x_1,x_2)\mathop{{\rm d} x_1}\mathop{{\rm d} x_2},
\end{aligned}
\]
we get
\[
\begin{aligned}
\int_0^T&\int_\Omega x_3\partial_3\tilde p_{\rm a}(t)\varphi(x_1,x_2)\dx=
-\int_0^T\int_\Omega \tilde p_{\rm a}(t)\varphi(x_1,x_2)\dx
\\
&+\frac h 2\int_0^T\int_\omega (p_{\rm a,+}(x_1,x_2,t)+p_{\rm a,-}(x_1,x_2,t))\varphi(x_1,x_2)\mathop{{\rm d} x_1}\mathop{{\rm d} x_2},
\end{aligned}
\]
whence  \eqref{eq:89a}, arguing as before.}
\end{proof}
\medskip
\noindent We introduce the following space:
\[
\begin{aligned}
H^1_{KL}(\Omega;\mathbb R^3):=&\{\mathbf u\in H^1_{u,D}(\Omega;\mathbb R^3):E_{3i}\mathbf u=0\textrm{ a.e. in }\Omega\}.
\end{aligned}
\]
\begin{proposition}[Compactness]\label{prop:6}
There exist \blue $\bm u\in H^1(0,T;H^1(\Omega;\mathbb R^{3}))$ and $\eta\in H^1(0,T;L^2(\Omega))$ \black such that
\begin{subequations}\label{eq:71}
\begin{align}
\bm u^\eps&\rightharpoonup \bm u&&\text{in}\quad H^1(0,T;H^1(\Omega;\mathbb R^{3})),\\
E_{\alpha\beta}^\eps\bm u^\eps=E_{\alpha\beta}\bm u^\eps&\rightharpoonup E_{\alpha\beta}\bm u &&\text{in}\quad H^1(0,T;L^2(\Omega)),\\
E_{33}^\eps\bm u^\eps=\frac{\partial_3u_{3}^\eps}{\eps^2}&\rightharpoonup \eta &&\text{in}\quad H^1(0,T;L^2(\Omega)),\\
\color{black}\widehat p^\eps&\stackrel{*}\rightharpoonup 0 &&\text{in}\quad L^\infty(0,T;H^1(\Omega)),\label{eq:71d}
\end{align}
\end{subequations}
\blue for some subsequence. \black 
Moreover, the \blue limit \black $\bm u$ satisfies
\begin{subequations}
\begin{align}
\bm u(t)\in H^1_{KL}(\Omega;\mathbb R^3)
\end{align}
\end{subequations}
for a.e. $t\in(0,T)$.
\end{proposition}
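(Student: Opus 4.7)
The plan is to obtain uniform-in-$\varepsilon$ energy and time-regularity estimates for the rescaled problem \eqref{eq:9}, and then exploit the anisotropic rescaling of $\bm E^\varepsilon$ and $\nabla^\varepsilon$ to recover the Kirchhoff--Love structure of the limit displacement and the vanishing of the limit pressure. First, mimicking the argument of Proposition~\ref{prop:2}, I would test the first equation in \eqref{eq:9} with $\bm v=\dot{\bm u}^\varepsilon(t)$ and the second with $q=\widehat p^\varepsilon(t)$, add the two so that the cross-term $\int_\Omega \widehat p^\varepsilon\operatorname{div}^\varepsilon\dot{\bm u}^\varepsilon$ cancels, and integrate over $[0,s]$. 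An integration by parts in time on the loading terms, together with the uniform bounds on $\bm f^\varepsilon$ from \eqref{eq:63} and on $\tilde p_{\rm a}^\varepsilon$ from Lemma~\ref{lem:3}, would then yield, via Young's inequality and the coercivities in \eqref{eq:10},
\begin{equation*}
\|\bm E^\varepsilon\bm u^\varepsilon\|_{L^\infty(0,T;L^2)}+\|\nabla^\varepsilon\widehat p^\varepsilon\|_{L^2(0,T;L^2)}\le C.
\end{equation*}
Adapting subsequently Proposition~\ref{prop:4} (by taking discrete time increments in the weak formulation and testing them with the difference quotients of $\bm u^\varepsilon$ and $\widehat p^\varepsilon$), one also obtains
\begin{equation*}
\|\dot{\bm u}^\varepsilon\|_{L^2(0,T;H^1)}+\|\nabla^\varepsilon\widehat p^\varepsilon\|_{L^\infty(0,T;L^2)}+\|\dot{\widehat p}^\varepsilon\|_{L^2(0,T;L^2)}\le C.
\end{equation*}

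With these bounds in hand, the pointwise inequality $|\bm E\bm u^\varepsilon|\le|\bm E^\varepsilon\bm u^\varepsilon|$, combined with Korn's and Poincar\'e's inequalities on the fixed domain $\Omega$ using the Dirichlet condition on $\Gamma_{u,D}$, gives a uniform bound on $\bm u^\varepsilon$ in $H^1(0,T;H^1(\Omega;\mathbb R^3))$. Reading off the components of $\bm E^\varepsilon\bm u^\varepsilon$ shows that $E_{\alpha 3}\bm u^\varepsilon=\varepsilon\,(\bm E^\varepsilon\bm u^\varepsilon)_{\alpha 3}$ and $E_{33}\bm u^\varepsilon=\varepsilon^{2}(\bm E^\varepsilon\bm u^\varepsilon)_{33}$ tend to zero strongly in $L^2(0,T;L^2)$, with analogous vanishing for their time derivatives. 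Extracting subsequences therefore yields the weak convergences $\bm u^\varepsilon\rightharpoonup\bm u$ in $H^1(0,T;H^1)$, $E_{\alpha\beta}\bm u^\varepsilon\rightharpoonup E_{\alpha\beta}\bm u$ and $\varepsilon^{-2}\partial_3 u_3^\varepsilon\rightharpoonup\eta$ in $H^1(0,T;L^2)$; the strong convergences then force $E_{\alpha 3}\bm u=E_{33}\bm u=0$ by uniqueness of weak limits, so $\bm u(t)\in H^1_{KL}(\Omega;\mathbb R^3)$ for a.e.\ $t\in(0,T)$.

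For the pressure, since $\widehat p^\varepsilon(t)\in H^1_{p,D}(\Omega)$ vanishes on the top and bottom faces $\Gamma_{p,D}=\omega\times\{\pm h/2\}$, a one-dimensional Poincar\'e inequality in the $x_3$ direction gives $\|\widehat p^\varepsilon(t)\|_{L^2}\le C\|\partial_3\widehat p^\varepsilon(t)\|_{L^2}=C\varepsilon\|(\nabla^\varepsilon\widehat p^\varepsilon(t))_3\|_{L^2}\le C\varepsilon$; hence $\widehat p^\varepsilon\to 0$ strongly in $L^\infty(0,T;L^2)$. Since the planar components $\partial_\alpha\widehat p^\varepsilon=(\nabla^\varepsilon\widehat p^\varepsilon)_\alpha$ are uniformly bounded in $L^\infty(0,T;L^2)$, $\widehat p^\varepsilon$ is bounded in $L^\infty(0,T;H^1)$, and any weak-$*$ limit must coincide with the $L^2$ strong limit $0$, yielding \eqref{eq:71d}. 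The main technical obstacle is the apparent singularity of the term $\int_\Omega\tilde p_{\rm a}^\varepsilon\operatorname{div}^\varepsilon\dot{\bm u}^\varepsilon$ in the energy identity: because $\operatorname{div}^\varepsilon$ magnifies the transverse component by $\varepsilon^{-2}$, its scaling is not a priori transparent. This is resolved by writing $\operatorname{div}^\varepsilon\dot{\bm u}^\varepsilon=\operatorname{tr}(\bm E^\varepsilon\dot{\bm u}^\varepsilon)$ and exploiting the $H^1(0,T;L^2)$ bound on $\tilde p_{\rm a}^\varepsilon$ from Lemma~\ref{lem:3}, so that after integration by parts in time and Young's inequality the offending term can be absorbed into the coercive elastic contribution on the left-hand side.
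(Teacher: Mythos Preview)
Your proposal is correct and follows essentially the same strategy as the paper: derive a uniform bound on $\bm E^\varepsilon\bm u^\varepsilon$ in $H^1(0,T;L^2)$ and on $\nabla^\varepsilon\widehat p^\varepsilon$ in $L^\infty(0,T;L^2)$, then read off the Kirchhoff--Love structure and the vanishing of $\widehat p$. The organisation differs slightly. The paper obtains the key estimate in a single pass: it differentiates the first equation of \eqref{eq:9} in time, tests it with $\partial_t\bm u^\varepsilon$, tests the second equation with $\partial_t\widehat p^\varepsilon$, and adds, arriving directly at
\[
\|\bm E^\varepsilon\bm u^\varepsilon\|^2_{H^1(0,T;L^2)}+\|\nabla^\varepsilon\widehat p^\varepsilon\|^2_{L^\infty(0,T;L^2)}\le C,
\]
whereas you split this into a Proposition~\ref{prop:2}-type energy estimate followed by a Proposition~\ref{prop:4}-type time-regularity step via difference quotients. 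Your two-step route is a bit longer but has the advantage of not formally testing with $\partial_t\widehat p^\varepsilon$, whose $H^1_{p,D}$-regularity is not part of the stated solution class. For $\widehat p^\varepsilon\to 0$ the paper argues indirectly (the weak limit $\widehat p$ satisfies $\partial_3\widehat p=0$ and vanishes on $\Gamma_{p,D}$, hence $\widehat p=0$), while your one-dimensional Poincar\'e argument gives the same conclusion more quantitatively. One small correction: in your second displayed bound you should record $\|\bm E^\varepsilon\dot{\bm u}^\varepsilon\|_{L^2(0,T;L^2)}\le C$ rather than merely $\|\dot{\bm u}^\varepsilon\|_{L^2(0,T;H^1)}\le C$; the former is what the adapted argument actually yields (from coercivity of $\mathbb C^\varepsilon$ on $\bm E^\varepsilon$) and is what you need for the weak convergence of $E^\varepsilon_{33}\bm u^\varepsilon=\varepsilon^{-2}\partial_3 u_3^\varepsilon$ in $H^1(0,T;L^2)$.
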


\begin{proof}
Differentiate the first equation in \eqref{eq:9} and take  $\bm v=\partial_t\bfu^\eps$ to get, for a.e. $t\in(0,T)$,
\[
\begin{aligned}
&\int_{\Omega}\big(\mathbb C^\eps\partial_t\bm E^\eps\bfu^\eps(t)\cdot\partial_t\bm E^\eps\bfu^\eps(t)- \partial_t\color{black}\widehat p^\eps(t)\black\partial_t{\mathop{\rm div}}^\eps\bfu^\eps(t)\mathop{{\rm d}x}
\\
&\hskip 6em  =\int_{\Omega}\big(\partial_t\blue\widehat{\bm f}^\eps(t)\black\cdot\partial_t\bfu^\eps(t)+\blue\partial_t\widetilde p_{\rm a,\eps}(t)\partial_t\operatorname{div}^\eps\bm u^\eps(t)\big)\mathop{{\rm d}x}.
\end{aligned}
\]
Choose $\partial_t \blue \widehat p^\eps(t)$ as test in the second equation in \eqref{eq:9} to obtain
\[
  \int_{\Omega}\big( {\rm div}^\eps\partial_t\bfu^\eps(t)\partial_t \blue \widehat p^\eps(t)+\bm K^\eps\nabla^\eps \color{black}\widehat p^\eps(t)\cdot\nabla^\eps \partial_t \blue \widehat p^\eps(t)\black\big)\mathop{{\rm d}x}=0.
\]
Adding the above equations yields
\[ 
\begin{aligned}
\int_{\Omega}\big(\mathbb C^\eps\partial_t\bm E^\eps\bfu^\eps(t)\cdot\partial_t\bm E\bfu^\eps(t)&+\mb K^\eps\nabla^\eps \color{black}\widehat p^\eps(t)\cdot\partial_t\nabla^\eps\color{black}\widehat p^\eps(t)\big)\dx
\\
&=\int_{\Omega}\big(\partial_t\bar{\bm f}^\eps(t)\cdot\partial_t\bfu^\eps(t)+\blue\partial_t\widetilde p_{\rm a,\eps}(t)\partial_t\operatorname{div}^\eps\bm u^\eps(t)\big)\mathop{{\rm d}x}.
\end{aligned}
\]
which holds for a.e. $t\in(0,T)$. From the above equation we obtain the following estimate
\begin{equation}\label{eq:72}
\|\bm E^\eps\bm u^\eps\|^2_{H^1(0,T;L^2)}+\|\nabla^\eps\color{black}\widehat p^\eps\|^2_{L^\infty(0,T;L^2)}\le C,
\end{equation}
whence (\ref{eq:71}c). Using \eqref{eq:72}, Korn's inequality, Poincar\'e inequality, and the boundary conditions we obtain that 
\begin{equation*}
\begin{aligned}
&\|\bm u^\eps\|_{H^1(0,T;H^1)}\le C,\\
&\|\color{black}\widehat p^\eps\|_{L^\infty(0,T;H^1)}\le C.
\end{aligned}
\end{equation*}
These inqualities imply (\ref{eq:71}a,b) and
\[
\color{black}\widehat p^\eps\stackrel{*}\rightharpoonup \color{black}\widehat p\quad\text{in}\quad L^\infty(0,T;H^1(\Omega)),
\]
with $\color{black}\widehat p=0$ on $\Gamma_{p,D}$. Since $\|\partial_3\color{black}\widehat p^\eps\|_{L^\infty(0,T;L^2)}\le C\varepsilon$ by \eqref{eq:72}, we conclude that $\partial_3\color{black}\widehat p=0$ and hence $\color{black}\widehat p=0$.
\end{proof}
As a preliminary step, we need the following result.
\begin{proposition}[Characterization of $\eta$]\label{prop:7}
The limit $\eta$ is given by
\begin{equation}\label{eq:49}
  \eta=\frac{\tilde p_{\rm a}-\mathbb C_{33\alpha\beta}E_{\alpha\beta}\bm u}{\mathbb C_{3333}}.
\end{equation}
\end{proposition}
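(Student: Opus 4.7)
My plan is to test the first equation of \eqref{eq:9} with a field whose rescaled strain concentrates in the transverse direction, namely $\bm v=\varepsilon^{2}\phi\bm e_{3}$ with $\phi\in C^{\infty}(\overline\Omega)$ vanishing on $\Gamma_{u,D}$. A direct calculation based on \eqref{eq:77} yields
\[
(\bm E^{\varepsilon}\bm v)_{\alpha\beta}=0,\quad (\bm E^{\varepsilon}\bm v)_{3\alpha}=\tfrac{\varepsilon}{2}\partial_{\alpha}\phi,\quad (\bm E^{\varepsilon}\bm v)_{33}=\partial_{3}\phi,\quad \operatorname{div}^{\varepsilon}\bm v=\partial_{3}\phi,
\]
so the factor $\varepsilon^{2}$ is calibrated to keep the $(3,3)$ component of order one while the mixed components are $O(\varepsilon)$; this is precisely what is needed to isolate the $\mathbb C_{3333}\eta$ contribution from an otherwise balanced elasticity--pressure pairing.

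Exploiting the monoclinic symmetries \eqref{eq:10e} (which annihilate $\mathbb C_{333\alpha}$ and $\mathbb C_{3\alpha\beta\gamma}$), the elastic bilinear form reduces to
\[
\mathbb C^{\varepsilon}\bm E^{\varepsilon}\bm u^{\varepsilon}\cdot\bm E^{\varepsilon}\bm v=\big(\mathbb C^{\varepsilon}_{3333}E^{\varepsilon}_{33}\bm u^{\varepsilon}+\mathbb C^{\varepsilon}_{33\alpha\beta}E_{\alpha\beta}\bm u^{\varepsilon}\big)\partial_{3}\phi+2\varepsilon\,\mathbb C^{\varepsilon}_{3\alpha 3\beta}E^{\varepsilon}_{3\beta}\bm u^{\varepsilon}\partial_{\alpha}\phi.
\]
The last summand will vanish in the limit since $E^{\varepsilon}_{3\beta}\bm u^{\varepsilon}$ is uniformly $L^{2}$-bounded by \eqref{eq:72}; the forcing contribution $\int\bm f^{\varepsilon}\cdot\bm v$ is $O(\varepsilon^{2})$; and $\int\widehat p^{\varepsilon}\partial_{3}\phi\to 0$ by \eqref{eq:71d}. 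After multiplying by an arbitrary $\psi\in C^{\infty}_{c}(0,T)$, integrating in time, and combining the convergences of Proposition \ref{prop:6}, Lemma \ref{lem:3}, and assumption \eqref{eq:63a} (each passage being a strong-times-weak argument), I arrive at
\[
\int_{0}^{T}\psi(t)\int_{\Omega}\big(\mathbb C_{3333}\eta+\mathbb C_{33\alpha\beta}E_{\alpha\beta}\bm u-\tilde p_{\rm a}\big)\partial_{3}\phi\mathop{{\rm d}x}\mathop{{\rm d}t}=0.
\]

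Setting $F:=\mathbb C_{3333}\eta+\mathbb C_{33\alpha\beta}E_{\alpha\beta}\bm u-\tilde p_{\rm a}$, the arbitrariness of $\psi$ gives $\int_{\Omega}F(\cdot,t)\partial_{3}\phi\mathop{{\rm d}x}=0$ for a.e.\ $t$ and every admissible $\phi$. Choosing first $\phi\in C^{\infty}_{c}(\Omega)$ yields $\partial_{3}F=0$ in $\mathcal D'(\Omega)$, so $F(\cdot,t)$ depends only on the in-plane variables; then the choice $\phi(x)=(x_{3}+h/2)\chi(x_{1},x_{2})$ with $\chi\in C^{\infty}_{c}(\omega)$ and integration by parts in $x_{3}$ produces $\int_{\omega}hF\chi\mathop{{\rm d}x_{1}}\mathop{{\rm d}x_{2}}=0$, hence $F\equiv 0$, which is precisely \eqref{eq:49}.

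The delicate point is the interaction between the $\varepsilon$-bookkeeping and the monoclinic assumption: the $O(\varepsilon)$ prefactor in $(\bm E^{\varepsilon}\bm v)_{3\alpha}$ tames the $\mathbb C^{\varepsilon}_{3\alpha 3\beta}$ contributions, but potential terms of type $\mathbb C^{\varepsilon}_{333\alpha}E^{\varepsilon}_{3\alpha}\bm u^{\varepsilon}\partial_{3}\phi$ have no such damping, and ruling them out requires precisely \eqref{eq:10e}; otherwise the weak limit of $E^{\varepsilon}_{3\alpha}\bm u^{\varepsilon}$---for which no characterization is available---would enter the limit identity as an additional unknown and prevent the clean algebraic resolution of $\eta$.
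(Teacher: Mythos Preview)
Your argument is correct and follows the same route as the paper: both test the first variational equation with $\bm v=\varepsilon^{2}\phi\,\bm e_{3}$, exploit the monoclinic symmetries \eqref{eq:10e} to isolate the $(3,3)$ contribution, and pass to the limit using the compactness of Proposition~\ref{prop:6} together with Lemma~\ref{lem:3}. The only difference is in the last step: the paper takes $\phi(x)=\int_{0}^{x_{3}}\psi(x_{1},x_{2},z)\,{\rm d}z$ with $\psi\in C^{\infty}_{0}(\Omega)$, so that $\partial_{3}\phi=\psi$ ranges directly over all of $C^{\infty}_{0}(\Omega)$ and the conclusion $F=0$ is immediate, whereas you argue in two stages (first $\partial_{3}F=0$, then $F=0$ via $\phi=(x_{3}+h/2)\chi$). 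Both are fine; note that in your second stage no integration by parts is actually needed, since $\partial_{3}\phi=\chi$ already.
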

\begin{proof}
In the first equation of \color{black}\eqref{eq:9} \color{black}we take as test function $\bm v=\eps^2 \varphi\bm e_3$ with $\varphi(x)=\int_0^{x_3}\psi(x_1,x_2,z){\rm d}z$ where $\psi\in C^\infty_0(\Omega)$. On letting $\eps$ tend to 0, we obtain, for almost all times,
\begin{equation*}
  \int_\Omega \left(\mathbb C_{3333}\eta+\mathbb C_{33\gamma\delta}E_{\gamma\delta}\bm u-\tilde p_{\rm a}\right)\psi\,{\rm d}x=0.
\end{equation*}
By the arbitrariness of $\psi$ we obtain \eqref{eq:49}.
\end{proof}
\begin{remark}
In the special case when $\mathbb C$ is isotropic, we have
\begin{equation}\label{eq:922}
 \mathbb C\bm E=2G\bm E+(K-\frac 23 G)\textrm{tr}(\bm E)\bfI,  
\end{equation}
and \eqref{eq:49} becomes
\[
E_{33}=\frac{\tilde p_{\rm a}-\left(K-\frac 23 G\right)E_{\gamma\gamma}}{\frac 4 3 G+K}
\]
\end{remark}

\begin{theorem}\label{thm:2}
Let $\bm u^\eps$ and $\color{black}\widehat p^\eps$ the solution of \eqref{eq:9}. As $\varepsilon$ tends to $0$, 
\begin{subequations}
\begin{align}
\bm u^\eps&\rightharpoonup \bm u&&\text{in}\quad H^1(0,T;H^1(\Omega;\mathbb R^{3})),\\
\color{black}\widehat p^\eps&\stackrel{*}\rightharpoonup 0 &&\text{in}\quad L^\infty(0,T;H^1(\Omega)).
\end{align}
\end{subequations}
Moreover, the weak limit $\bm u$ satisfies, for all $t\in[0,T)$, the following variational equation
\begin{equation}\label{eq:88}
\left\{
\begin{aligned}
&\bm u\in H^1(0,T;H^1_{KL}(\Omega)),\\
&\int_\Omega \left(\bar{\mathbb C}_{\alpha\beta\gamma\delta}E_{\alpha\beta}\bm u(t)-D_{\gamma\delta}\tilde p_{\rm a}(t)\right)E_{\gamma\delta}{\bm v}{\rm d}x=\int_\Omega \bm f(t)\cdot{\bm v}{\rm d}x\\
&\qquad\forall{\bm v}\in H^1_{KL}(\Omega;\mathbb R^3).
\end{aligned}
\right.
\end{equation}
where
\begin{equation}\label{eq:98}
  \overline{\mathbb C}_{\alpha\beta\gamma\delta}=\mathbb C_{\alpha\beta\gamma\delta}-\frac{\mathbb C_{\alpha\beta 33}
\mathbb C_{33\gamma\delta}}{\mathbb C_{3333}},\quad D_{\alpha\beta}=\delta_{\alpha\beta}\color{black}-\color{black}\frac{\mathbb C_{33\alpha\beta}}{\mathbb C_{3333}}.
\end{equation}
\end{theorem}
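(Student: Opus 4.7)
The plan is to pass to the limit in the elasticity equation of the weak formulation \eqref{eq:9}, exploiting the compactness already established in Proposition \ref{prop:6}, the characterization of $\eta$ in Proposition \ref{prop:7}, and the convergence of the lifted pressure in Lemma \ref{lem:3}. The convergence statements in the theorem are precisely those supplied by Proposition \ref{prop:6}, so only the derivation of the variational identity \eqref{eq:88} requires work.

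First I would restrict the test function $\bm v$ in the first equation of \eqref{eq:9} to the Kirchhoff--Love subspace $H^1_{KL}(\Omega;\mathbb R^3)$. For such $\bm v$ one has $E_{3i}\bm v=0$, which implies $\bm E^\eps\bm v=\bm E\bm v$ and $\operatorname{div}^\eps\bm v=\partial_\alpha v_\alpha$; this eliminates the singular factors $\eps^{-1}$ and $\eps^{-2}$ that live in the $\eps$--scaled gradient on the test side. Using the monoclinic symmetry \eqref{eq:10e}, the only contributions of $\bm E^\eps\bm u^\eps$ that survive the contraction with $\bm E\bm v$ are the planar ones and the $33$--component, so
\[
\mathbb C^\eps\bm E^\eps\bm u^\eps\cdot\bm E\bm v=\mathbb C^\eps_{\alpha\beta\gamma\delta}E_{\gamma\delta}\bm u^\eps\,E_{\alpha\beta}\bm v+\mathbb C^\eps_{33\alpha\beta}E^\eps_{33}\bm u^\eps\,E_{\alpha\beta}\bm v.
\]
The strong $L^\infty$ convergence $\mathbb C^\eps\to\mathbb C$ from \eqref{eq:63a}, paired with the weak convergences $E_{\gamma\delta}\bm u^\eps\rightharpoonup E_{\gamma\delta}\bm u$ and $E^\eps_{33}\bm u^\eps\rightharpoonup\eta$ from Proposition \ref{prop:6}, then delivers the limit of this quantity. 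The term $\widehat p^\eps\operatorname{div}^\eps\bm v=\widehat p^\eps\partial_\alpha v_\alpha$ vanishes in the limit by \eqref{eq:71d}, the forcing converges by \eqref{eq:63}, and $\tilde p_{\rm a}^\eps\operatorname{div}^\eps\bm v\to\tilde p_{\rm a}\,\partial_\alpha v_\alpha$ by Lemma \ref{lem:3}.

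The final step is purely algebraic. I substitute $\eta=(\tilde p_{\rm a}-\mathbb C_{33\gamma\delta}E_{\gamma\delta}\bm u)/\mathbb C_{3333}$ from Proposition \ref{prop:7} into the limiting elastic expression; the coefficient multiplying $E_{\gamma\delta}\bm u\,E_{\alpha\beta}\bm v$ collapses to $\mathbb C_{\alpha\beta\gamma\delta}-\mathbb C_{33\alpha\beta}\mathbb C_{33\gamma\delta}/\mathbb C_{3333}=\overline{\mathbb C}_{\alpha\beta\gamma\delta}$, while the remaining $\tilde p_{\rm a}$ terms merge with the boundary--produced $\tilde p_{\rm a}\partial_\alpha v_\alpha=\tilde p_{\rm a}\,\delta_{\alpha\beta}E_{\alpha\beta}\bm v$ on the right-hand side to assemble the coefficient $D_{\alpha\beta}=\delta_{\alpha\beta}-\mathbb C_{33\alpha\beta}/\mathbb C_{3333}$, reproducing exactly \eqref{eq:88}. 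I do not anticipate any genuine obstacle, since all the compactness, the pointwise identification of $\eta$, and the convergence of the lifted pressure are already available; the only care required is the choice of Kirchhoff--Love test functions, which is what defuses the singular components of $\bm E^\eps$, after which the passage to the limit is a standard weak/strong pairing and the identification of $\overline{\mathbb C}$ and $D$ is a direct rearrangement.
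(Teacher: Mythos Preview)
Your proposal is correct and follows essentially the same approach as the paper: restrict to Kirchhoff--Love test functions so that $\bm E^\eps\bm v=\bm E\bm v$ and $\operatorname{div}^\eps\bm v=\partial_\alpha v_\alpha$, use the monoclinic symmetry \eqref{eq:10e} to reduce the contraction to planar and $33$ components, pass to the limit via Proposition~\ref{prop:6}, Lemma~\ref{lem:3}, and \eqref{eq:63a}, and then substitute the expression for $\eta$ from Proposition~\ref{prop:7}. The only cosmetic difference is that the paper tests with $\varphi(t)\bar{\bm v}(x)$, integrates in time, and then localizes via the arbitrariness of $\varphi$, whereas you describe the passage to the limit more informally at fixed $t$; both lead to the same conclusion.
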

\begin{proof}
In the weak formulation \eqref{eq:9} we take $\bm v(x,t)=\varphi(t) \bar{\bm v}(x)$, with $\bar{\bm v}\in H^1_{KL}(\Omega;\mathbb R^3)$.
We have $E_{3 i}\bar{\bm v}=0$, and hence 
$$
\mathbb C^\eps\bm E^\eps\bm u^\eps\cdot\bm E^\eps\bar{\bm v}=\mathbb C_{ij\gamma\delta}^\eps E^\eps_{ij}\bm u^\eps E_{\gamma\delta}\bar{\bm v}=
\mathbb C_{\alpha\beta\gamma\delta}^\eps E^\eps_{\alpha\beta}\bm u^\eps E_{\gamma\delta}\bar{\bm v}+\mathbb C_{33\gamma\delta}^\eps E^\eps_{33}\bm u^\eps E_{\gamma\delta}\bar{\bm v},
$$
where the second equality follows from \eqref{eq:10e}. Hence, the first equation in \eqref{eq:9} takes the form
\begin{equation}\label{eq:73}
\begin{aligned}
\int_0^T\varphi(t)&\int_{\Omega}\big(\mathbb C_{\alpha\beta\gamma\delta}^\eps E^\eps_{\alpha\beta}\bm u^\eps(t) E_{\gamma\delta}\bar{\bm v}+\mathbb C_{33\gamma\delta}^\eps E^\eps_{33}\bm u^\eps(t) E_{\gamma\delta}\bar{\bm v}- \color{black}\widehat p^\eps(t)\bar v_{\alpha,\alpha})\mathop{{\rm d}x}{\rm d}t\\
&=\int_0^T\varphi(t)\int_{\Omega}\big({\bm f}^\eps(t)\cdot\bar{\bm v}+\tilde p_{\rm a}^\eps(t)\partial_\alpha\bar v_{\alpha}\big)\mathop{{\rm d}x}{\rm d}t.
\end{aligned}
\end{equation}
Thus, thanks to Lemma \ref{lem:3}, Proposition \ref{prop:6}, and to assumption \eqref{eq:63a} we can pass to the limit in \eqref{eq:73} to get
\[
\begin{aligned}
\int_0^T\varphi(t)&\int_\Omega \left(\mathbb C_{\alpha\beta\gamma\delta}E_{\alpha\beta}\bm u(t)E_{\gamma\delta}\bar{\bm v}+\mathbb C_{33\gamma\delta}\,\eta(t)E_{\gamma\delta}\bar{\bm v}\right){\rm d}x{\rm d}t\\
&=\int_0^T\varphi(t)\int_\Omega \left(\bm f(t)\cdot\bar{\bm v}+\tilde p_{\rm a}(t)\bar v_{\alpha,\alpha}\right){\rm d}x{\rm d}t.
\end{aligned}
\]
By the arbitrariness of $\varphi$, we have 
\[
\int_\Omega \left(\mathbb C_{\alpha\beta\gamma\delta}E_{\alpha\beta}\bm u(t)E_{\gamma\delta}\bar{\bm v}+\mathbb C_{33\gamma\delta}\,\eta(t)E_{\gamma\delta}\bar{\bm v}-\tilde p_{\rm a}(t)\bar v_{\alpha,\alpha}\right){\rm d}x=\int_\Omega\bm f(t)\cdot\bar{\bm v}{\rm d}x,
\]
which holds for all $t\in[0,T)$ by continuity. Now we use Proposition \ref{prop:7} to express $\eta$ in terms of $E_{\alpha\beta}\bm u$ and $\tilde p_{\rm a}$ to obtain the thesis.
\end{proof}


\paragraph{The plate equations.}
As a first step towards the deduction of plate equations is the following representation result, whose proof may be found, for instance, in \cite[Thm. 1.4-1.(c)]{Ciarl1997}).
\begin{proposition}[Characterization of $H^1_{KL}(\Omega;\mathbb R^3)$]\label{prop:5}
A displacement $\bm v:\Omega\to\mathbb R^3$ is an element of $H^1_{KL}(\Omega;\mathbb R^3)$ if and only if there exist
$z_\alpha\in H^1(\omega)$ and $z_3\in H^2(\omega)$ such that $z_i=0$ on $\gamma_{u,D}$, $\partial_nz_3=0$ on $\gamma_{u,D}$ and
\begin{equation}\label{eq:66}
v_\alpha(x)=z_\alpha(x_1,x_2)-x_3 z_{3,\alpha}(x_1,x_2),\qquad v_3(x_1,x_2)=z_3(x_1,x_2)
\end{equation}
for a.e. $x\in\Omega$.
\end{proposition}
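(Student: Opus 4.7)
The plan is to prove both implications of the characterization directly from the definition of $H^1_{KL}(\Omega;\mathbb R^3)$, unwinding the constraint $E_{3i}\bm v=0$ pointwise in the distributional sense. The sufficiency direction is routine: given $\bm v$ of the form \eqref{eq:66} with $z_\alpha\in H^1(\omega)$, $z_3\in H^2(\omega)$, and with $z_i=0$, $\partial_n z_3=0$ on $\gamma_{u,D}$, one checks by direct differentiation that $\partial_3 v_3=0$ and $\partial_3 v_\alpha+\partial_\alpha v_3=-\partial_\alpha z_3+\partial_\alpha z_3=0$, and that $\bm v$ vanishes on $\Gamma_{u,D}=\gamma_{u,D}\times(-h/2,h/2)$ because $z_\alpha=0$ forces the constant part to vanish while $z_3=0$ on $\gamma_{u,D}$ kills the tangential component of $\nabla z_3$ and the hypothesis $\partial_n z_3=0$ kills the normal component, so that $\partial_\alpha z_3=0$ on $\gamma_{u,D}$ and hence $v_\alpha=0$ on $\Gamma_{u,D}$.

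For the necessity direction, I would proceed in three steps. First, from $E_{33}\bm v=\partial_3 v_3=0$ in $\mathcal D'(\Omega)$, a standard Fubini/absolute continuity argument for $H^1$ functions on lines gives that $v_3$ is independent of $x_3$; define $z_3(x_1,x_2):=v_3(x_1,x_2,x_3)$, an element of $H^1(\omega)$. Second, from $2E_{3\alpha}\bm v=\partial_3 v_\alpha+\partial_\alpha z_3=0$, we get $\partial_3 v_\alpha=-\partial_\alpha z_3\in L^2(\Omega)$ and independent of $x_3$. Integrating in $x_3$ and defining
\[
z_\alpha(x_1,x_2):=\frac{1}{h}\int_{-h/2}^{h/2}\bigl[v_\alpha(x_1,x_2,s)+s\,\partial_\alpha z_3(x_1,x_2)\bigr]\,{\rm d}s,
\]
we see (since the integrand has vanishing $x_3$-derivative) that $v_\alpha(x)=z_\alpha(x_1,x_2)-x_3\partial_\alpha z_3(x_1,x_2)$ a.e.\ on $\Omega$, which yields \eqref{eq:66}.

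Third, I would upgrade the regularity. Differentiating the representation tangentially in the distributional sense gives $\partial_\beta v_\alpha=\partial_\beta z_\alpha-x_3\partial^2_{\alpha\beta}z_3$; since the left-hand side is in $L^2(\Omega)$ and the decomposition separates an $x_3$-independent term from one linear in $x_3$, one reads off, by testing against functions of the form $\varphi(x_1,x_2)\psi(x_3)$ with $\psi$ suitably chosen (e.g.\ $\psi\equiv 1$ and $\psi(x_3)=x_3$, which are linearly independent), that both $\partial_\beta z_\alpha\in L^2(\omega)$ and $\partial^2_{\alpha\beta}z_3\in L^2(\omega)$, giving $z_\alpha\in H^1(\omega)$ and $z_3\in H^2(\omega)$. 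Finally, the boundary condition $\bm v=\bm 0$ on $\Gamma_{u,D}$ translates, via the trace of \eqref{eq:66} on $\gamma_{u,D}\times(-h/2,h/2)$, into the identity $z_\alpha(x_1,x_2)-x_3\partial_\alpha z_3(x_1,x_2)=0$ for a.e.\ $x_3$ and a.e.\ $(x_1,x_2)\in\gamma_{u,D}$; since this affine function of $x_3$ vanishes identically, both its constant term and its slope must vanish, yielding $z_\alpha=0$ and $\partial_\alpha z_3=0$ on $\gamma_{u,D}$. Combined with $z_3=0$ on $\gamma_{u,D}$ (inherited from $v_3$), the latter encodes precisely $\partial_n z_3=0$ on $\gamma_{u,D}$.

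I expect the main technical obstacle to be the regularity upgrade in the third step, namely extracting the $H^2$ regularity of $z_3$ and the $H^1$ regularity of $z_\alpha$ from the structural decomposition $\partial_\beta v_\alpha=\partial_\beta z_\alpha-x_3\partial^2_{\alpha\beta}z_3$: this requires a rigorous disentanglement of an $x_3$-affine function whose coefficients depend only on $(x_1,x_2)$. Everything else reduces to standard slicing, Fubini, and trace arguments.
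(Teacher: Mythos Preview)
The paper does not actually prove this proposition: it states the result and refers the reader to \cite[Thm.~1.4-1.(c)]{Ciarl1997} for the proof. Your argument is correct and is precisely the standard route taken in that reference---use $E_{33}\bm v=0$ to make $v_3$ independent of $x_3$, then $E_{3\alpha}\bm v=0$ to force the affine-in-$x_3$ structure of $v_\alpha$, then extract the $H^1$/$H^2$ regularity of the coefficients by exploiting the linear independence of $1$ and $x_3$, and finally read off the boundary conditions on $\gamma_{u,D}$ from the vanishing of an affine function of $x_3$---so there is nothing substantive to compare: you have supplied the argument the paper defers.
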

According to Proposition \ref{prop:5} there exist 
\begin{equation}\label{eq:46}
w_\alpha \in H^1(0,T;H^1(\omega)),\qquad w_3\in H^1(0,T;H^2(\omega))
\end{equation} 
such that the limit displacement field obtained in Theorem \ref{thm:2} admits the representation 
\begin{equation}\label{eq:67}
  u_\alpha(x,t)=w_\alpha(x_1,x_2,t)-x_3\partial_\alpha w_3(x_1,x_2),\qquad u_3(x,t)=w_3(x_1,x_2,t)
\end{equation}
for a.e. $(x,t)\in\Omega\times(0,T)$. Thus,
\begin{equation}\label{eq:95}
  E_{\alpha\beta}\bm u=E_{\alpha\beta}\bm w-x_3\partial_{\alpha\beta}w_3.
\end{equation}
As a second step, we turn to the variational statement in \eqref{eq:88} that characterizes the limit $\bm u$, namely,
\begin{equation}\label{eq:90}
\begin{aligned}
&\int_\Omega S_{\alpha\beta}(t)E_{\gamma\delta}{\bm v}{\rm d}x=\int_\Omega \bm f(t)\cdot{\bm v}{\rm d}x\\
&\quad\forall \bm v\in H^1_{KL}(\Omega):\bm v=\bm 0\text{ on }\Gamma_{\rm u,D},
\end{aligned}
\end{equation}
with
\begin{equation}\label{eq:96}
S_{\alpha\beta}:=\overline{\mathbb C}_{\alpha\beta\gamma\delta}E_{\gamma\delta}\bm u-D_{\alpha\beta}\tilde p_{\rm a},
\end{equation}
a statement that holds at all times $t\in[0,T)$. We substitute into \eqref{eq:90} the representation \eqref{eq:66} and  \eqref{eq:67} of the unknown $\bm u$ and the test $\bm v$. The result is a pair of variational statements involving the unknowns $\bm w$ and the test functions $\bm z$. The first statement involves the in--plane components $z_\alpha$ of the test function $\bm z$, and has the form
\begin{equation}
\begin{aligned}
 &\int_\omega N_{\alpha\beta}(t)E_{\alpha\beta}\bm{z}\,{\rm d}x_1{\rm d}x_2=\int_\omega r_\alpha(t) z_\alpha{\rm d}x_1{\rm d}x_2\\
&\quad\forall \bm z=(z_1,z_2)\in H^1(\omega;\mathbb R^2): \bm z=\bm 0 \text{ on }\gamma_{u,D},
\end{aligned}
\end{equation}
with
\begin{equation}\label{eq:91}
N_{\alpha\beta}(x_1,x_2,t):=\int_{-h/2}^{+h/2}S_{\alpha\beta}(x,t){\rm d} x_3,
\end{equation}
and
\begin{equation}
r_\alpha(x_1,x_2,t):=\int_{-h/2}^{+h/2}f_\alpha(x,t){\rm d}x_3;
\end{equation}
the second statement is deduced by considering the transversal component $z_3$ of the test $\bm z$, and reads 
\begin{equation}
\begin{aligned}
  &\int_\omega M_{\alpha\beta}(t)(-\partial_{\alpha\beta}z_3){\rm d}x_1{\rm d}x_2=\int_\omega s_\alpha(t)(-\partial_\alpha z_3){\rm d}x_1{\rm d}x_2\\
&\qquad\forall z_3\in H^2(\omega): z_3=0\text{ and }\partial_nz_3=0\text{ on }\gamma_{u,D},
\end{aligned}
\end{equation}
where
\begin{equation}\label{eq:97}
M_{\alpha\beta}(x_1,x_2,t):=\int_{-h/2}^{+h/2}x_3 S_{\alpha\beta}(x,t){\rm d} x_3,
\end{equation}
and
\begin{equation}
s_\alpha(x_1,x_2,t):=\int_{-h/2}^{+h/2}x_3 f_\alpha(x,t){\rm d}x_3.
\end{equation}
On introducing the shorthand notation:
\begin{equation}
  \varphi^{(i)}(x_1,x_2,t)=\frac{(i+1)2^i}h\int_{-h/2}^{h/2}\left(\frac{x_3}{h}\right)^i\varphi(x,t){\rm d}x_3.
\end{equation}
and on combining \eqref{eq:95}  and \eqref{eq:96}, and \eqref{eq:87} with \eqref{eq:91} we can write the explicit expression of the tension forces $N_{\alpha\beta}$ as
\begin{equation}\label{eq:93}
N_{\alpha\beta}=h\overline{\mathbb C}_{\alpha\beta\gamma\delta}^{(0)}E_{\gamma\delta}\bm w+\frac {h^2}4 \overline{\mathbb C}^{(1)}_{\alpha\beta\gamma\delta}(-\partial_{\gamma\delta}w_3)-h D_{\alpha\beta}^{(0)}\frac{p^++p^-}{2}-(\zeta D_{\alpha\beta})^{(0)}(p^+-p^-);
\end{equation}
likewise, combination with \eqref{eq:97} yields
\begin{equation}\label{eq:94}
  M_{\alpha\beta}=\frac {h^3}{12}\overline{\mathbb C}_{\alpha\beta\gamma\delta}^{(2)}(-\partial_{\gamma\delta}w_3)+\frac {h^2}4\overline{\mathbb C}^{(1)}_{\alpha\beta\gamma\delta}E_{\gamma\delta}\bm w-\frac {h^2}{4}D_{\alpha\beta}^{(1)}\frac{p^++p^-}2-\frac {h}{4}(\zeta D_{\alpha\beta})^{(1)}(p^+-p^-).
\end{equation}

\paragraph{Isotropic material response independent on $x_3$.}
We now specialize our results to a situation when the tensors $\mathbb C$ and $\bm K$ are isotropic and do not depend on the coordinate $x_3$. In this case, we have $\mathbb C_{ijkl}=2G\delta_{ik}\delta_{jl}+\lambda\delta_{ij}\delta_{kl}$ and $K_{ij}=\kappa\delta_{ij}$ with $\kappa$ a positive constant. An elementary calculation yields
\begin{equation}
  \overline{\mathbb C}_{\alpha\beta\gamma\delta}=2G\delta_{\alpha\gamma}\delta_{\beta\delta}+\frac {2G} {2G+\lambda}\lambda\delta_{\alpha\beta}\delta_{\gamma\delta},\qquad D_{\alpha\beta}=\delta_{\alpha\beta}.
\end{equation}
In this case, we have $\overline{\mathbb C}^{(i)}_{\alpha\beta\gamma\delta}=\overline{\mathbb C}_{\alpha\beta\gamma\delta}$ if $i$ is even and $\overline{\mathbb C}^{(i)}_{\alpha\beta\gamma\delta}=0$ if $i$ is odd. Likewise, $D_{\alpha\beta}^{(i)}=\delta_{\alpha\beta}^{(i)}=\delta_{\alpha\beta}$ if $i$ is even and  $D_{\alpha\beta}^{(i)}=0$ if $i$ is odd. Moreover, as observed in Remark \ref{rem:4}, the independence of $\bm K$ on $x_3$ entails that $\zeta(x)=x_3$ in the representation formula \eqref{eq:87}; thus, $(\zeta D_{\alpha\beta})^{(i)}=(x_3 \delta_{\alpha\beta})^{(i)}=h(i+1)/(2(i+2))\delta_{\alpha\beta}$ if $i$ is odd and  $(\zeta D_{\alpha\beta})^{(i)}=0$ if $i$ is even. As a result, \eqref{eq:93} and \eqref{eq:94} become, respectively,
\begin{equation}
  N_{\alpha\beta}=2Gh E_{\alpha\beta}\bm w+\frac {2G\lambda h} {2G+\lambda} (\partial_\gamma w_\gamma) \delta_{\alpha\beta}-h\frac{p^++p^-}{2}\delta_{\alpha\beta},
\end{equation}
and
\begin{equation}
  M_{\alpha\beta}=\frac {h^3}{12}2G(-\partial_{\alpha\beta} w_3)-\frac {h^3}{12}\frac {2G\lambda} {2G+\lambda} \Delta w_3 \delta_{\alpha\beta}-\frac {h^2}{12}(p^+-p^-)\delta_{\alpha\beta}.
\end{equation}

\color{black}

\section{A diffusive plate model}\label{sec:diff-plate-model}

In this section we consider a plate with mobility through the thickness of the plate much smaller than that in the plane, that is 
$K_{33}^\eps\ll K_{\alpha\beta}^\eps$.  A similar assumption was also made in \cite{LucanN2012IJSS}. In particular, we assume that $\bm K^\eps$ has the following scaling
$$
\bm K^\eps:=
\left(
\begin{array}{cc}
K_{\alpha\beta} & 0\\
0 & \eps^2 K_{33}
\end{array}
\right)
$$
where for simplicity we have taken $K_{\alpha 3}^\eps=0$ and where $K_{\alpha\beta}$ and $K_{33}$ are functions whose regularity is specified below.
Under this position, for any functions $f$ and $g$ we have that
$
\bm K^\eps \nabla^\eps f\cdot \nabla^\eps g= \bm K \nabla f\cdot \nabla g
$
where
$$
\bm K:=
\left(
\begin{array}{cc}
K_{\alpha\beta} & 0\\
0 &  K_{33}
\end{array}
\right).
$$
Within this setting, Problem \eqref{eq:58} rewrites as
\begin{equation}\label{eq:758}
\forall t\in[0,T]:\begin{cases}
&\tilde p_{\rm a}^\eps\in H^1(\Omega), \qquad \tilde p^\eps_{\rm a}(t)=p_{\rm a}^\eps(t)\text{ on }\Gamma_{p,D},\\[0.5em]
&\displaystyle\int_\Omega \bm K\nabla \tilde p_{\rm a}^\eps(t)\cdot\nabla q\mathop{{\rm d}x}=0
\quad\forall q\in H^1_{p,D}(\Omega),
\end{cases}
\end{equation}
and  Problem \eqref{eq:9} as
\begin{equation}\label{eq:79}
\left\{ 
\begin{aligned}
  &\bfu^\varepsilon\in H^1(0,T;H^1_{u,D}(\Omega;\mathbb R^3)),\\
&\bar p^\varepsilon\in L^\infty(0,T;H^1_{p,D}(\Omega))\cap H^1(0,T;L^2(\Omega))\quad\text{ with } \quad \bar p^\eps(0)=\color{black}\bar p^\eps_{0},\\
&\int_{\Omega}\big(\mathbb C^\eps\bm E^\eps\bfu^\eps(t)\cdot\bm E^\eps\bm v- \bar p^\eps(t){\mathop{\rm div}}^\eps\bm v)\mathop{{\rm d}x}=\int_{\Omega}\big({\bm f}^\eps(t)\cdot\bm v+\tilde p_{\rm a}^\eps{\mathop{{\rm div}^\eps}}\bm v\big)\mathop{{\rm d}x}\\
&\qquad\forall\bm v\in H^1_{u,D}(\Omega;\mathbb R^3),\color{black}\forall t\in[0,T],\\
&\int_{\Omega} {\rm div}^\eps\dot\bfu^\eps(t)q+\bm K\nabla \bar p^\eps(t)\cdot\nabla q\mathop{{\rm d}x}=0\qquad\forall q\in H^1_{p,D}(\Omega),\text{ for a.e. }t\in(0,T).\\
\end{aligned}\right.
\end{equation} 
We keep all the assumption stated in the previous section except \eqref{eq:10h}, \eqref{eq:10i},
and  \eqref{eq:85}, that we replace by
\begin{subequations}\label{eq:10}
\begin{align}
&\mathbf K\in L^\infty(\Omega;\mathbb R^{3\times 3}_{\rm Sym}),\\
&\text{$\exists c_{\bm K}>0:\mathbf K\bm a\cdot\bm a\ge c_{\bm K}|\bm a|^2\quad\forall \bm a\in\mathbb R^3$}\quad \text{a.e. in }\Omega.
\end{align}
\end{subequations}

The asymptotic behaviour of problems \eqref{eq:758} and \eqref{eq:79} can be easily studied by following the same steps taken in the previous section. In particular, Lemma  \ref{lem:3} will be replaced by the following lemma.

\begin{lemma}\label{lem:73}
Let $\tilde p_{\rm a}^\eps(t)\in H^1(\Omega)$ solve \eqref{eq:758} for all $t\in[0,T]$. Then
\[
\tilde p_{\rm a}^\eps\rightharpoonup \tilde p_{\rm a}\quad\textrm{in }H^1(0,T;H^1(\Omega)),
\]
where $\tilde p_{\rm a}$ is the solution of
\begin{equation}\label{eq:7758}
\forall t\in[0,T]:\begin{cases}
&\tilde p_{\rm a}\in H^1(\Omega), \qquad \tilde p_{\rm a}(t)=p_{\rm a}(t)\text{ on }\Gamma_{p,D},\\[0.5em]
&\displaystyle\int_\Omega \bm K\nabla \tilde p_{\rm a}(t)\cdot\nabla q\mathop{{\rm d}x}=0
\quad\forall q\in H^1_{p,D}(\Omega).
\end{cases}
\end{equation}
\end{lemma}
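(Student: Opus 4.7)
The crucial structural observation is that, in contrast with Lemma \ref{lem:3}, the singular rescaling encoded in $\nabla^\eps$ is exactly compensated by the $\eps^2$ factor sitting in the $(3,3)$ entry of $\bm K^\eps$; this is why Problem \eqref{eq:758} is formulated with the ordinary gradient $\nabla$ and the $\eps$--independent tensor $\bm K$. Consequently, the only place where $\eps$ enters Problem \eqref{eq:758} is the boundary datum $p_{\rm a}^\eps$, so the task reduces to a continuous-dependence statement for a non--degenerate, time--parametrized elliptic problem, rather than a genuine singular--perturbation analysis.

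The plan is as follows. First, I would produce a bounded lifting $\check p_{\rm a}^\eps\in H^1(0,T;H^1(\Omega))$ of $p_{\rm a}^\eps$ with the standard trace estimate
\[
\|\check p_{\rm a}^\eps\|_{H^1(0,T;H^1)}\le C\|p_{\rm a}^\eps\|_{H^1(0,T;H^{1/2}(\Gamma_{p,D}))},
\]
the right-hand side being uniformly bounded in $\eps$ thanks to assumption \eqref{eq:63d}. Testing \eqref{eq:758} with $q=\tilde p_{\rm a}^\eps(t)-\check p_{\rm a}^\eps(t)\in H^1_{p,D}(\Omega)$, using the coercivity of $\bm K$, a Cauchy--Schwarz/Young step and Poincar\'e's inequality, I would obtain
\[
\|\tilde p_{\rm a}^\eps\|_{L^2(0,T;H^1)}\le C.
\]
Next, since \eqref{eq:758} holds at every $t$ and the operator is time--independent, differentiating in time and using $q=\partial_t(\tilde p_{\rm a}^\eps-\check p_{\rm a}^\eps)$ gives an identical bound on $\partial_t\tilde p_{\rm a}^\eps$, so that
\[
\|\tilde p_{\rm a}^\eps\|_{H^1(0,T;H^1(\Omega))}\le C.
\]

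From this uniform bound I would extract a subsequence (not relabeled) and a limit $\tilde p_{\rm a}\in H^1(0,T;H^1(\Omega))$ with $\tilde p_{\rm a}^\eps\rightharpoonup\tilde p_{\rm a}$ in $H^1(0,T;H^1(\Omega))$. The limit's trace on $\Gamma_{p,D}$ equals $p_{\rm a}$: indeed, the trace operator $H^1(0,T;H^1(\Omega))\to H^1(0,T;H^{1/2}(\Gamma_{p,D}))$ is linear and continuous, hence weakly continuous, and $p_{\rm a}^\eps\to p_{\rm a}$ strongly in that target by \eqref{eq:63d}. Passing to the limit in the weak equation is immediate, as the bilinear form $\int_\Omega \bm K\nabla\cdot\nabla q\,\mathop{{\rm d}x}$ is independent of $\eps$ and continuous for the weak topology of $H^1$, so for any fixed $q\in H^1_{p,D}(\Omega)$ and any $\varphi\in\mathcal D(0,T)$ I obtain
\[
\int_0^T\varphi(t)\int_\Omega \bm K\nabla\tilde p_{\rm a}(t)\cdot\nabla q\,\mathop{{\rm d}x}\,\mathop{{\rm d}t}=0,
\]
whence \eqref{eq:7758} holds for a.e.\ $t$ and, by continuity of $\tilde p_{\rm a}$ as a function of time, for every $t\in[0,T]$.

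Finally, the limit problem \eqref{eq:7758} has a unique solution at each $t$ by the Lax--Milgram lemma (coercivity of $\bm K$ and positive measure of $\Gamma_{p,D}$), so the limit $\tilde p_{\rm a}$ is independent of the extracted subsequence, giving convergence of the whole family. I do not anticipate a genuine obstacle: the argument is essentially the robust continuity-of-solutions scheme for parametrized elliptic problems, with the lifting-and-test-function trick used in Lemma \ref{lem:3} but now \emph{without} the delicate control of transverse versus in--plane gradient components, precisely because the problematic anisotropy of the rescaled operator has been removed by design through the scaling of $\bm K^\eps$.
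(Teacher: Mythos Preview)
Your proposal is correct and follows precisely the approach the paper intends: the paper simply states that ``the proof of the Lemma \ref{lem:73} coincides with that of Lemma \ref{lem:3},'' and you have reproduced that scheme (lifting, testing with $\tilde p_{\rm a}^\eps-\check p_{\rm a}^\eps$, time-differentiation, compactness, limit identification, uniqueness) while correctly noting that the disappearance of $\nabla^\eps$ in favor of $\nabla$ yields full $H^1(0,T;H^1(\Omega))$ bounds rather than the anisotropic ones of Lemma \ref{lem:3}.
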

The proof of the Lemma \ref{lem:73} coincides with that of Lemma \ref{lem:3}.
Proposition \ref{prop:6} still holds but with \eqref{eq:71d} replaced by
$$
\bar p^\eps\stackrel{*}\rightharpoonup \bar p\quad\textrm{in }L^\infty(0,T;H^1(\Omega))
$$
where $\bar p \in L^\infty(0,T;H^1_{p,D}(\Omega))$. Within this setting $\bar p\ne 0$, contrary to the previous section, because only $\nabla\bar p^\eps$ ,and not $\nabla^\eps\bar p^\eps$,  it is bounded in ${L^\infty(0,T;L^2)}$. If in the proof of  Proposition \ref{prop:7} we take into account that $\bar p\ne 0$
we find 
\begin{equation}\label{eq:749}
  \eta=\frac{\tilde p_{\rm a}+\bar p-\mathbb C_{33\alpha\beta}E_{\alpha\beta}\bm u}{\mathbb C_{3333}}.
\end{equation}
Note that since $\eta, E_{\alpha\beta}\bm u\in H^1(0,T;L^2(\Omega))$ it follows that also $\bar p\in H^1(0,T;L^2(\Omega))$.

\begin{theorem}\label{thm:72}
Let $\bm u^\eps$ and $\bar p^\eps$ the solution of \eqref{eq:79}. As $\varepsilon$ tends to $0$, 
\begin{subequations}
\begin{align}
\bm u^\eps&\rightharpoonup \bm u&&\text{in}\quad H^1(0,T;H^1(\Omega;\mathbb R^{3})),\\
\bar p^\eps&\stackrel{*}\rightharpoonup \bar p &&\text{in}\quad L^\infty(0,T;H^1(\Omega)).
\end{align}
\end{subequations}
Moreover, the weak limits $\bm u$ and $\bar p$ satisfy, for all $t\in[0,T]$, the following variational equation
\begin{equation}\label{eq:788}
\left\{
\begin{aligned}
&\bm u\in H^1(0,T;H^1_{KL}(\Omega)),\\
&\bar p\in L^\infty(0,T;H^1_{p,D}(\Omega))\cap H^1(0,T;L^2(\Omega))\quad\text{ with } \quad \bar p(0)=\color{black}\bar p_{0},\\
&\int_\Omega \left(\bar{\mathbb C}_{\alpha\beta\gamma\delta}E_{\alpha\beta}\bm u(t)-D_{\gamma\delta}(\tilde p_{\rm a}+\bar p)(t)\right)E_{\gamma\delta}{\bm v}{\rm d}x=\int_\Omega \bm f(t)\cdot{\bm v}{\rm d}x\\
&\qquad\forall{\bm v}\in H^1_{KL}(\Omega;\mathbb R^3).\\
&\int_{\Omega} \left(D_{\alpha\beta} E_{\alpha \beta}  \dot\bfu(t) - \frac{\partial_t \tilde p_{\rm a}(t)+\partial_t \bar p(t)}{\mathbb C_{3333}}\right)q+\bm K\nabla \bar p(t)\cdot\nabla q\mathop{{\rm d}x}=0\\
&\qquad\forall q\in H^1_{p,D}(\Omega),\text{ for a.e. }t\in(0,T),
\end{aligned}
\right.
\end{equation}
where $\overline{\mathbb C}$ and $D$ are defined in \eqref{eq:98}.
\end{theorem}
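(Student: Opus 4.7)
The plan is to adapt, almost line by line, the argument used in Theorem \ref{thm:2}, keeping track of the single but crucial structural difference produced by the new scaling $K_{33}^\eps=\eps^2 K_{33}$. The defining identity $\bm K^\eps\nabla^\eps f\cdot\nabla^\eps g=\bm K\nabla f\cdot\nabla g$ means that the energy estimate obtained by combining $\bm v=\partial_t\bm u^\eps$ in the mechanical equation with $q=\partial_t\bar p^\eps$ in the mass balance (the same estimate used in Proposition \ref{prop:6}) now yields a uniform bound on the \emph{unrescaled} gradient $\nabla \bar p^\eps$ in $L^\infty(0,T;L^2)$. Hence $\bar p^\eps$ admits a non-trivial weak-$*$ limit $\bar p\in L^\infty(0,T;H^1_{p,D}(\Omega))$, in contrast to \eqref{eq:71d}. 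The compactness properties of $\bm u^\eps$, the existence of $\eta\in H^1(0,T;L^2(\Omega))$ with $E_{33}^\eps\bm u^\eps\rightharpoonup\eta$, and the Kirchhoff--Love constraint $\bm u(t)\in H^1_{KL}(\Omega;\mathbb R^3)$ are obtained unchanged.

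\textbf{Characterisation of $\eta$ and the mechanical equation.} I would repeat the argument of Proposition \ref{prop:7}: test the first equation of \eqref{eq:79} with $\bm v=\eps^2\varphi\bm e_3$ for $\varphi(x)=\int_0^{x_3}\psi(x_1,x_2,z){\rm d}z$ and $\psi\in C_0^\infty(\Omega)$, and pass to the limit. Since now $\bar p^\eps\rightharpoonup\bar p\ne 0$, the term $\bar p^\eps\operatorname{div}^\eps\bm v$ contributes an extra $\bar p$ to the limit, producing the modified identity \eqref{eq:749}. Next, take $\bm v(x,t)=\varphi(t)\bar{\bm v}(x)$ with $\bar{\bm v}\in H^1_{KL}(\Omega;\mathbb R^3)$ (so that $E_{3i}\bar{\bm v}=0$), pass to the limit in the first equation of \eqref{eq:79} invoking Lemma \ref{lem:73} for $\tilde p_{\rm a}^\eps$, the adapted Proposition \ref{prop:6}, and assumption \eqref{eq:63a}, and substitute \eqref{eq:749} for $\eta$. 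The outcome is the first equation of \eqref{eq:788}; the coupling $D_{\gamma\delta}(\tilde p_{\rm a}+\bar p)$ is the direct descendant of the $(\tilde p_{\rm a}+\bar p)$ on the right-hand side of \eqref{eq:749}.

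\textbf{Diffusion equation and main obstacle.} The genuinely new step is the derivation of the second equation of \eqref{eq:788}. Starting from the mass-balance in \eqref{eq:79}, multiplied by $\varphi\in\mathcal D(0,T)$, the task is to pass to the limit in
\[
\int_0^T\!\!\int_\Omega \bigl(\operatorname{div}^\eps\dot{\bm u}^\eps\,q+\bm K\nabla\bar p^\eps\cdot\nabla q\bigr){\rm d}x\,\varphi(t){\rm d}t=0.
\]
The diffusion term passes to the limit immediately thanks to the strong convergence of $\bm K^\eps$ and the weak convergence of $\nabla\bar p^\eps$. For the first term one splits $\operatorname{div}^\eps\dot{\bm u}^\eps=\partial_\alpha\dot u_\alpha^\eps+E_{33}^\eps\dot{\bm u}^\eps$, identifies the limit of each summand using the weak convergences provided by the adapted Proposition \ref{prop:6}, and obtains $\partial_\alpha\dot u_\alpha+\dot\eta$. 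Differentiating \eqref{eq:749} in time and rearranging then yields the sought equation; localising in $\varphi$ and $q$ gives the pointwise statement. The initial condition $\bar p(0)=\bar p_0$ is preserved through the embedding $H^1(0,T;L^2)\hookrightarrow C([0,T];L^2)$ together with $\bar p_0^\eps\to\bar p_0$. The principal obstacle is the singular factor $\eps^{-2}$ in $E_{33}^\eps\bm u^\eps$: the adapted energy estimate must be strong enough to give a uniform $H^1(0,T;L^2)$ bound on $E_{33}^\eps\bm u^\eps$ so that the weak limit of its time derivative can be identified with $\dot\eta$. Once all convergences are secured, uniqueness for \eqref{eq:788}, proved by testing against the difference of two solutions and applying Gronwall as in the uniqueness part of Theorem \ref{thm:main}, upgrades subsequential convergence to convergence of the whole sequence.
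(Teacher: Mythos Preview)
Your proposal is correct and follows essentially the same route as the paper: adapt the compactness argument of Proposition~\ref{prop:6} (now yielding a bound on $\nabla\bar p^\eps$ rather than $\nabla^\eps\bar p^\eps$, so that $\bar p\neq 0$), recover the modified identity \eqref{eq:749} for $\eta$, and pass to the limit in the mass balance by writing $\operatorname{div}^\eps\dot{\bm u}^\eps=E_{\alpha\alpha}\dot{\bm u}^\eps+E_{33}^\eps\dot{\bm u}^\eps\rightharpoonup E_{\alpha\alpha}\dot{\bm u}+\dot\eta$ and substituting \eqref{eq:749}. Your explicit remark on uniqueness of the limit problem to upgrade subsequential convergence is a welcome addition that the paper leaves implicit.
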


Again, the proof of Theorem \ref{thm:72} is very similar to that of Theorem \ref{thm:2}; essentially it suffices to recall that  $\bar p\ne 0$ and that by Proposition \ref{prop:6} and \eqref{eq:749} we have that 
$$
{\rm div}^\eps\dot\bfu^\eps = {\rm tr} E^\eps \dot\bfu^\eps \rightharpoonup E_{\alpha \alpha}  \dot\bfu+\dot \eta= D_{\alpha\beta} E_{\alpha \beta}  \dot\bfu - \frac{\partial_t \tilde p_{\rm a}+\partial_t \bar p}{\mathbb C_{3333}},
$$
where we used \eqref{eq:749}.

We decided to write Theorem \ref{thm:72} with the same notation used in the previous section even if we could have written it, in a more compact form, in terms of 
$$
p(t):=\tilde p_{\rm a}(t)+\bar p(t).
$$

We finally note that problem \eqref{eq:788} is not a two dimensional problem. Indeed, while the  balance equation can be written over the two dimensional domain $\omega$, since the test function ${\bm v}$ is a Kirchhoff-Love type of displacement, the last equation appearing in \eqref{eq:788} cannot since the diffusion is throughout $\Omega$.

\bibliographystyle{abbrv}

\end{sloppypar}
\end{document}